\newtheorem{mainthm}{Main Result (informal)}
\newtheorem{definition}{Definition}
\newtheorem{observation}{Observation}
\newtheorem{corollary}{Corollary}
\title{The Power of Amortized Recourse for Online Graph Problems} 
\author[1]{Hsiang-Hsuan Liu}
\author[1]{Jonathan Toole-Charignon}
\affil[1]{Department of Information and computing sciences, Utrecht University, The Netherlands}
\date{}
\begin{document}

\maketitle

\newcommand{\NOTE}[1]{{\color{Red}{#1}}}
\newcommand{\hhl}[1]{{\color{OliveGreen}{#1}}}
\newcommand{\jtc}[1]{{\color{Blue}{#1}}}
\newcommand{\runtitle}[1]{\textbf{#1}}
\newcommand{\pf}[1]{}
\newcommand{\hhlcmd}[1]{\todo[tickmarkheight=0.1cm,size=tiny,color=green!30]{{\bf HHL: }#1}}
\newcommandx{\hide}[1]{}

\newcommand{\tas}{\texttt{TaS}}
\newcommand{\ralg}{\texttt{REF}}
\newcommand{\LGreedy}{\texttt{$L$-Greedy}}
\newcommand{\PEA}{\texttt{DH}}
\newcommand{\ar}{\text{AR}}
\newcommand{\tr}{\text{TR}}
\newcommand{\ins}{\mathcal{X}}
\newcommand{\alg}{\texttt{ALG}}
\newcommand{\opt}{\texttt{OPT}}
\newcommand{\VC}{\textsc{Vertex\ Cover}}
\newcommand{\MVC}{\textsc{Minimum\ Vertex\ Cover}}
\newcommand{\IS}{\textsc{Independent\ Set}}
\newcommand{\MCM}{\textsc{Maximum\ Cardinality\ Matching}}
\newcommand{\me}{\text{ME}}
\newcommand{\lo}{\text{LO}}
\newcommand{\vm}{V_M}
\newcommand{\vmi}{V_{M(\ins)}}
\newcommand{\dphi}{\Delta\Phi}
\newcommand{\pea}{\PEA}
\newcommand{\typeone}{\textbf{Type-1}}
\newcommand{\typetwo}{\textbf{Type-2}}

\begin{abstract}
In this work, we study online graph problems with monotone-sum objectives. 
We propose a general two-fold greedy algorithm that references yardstick algorithms to achieve $t$-competitiveness while incurring at most $\frac{w_{\text{max}}\cdot(t+1)}{\min\{1, w_\text{min}\}\cdot(t-1)}$ amortized recourse, where $w_{\text{max}}$ and $w_{\text{min}}$ are the largest value and the smallest positive value that can be assigned to an element in the sum.
We further show that the general algorithm can be improved for three classical graph problems by carefully choosing the referenced algorithm and tuning its detailed behavior.
For $\IS$, we refine the analysis of our general algorithm and show that $t$-competitiveness can be achieved with $\frac{t}{t-1}$ amortized recourse. 
For $\MCM$, we limit our algorithm's greed to show that $t$-competitiveness can be achieved with $\frac{(2-t^*)}{(t^*-1)(3-t^*)}+\frac{t^*-1}{3-t^*}$ amortized recourse, where~$t^*$ is the largest number such that $t^*= 1 +\frac{1}{j} \leq t$ for some integer $j$.
For $\VC$, we show that our algorithm guarantees a competitive ratio strictly smaller than~$2$ for any finite instance in polynomial time while incurring at most $3.33$ amortized recourse. We beat the almost unbreakable $2$-approximation in polynomial time by using the optimal solution as the reference without computing it.
We remark that this online result can be used as an offline approximation result (without violating the unique games conjecture~\cite{DBLP:journals/jcss/KhotR08}) to partially improve upon the constructive algorithm of Monien and Speckenmeyer~\cite{DBLP:journals/acta/MonienS85}.
\end{abstract}

\section{Introduction}\label{sec:Intro}
\hide{
In this paper, we study graph optimization problems with monotone-sum objectives in the recourse model (defined later). 
We investigate the amount of amortized recourse that is sufficient for attaining a desirable competitive ratio for an arbitrary monotone-sum problem.
We further study three of monotone-sum problems: \textsc{IndependentSet}, \textsc{MaximumMatching}, and \textsc{VertexCover}.
We provide a more sophisticated analysis on the trade-off between the competitive ratio and the recourse for these three problems.
}

Graph optimization problems serve as stems for various practical problems. 
A solution for such a problem can be described as an assignment from the elements of the problem (e.g. vertices of a graph) to non-negative real numbers such that the constraints between the elements are satisfied.
In the online setting, the most considered models are the \emph{vertex-arrival} and \emph{edge-arrival} models. That is, the graph is revealed vertex-by-vertex or edge-by-edge, and once an element arrives, the \emph{online algorithm} has to immediately make an irrevocable decision on the new element. The performance of an online algorithm is measured by \emph{competitive ratio} against the optimal offline solution. 
Many graph optimization problems are non-competitive: the larger the input size, the larger the competitive ratio of any deterministic online algorithm. In other words, a non-competitive problem has no constant-competitive online algorithm.

The pure online model is pessimistic, in that altering decisions may be possible (albeit expensive) or limited knowledge about the future may be available in the real world. 
In this work, we investigate online graph optimization problems in the \emph{recourse} model. That is, decisions made by the online algorithm can be revoked. 
In particular, we aim at finding out the amount of amortized recourse that is sufficient and/or necessary for attaining a desirable competitive ratio for a given problem.

\smallskip

\runtitle{Uncertainty and amortized recourse.}
The competitive ratio can be seen as quantification of how far the quality of an online algorithm's solution is from that of a conceptual optimal offline algorithm that has complete knowledge of the input and unlimited computational power.
Therefore, the non-competitiveness of graph optimization problems suggests that uncertainty of the input is critical to these problems. 
However, the online algorithm may perform better when the irrevocability constraint is relaxed or knowledge about future inputs is available.
It is intriguing to investigate to what extent these problems remain non-competitive under these conditions, in particular to determine how much revocability or knowledge the online algorithm needs in order to attain a desirable competitive ratio.

Beyond the practical motivation of relaxing irrevocability of online algorithms' decisions, amortized recourse also provides insight on how a given online problem is affected by uncertainty. 
In particular, it captures how rapidly the structure of the offline optimal solution can change: the fewer elements required to do so, the larger the amortized recourse. 
Furthermore, the impact of uncertainty is directly correlated with this idea: the faster the optimal solution can change, the more impact uncertainty on future inputs will have.
Different problems may attain constant competitive ratios using different amounts of (amortized) recourse, which implies variability in the impact of uncertainty.
For example, to attain a constant competitive ratio, one needs exactly $O(\log n)$ recourse per edge for min-cost bipartite matching~\cite{DBLP:conf/approx/MegowN20}, while one only needs a constant amount of recourse per element for maximum independent set and minimum vertex cover~\cite{DBLP:journals/algorithmica/BoyarFKL22}.

\smallskip

\runtitle{Online monotone-sum problems.} 
We study \emph{online} graph problems in the vertex-arrival or the edge-arrival models. 
Along with the newly-revealed element, which can be a vertex or an edge according to the arrival model, there may be constraints imposed upon some subset of the currently-revealed elements that a feasible solution should satisfy.
An algorithm aims at finding a feasible solution that maximizes (or minimizes) the objective.
A problem is a \emph{sum} problem if the objective is a sum of the values assigned to each element. 
If the value of the optimal solution of an instance is always greater than or equal to that of a subset of the instance, then the problem is a \emph{monotone} problem.\footnote{The \textsc{Dominating Set} and \textsc{Matching with delays} problems are sum problems but not monotone. The \textsc{Coloring Problem} is monotone but not a sum problem.}

An online algorithm makes decisions upon arrival of each element.
In the \emph{recourse} model, the online algorithm can also revoke an earlier decision that it made and pay for the revocation. We aim to reduce the competitive ratio with as little total recourse (i.e. as few revocations) as possible.

\smallskip

\runtitle{Our contribution.}
We propose a general online algorithm $\tas_t$ that provides a trade-off between amortized recourse and competitive ratio for arbitrary monotone-sum graph problems. In particular, we consider two measurements of recourse cost: number of reassigned elements, or the amount of change in the reassigned values. Our result works for both unweighted and weighted problems, and it even works for fractional optimization problems, where the smallest non-zero value assigned to a single element can be a real number between $0$ and $1$. The following is the main result of our work, where the bound of amortized recourse works for both measurements of recourse cost (Theorem~\ref{thm:general_UB} and Corollary~\ref{cor:fractionalUB}).

\begin{mainthm}
Using an optimal algorithm (resp. an incremental $\alpha$-approximation algorithm, defined formally in Section~\ref{sec:general}) as the yardstick, $\tas_t$ is $t$-competitive (resp. $(t\cdot \alpha)$-competitive) and incurs at most $\frac{w_{\text{max}}\cdot(t+1)}{\min\{1, w_\text{min}\}\cdot(t-1)}$ amortized recourse for any monotone-sum graph problem where $w_\text{max}$ and $w_\text{min}$ are the maximum and minimum non-zero values that can be assigned to an element.\footnote{The bound of amortized recourse $\frac{w_{\text{max}}\cdot(t+1)}{w_\text{min}\cdot(t-1)}$ is larger when the elements can be assigned minimum non-zero values smaller than $1$. For example, the fractional $\VC$ problem in~\cite{DBLP:conf/icalp/WangW15}.}
\end{mainthm}

\hide{
Note that the $\tas_t$ algorithm can use any algorithm as a yardstick, as long as it is \emph{incremental}. That is, the algorithm's solution value for an instance is always greater than or equal to the value for its subinstances (e.g., the approximation algorithm proposed in~\cite{DBLP:journals/bit/BoppanaH92}). 
}


$\tas_t$ is two-fold greedy. 
First, it 
assigns the value \emph{greedily} once an element arrives. 
Second, the algorithm aligns its solution to the yardstick solution \emph{completely} and incurs recourse when the current solution fails to be $t$-competitive against the yardstick solution.

In general, the $\tas_t$ algorithm works for any optimization problem.
The challenge is to bound the amortized recourse that it incurs, as the complete alignment may require a vast amount of recourse.
By looking closer at a specific problem, we can show a tighter bound on the amount of recourse needed. We use a sophisticated analysis for the $\IS$ problem and improve the recourse bound (Theorem~\ref{thm:ISUB}).

The two-fold greedy algorithm may perform better when the greediness is relaxed.
Moreover, by choosing different yardstick algorithms and tuning the alignment to the yardstick carefully, the amortized recourse can be further reduced.
We show that for the $\MCM$ problem, partially aligning to the yardstick solution is more recourse-efficient (Theorem~\ref{thm:matchingUB}).

For the $\VC$ problem, we show that a special version of $\tas_t$ with $t = 2-\frac{2}{\opt}$ incurs a very small amount of amortized recourse (Theorem~\ref{thm:VCUB}) and is $(2 - \frac{2}{\opt})$-competitive, where $\opt$ is the size of the optimal vertex cover\footnote{Note that over all instances, $\opt$ can be arbitrarily large. Thus, there is no $\varepsilon > 0$ for which $2 - \frac{2}{\opt} \le 2 - \varepsilon$ over all instances. Therefore, our result does not violate the unique games conjecture~\cite{DBLP:journals/jcss/KhotR08}.} (Theorem~\ref{thm:VCCompRatio_new}).
Our algorithm uses an optimal solution as a yardstick. The key to the polynomial time complexity is that instead of explicitly finding the yardstick assignment, we show that the yardstick cannot be too ``far'' from our solution at any moment if the target competitive ratio $2-\frac{2}{\opt}$ is not already achieved. More specifically, by restricting the range of greedy choice, we can show that the yardstick solution can be aligned partially within a constant amount of amortized recourse.
Thus, our result breaks the almost unbreakable $2$-approximation for the $\VC$ problem and improves upon that of Monien and Speckenmeyer~\cite{DBLP:journals/acta/MonienS85} for a subset of the graphs containing odd cycles of length no less than $2k + 3$ (for which $2 - \frac{2}{\opt} < 2 - \frac{1}{k + 1}$), using an algorithm that is also constructive. 

Our results are summarized in Table~\ref{tb:results}, which illustrates the power of amortization.
\begin{table}[h!]
\begin{tabular}{|l|l|l|} 
\hline
& \begin{tabular}[c]{@{}l@{}}(Competitive ratio, \\worst case recourse)\end{tabular} & \begin{tabular}[c]{@{}l@{}}(Competitive ratio, amortized recourse)\end{tabular} \\ \hline
\hline
monotone-sum problems  && 
\begin{tabular}[c]{@{}l@{}}
$(t\alpha, \frac{w_{\text{max}}\cdot(t+1)}{\min\{w_{\text{min}}, 1\}\cdot(t-1)})$ (Theorem~\ref{thm:general_UB}, Corollary~\ref{cor:fractionalUB},\\P with incremental $\alpha$-approximation algorithms) 
\end{tabular}\\ \hline
Maximum Independent Set &$(2.598, 2)$~\cite{DBLP:journals/algorithmica/BoyarFKL22} & \begin{tabular}[c]{@{}l@{}} $(t,\frac{t}{t-1})$ (Theorem~\ref{thm:ISUB}) \\$(2.598, 1.626)$ (Theorem~\ref{thm:ISUB}) 
\end{tabular} \\ \hline
Maximum Matching & \begin{tabular}[c]{@{}l@{}} $(k, O(\frac{\log k}{k})+1))$~\cite{DBLP:journals/jco/AngelopoulosDJ20}\\$(1.5, 2)$~\cite{DBLP:journals/algorithmica/BoyarFKL22}\end{tabular} & \begin{tabular}[c]{@{}l@{}}
$(t,\frac{(2-t^*)}{(t^*-1)(3-t^*)}+\frac{t^*-1}{3-t^*})$ (Theorem~\ref{thm:matchingUB}, P)
\\($(1.5, 1)$ with $t^* = 1.5$)
\end{tabular} \\ \hline
Minimum Vertex Cover            & $(2, 1)$~\cite{DBLP:journals/algorithmica/BoyarFKL22}  & $(2-\frac{2}{\opt}, \frac{10}{3})$ (Theorem~\ref{thm:VCUB}, P)
\\ \hline
\end{tabular}
\caption{Summary of our results. Note that $t$ can be any real number larger than $1$. For Maximum Matching, $t^*$ is the largest number such that $t^* \leq t$ and $t^* = 1 + \frac{1}{j}$ for some integer $j$. The note P means that the algorithm is a polynomial-time online algorithm.}
\label{tb:results}
\end{table}
\vspace{-5ex}

\smallskip

\textbf{Related work.}
The closest previous result is the work by Boyar et al.~\cite{DBLP:journals/algorithmica/BoyarFKL22}. The authors investigated the $\IS$, $\MCM$, $\VC$, and \textsc{Minimum Spanning Forest} problems, which are all non-competitive in the pure online model. The authors showed that the competitive ratio of these problems can be massively reduced to a constant by incurring at most $2$ recourse for any single element.
Note that the bounds of the worst case recourse are upper bounds of the amortized recourse. Moreover, the algorithms in~\cite{DBLP:journals/algorithmica/BoyarFKL22} incur at least $1.5$ amortized recourse for the $\MCM$ problem and at least $0.5$ amortized recourse for the $\VC$ problem.

There is a line of research on online matching problems with recourse.
Angelopoulos et al.~\cite{DBLP:journals/jco/AngelopoulosDJ20} studied a more general setting for $\MCM$ and showed that given that no element incurs more than $k$ recourse, there exists an algorithm that attains a competitive ratio of $1 + O(1/\sqrt{k})$.
Megow and N\"{o}lke~\cite{DBLP:conf/approx/MegowN20} showed that for the \textsc{Min-Cost Bipartite Matching} problem, constant competitiveness is achievable with amortized recourse $O(\log n)$, where $n$ is the number of requests.
Bernstein et al.~\cite{DBLP:journals/jacm/BernsteinHR19} showed that there exists an algorithm that achieves $1$-competitiveness with $O(\log^2 n)$ amortized recourse for the \textsc{Bipartite Matching} problem, where $n$ is the number of vertices inserted. The result also shows that to achieve $1$-competitiveness for $\VC$, any online algorithm needs at least $O(n)$ amortized recourse per vertex. 

In addition, there has been extensive work on online algorithms in the recourse model
for a variety of different problems. 
For amortized recourse, studied problems include online bipartite matching~\cite{DBLP:journals/jacm/BernsteinHR19}, graph coloring~\cite{DBLP:conf/swat/BosekDFPZ20}, minimum spanning tree and traveling salesperson~\cite{DBLP:journals/siamcomp/MegowSVW16}, Steiner tree~\cite{DBLP:journals/siamcomp/GuG016}, online facility location~\cite{DBLP:conf/esa/CyganCMS18}, bin packing~\cite{DBLP:journals/corr/abs-1711-02078}, submodular covering~\cite{DBLP:conf/focs/GuptaL20}, and constrained optimization~\cite{DBLP:journals/ipl/AvitabileMP13}.



Graph problems model various real-world issues whose performance guarantees are often abysmal, as they are notoriously non-competitive in the pure online model.
Prior work has shown curiosity about the conditions under which these problems become competitive, and these problems have been investigated under different models out of both practical and theoretical interests. 
Other than the recourse model, considered models include paying for a delay in the timing of decision making to achieve a better solution~\cite{ DBLP:conf/focs/AzarT20, DBLP:conf/waoa/BienkowskiKLS18}. 
Another model for delayed decision making is the reordering buffer model~\cite{DBLP:conf/esa/AlbersS17}, where the online algorithm can delay up to $k$ decisions by storing the elements in a size-$k$ buffer. 

The impact of extra knowledge about the input has also been studied. For example, once a vertex arrives, the neighborhood is known to the algorithm~\cite{DBLP:conf/mfcs/HarutyunyanPR21}.
In the lookahead model, an online algorithm is capable of foreseeing the next events~\cite{DBLP:conf/esa/AlbersS17}. 
Predictions provided by machine learning are also considered for graph problems~\cite{DBLP:conf/soda/AzarPT22}.
Finally, there are also works where the integral assignment restrictions are relaxed for vertex cover and matching problems~\cite{DBLP:conf/icalp/WangW15}.




Another major area of related work for practically any problem considered in the online model is polynomial-time approximation algorithms for the equivalent problem in the offline setting.
The link between the two is particularly salient when considering a polynomial-time online algorithm, as this online algorithm can also be run in polynomial time in the offline setting by processing the graph as if it were revealed in an online manner.

In the case of minimum vertex cover, assuming the unique games conjecture, it is not possible to obtain an approximation factor of $(2 - \varepsilon)$ for fixed $\varepsilon > 0$~\cite{DBLP:journals/jcss/KhotR08}.
However, results have been obtained for parameterized $\varepsilon$.
In particular, Halperin~\cite{DBLP:journals/siamcomp/Halperin02} showed an approximation factor of $2-(1-o(1))\frac{2 \ln \ln \Delta}{\ln \Delta}$ on graphs with maximum degree $\Delta$, and Karakostas~\cite{DBLP:journals/talg/Karakostas09} showed an approximation factor of $2 - \theta(\frac{1}{\sqrt{{\rm log} n}})$.
Both of these results use semidefinite relaxations of the problem, whereas Monien and Speckenmeyer~\cite{DBLP:journals/acta/MonienS85} had previously used a constructive approach to show an approximation factor of $2 - \frac{1}{k + 1}$ for graphs without odd cycles of length at most $2k + 1$.

\textbf{Paper organization.}
Section~\ref{sec:general} defines monotone-sum graph problems and the amortized recourse model. We propose a general algorithm $\tas_t$ for finding the trade-off between the desired competitive ratio and the amortized recourse needed. 
Section~\ref{sec:IS} provides a refined analysis on the $\tas_t$ algorithm on the $\IS$ problem.
Section~\ref{sec:Matching} discusses an existing algorithm~\cite{DBLP:journals/jco/AngelopoulosDJ20}, which is a variant of $\tas_t$ algorithm, for the $\MCM$ problem that is less greedy in aligning its solution and obtains a better trade-off.
Section~\ref{sec:VC} introduces a polynomial-time version of $\tas_t$ algorithm for the $\VC$ problem that limits both greedy aspects.
This algorithm can also be used as a novel offline approximation algorithm for certain graph classes.
Due to space constraints, proofs for all lemmas and theorems can be found in the appendix. We also provide proof ideas for some theorems in the main text.


\section{Monotone-Sum Graph Problems and a General Algorithm}\label{sec:general}
For an \emph{online graph problem} $Q$ on a graph $G= (V,E)$, which is unknown a priori, we consider either the \emph{vertex-arrival} model or the \emph{edge-arrival} model. 
In the vertex-arrival model (resp. edge-arrival model), the elements in $V$ (resp. elements in $E$) arrive one and a time, and an algorithm has to assign each element a non-negative value in $[0, w_\text{max}]$ such that the assignment satisfies certain properties associated with $Q$.
Formally, the \emph{assignment} is defined as $\mathcal{A}: \ins \rightarrow \mathbb{R}^+$, where $\ins$ is $V$ or $E$, such that $\mathcal{A}(\ins)$ satisfies a set of properties~$\mathcal{P}_Q$. 
The \emph{value} of a feasible assignment $\mathcal{A}$ is defined as a function $value: \ins\times\mathcal{A}(\ins) \to \mathbb{R}^+$, which should be minimized or maximized as appropriate. 
In this work, we focus on the problems with \emph{sum} objectives, that is, $value(\ins, \mathcal{A}(\ins)) = \sum_{x\in \ins} \mathcal{A}(x)$. 
Moreover, we concern ourselves about the impact of lacking information on the optimality of the solution. Therefore, we consider monotone sum graph problems where given a feasible assignment and a newly-arrived element, there is always a value in $[0, w_\text{max}]$ that can be assigned to the new element such that the new assignment is feasible.\footnote{Classical graph problems such as $\IS$, $\MCM$, and $\VC$ all satisfy this property.}

We denote the assignment on input $\ins$ returned by the algorithm $\alg$ by $\alg(\ins)$. 
We abuse the notation $\ins$ to denote the graph revealed by the input $\ins$.
We further abuse notation and denote the total value of the assignment by $\alg(\ins)$ as well. That is, $\alg(\ins) = \sum_{x_i \in \ins} \alg(x_i)$. When the context is clear, the parameter $\ins$ is dropped.

We study the family of monotone-sum graph problems, which is defined as follows. Similarly, we define the family of incremental algorithms.
Note that a monotone-sum problem can be a maximization or a minimization problem.

\begin{definition}
The \emph{projection} of an assignment $\mathcal{A}(G)$ on an induced subgraph $H$ of $G$ assigns to each element in $H$ the same value that $\mathcal{A}(G)$ does in $G$.
\end{definition}

\begin{definition}
\emph{\textbf{Monotone-sum graph problems.}} A sum problem is \emph{monotone} if for any graph $G$ and any induced subgraph $H$ of $G$, 1) the projection of any feasible assignment $\mathcal{A}(G)$ on $H$ is also feasible, and 2) $\opt(H) \leq \opt(G)$, where $\opt$ is an optimal solution.
\end{definition}

\begin{definition}
\emph{\textbf{Incremental algorithms.}} An algorithm $\alg$ is \emph{incremental} if for any graph~$G$ corresponding to the instance $\ins$ and any induced subgraph $H$ of $G$, $\alg(H) \leq \alg(G)$. Furthermore, the projection of $\alg(\ins)$ on a prefix $\ins^\prime$ of instance $\ins$ does not have a better objective value than the assignment $\alg(\ins^\prime)$.\footnote{For example, the Ramsey algorithm in~\cite{DBLP:journals/bit/BoppanaH92} is an incremental algorithm. Also note that any online algorithm is an incremental algorithm.}
\end{definition}


In this work, the performance of an online algorithm is measured by the \emph{competitive ratio}. An online algorithm $\alg$ attains a competitive ratio of $t$ if $\max\{\frac{\alg(\ins)}{\opt(\ins)}, \frac{\opt(\ins)}{\alg(\ins)}\}\leq t$ for any instance $\ins$, where $\opt$ is the optimal offline algorithm that knows all information necessary for solving the problem.
In the recourse model, the online algorithm can revoke its decisions and incurs \emph{recourse cost}. There are two types of recourse cost considered in this paper:
\begin{itemize}
    \item $\typeone$: The recourse cost is defined as the \emph{number} of elements which assignment values are changed. Formally, $\sum_{x_i \in \ins} \mathds{1}[A_1(x_i)\neq A_2(x_i)]$ when an assignment on instance $\ins$ is changed from $A_1(\ins)$ to $A_2(\ins)$.
    \item $\typetwo$: The recourse cost is defined as the \emph{amount of change} of the assignment value. Formally, $\sum_{x_i \in \ins} |A_1(x_i)-A_2(x_i)|$ when an assignment on instance $\ins$ is changed from $A_1(\ins)$ to $A_2(\ins)$.
\end{itemize}
We study the trade-off between the competitive ratio and the \emph{amortized recourse}. That is, the total incurred recourse cost divided by the number of elements that should be assigned a value in the final instance. 
We define a family of algorithms for monotone-sum problems.


\hide{
\hhl{The monotone-sum problem property captures many classical graph optimization problems such as \textsc{Independent} \textsc{Set}, $\MCM$, and $\VC$. The three problems can be interpreted as a special case of general monotone-sum problems as follows.}

\runtitle{$\IS$ problem in vertex-arrival model.} Vertices arrives one at a time and should be assigned a value $0$ or $1$. Once a vertex is revealed, the edges between it and its previously-revealed neighbors are known. The goal is to find a maximum value assignment such that for any edge, the sum of values assigned to the two endpoints is at most $1$.

\runtitle{$\MCM$ problem in vertex/edge-arrival model.} Edges or vertices arrives one at a time and each of the edges should be assigned a value $0$ or $1$. The goal is to find a maximum value assignment such that for any vertex, the sum of values assigned to its incident edges is at most $1$.

\runtitle{$\VC$ problem in vertex-arrival model.} Vertices arrives one at a time and should be assigned a value $0$ or $1$. Once a vertex is revealed, the edges between it and its previously-revealed neighbors are known. The goal is to find a minimum value assignment such that for any edge, the sum of values assigned to its two endpoints is at least $1$.

\smallskip

Since the available value for each element is either $0$ or $1$ in these three problems, we say that an element is \emph{accepted} if it is assigned a value $1$. Similarly, an element is \emph{rejected} if it is assigned a value $0$. An element is \emph{late-accepted} if its value is changed from $0$ to $1$ after its arrival, and \emph{late-rejected} if its value is changed from $1$ to $0$ after its arrival.
\hhl{Furthermore, since the value for any element only changes between $0$ and $1$, the $\typeone$ recourse cost and $\typetwo$ recourse cost are equivalent in these three problems.}
}

\runtitle{Target-and-Switch ($\tas_t$) algorithm.}
The $\tas_t$ algorithm uses a yardstick algorithm $\ralg$ as a reference, where the yardstick can be the optimal algorithm or an incremental $\alpha$-approximation algorithm.
Throughout the process, $\tas_t$ keeps track of the yardstick solution value.
Once a new element arrives, $\tas_t$ greedily assigns a feasible value\footnote{Note that there always exists a value such that the new assignment is feasible since the problem is monotone.} to the newly-revealed element if this assignment remains $t$-competitive relative to the yardstick algorithm's solution. 
Otherwise, $\tas_t$ \emph{switches} its assignment to the one by the yardstick algorithm and incurs recourse. (See Algorithm~\ref{alg:tas} in Appendix.)

Now, we show that the $\tas_t$ algorithm achieves the desired competitive ratio $t$ with at most polynomial of $t$ amortized recourse. In our analysis, we use the following observation heavily (including for Theorem~\ref{thm:general_UB}).

\begin{observation}\label{obs:sum-max}
For all $x_i \geq 0$ and $y_i > 0$, $\frac{\Sigma_i x_i}{\Sigma_i y_i} \leq \max_i \frac{x_i}{y_i}$.
\end{observation}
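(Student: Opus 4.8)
The plan is to prove the inequality $\frac{\sum_i x_i}{\sum_i y_i} \le \max_i \frac{x_i}{y_i}$ directly by bounding each numerator term. Let $M = \max_i \frac{x_i}{y_i}$; since each $y_i > 0$, this maximum is well-defined (assuming the sum is finite, or the supremum otherwise). First I would observe that for every index $i$, the definition of $M$ gives $\frac{x_i}{y_i} \le M$, and because $y_i > 0$ we may multiply through without flipping the inequality to obtain $x_i \le M \cdot y_i$. This is the one genuinely substantive step: it is exactly where $y_i > 0$ is used, and where $x_i \ge 0$ guarantees the terms behave (so that, e.g., no cancellation causes trouble and $\sum_i y_i > 0$ as soon as at least one term is present).

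Next I would sum the inequalities $x_i \le M y_i$ over all $i$, yielding $\sum_i x_i \le M \sum_i y_i$. Since every $y_i > 0$, the denominator $\sum_i y_i$ is strictly positive, so I can divide both sides by it to conclude $\frac{\sum_i x_i}{\sum_i y_i} \le M = \max_i \frac{x_i}{y_i}$, which is the claim. If one wants to be careful about the edge case of an empty index set, the statement is vacuous there; for a nonempty finite (or suitably convergent) family the argument above is complete.

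I do not expect any real obstacle here — the result is the standard ``mediant/weighted-average'' inequality. The only thing to be slightly careful about is the direction of the inequality when clearing denominators, which is why the hypothesis $y_i > 0$ (rather than merely $y_i \ge 0$) is stated; and the hypothesis $x_i \ge 0$ is what makes this a natural ``the ratio of sums is dominated by the largest pointwise ratio'' bound that will be reused for the competitive-ratio arguments later. A symmetric remark (which I would not belabor in the proof) is that one also has $\frac{\sum_i x_i}{\sum_i y_i} \ge \min_i \frac{x_i}{y_i}$ by the same reasoning with inequalities reversed, which is the form that gets used when $\tas_t$ needs a lower bound on its ratio against the yardstick.
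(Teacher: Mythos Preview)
Your proof is correct and is the standard mediant-inequality argument; the paper itself states Observation~\ref{obs:sum-max} without proof, so there is nothing to compare against. Your write-up is more careful than needed (the remarks on edge cases and the symmetric lower bound are fine but could be omitted).
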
 

\begin{restatable}{theorem}{theoremGeneralUB}
\label{thm:general_UB}
Using an optimal algorithm (resp. incremental $\alpha$-approximation algorithm) as the yardstick, $\tas_t$ is $t$-competitive (resp. $(t\cdot \alpha)$-competitive) and incurs at most $\frac{w_{\text{max}}\cdot(t+1)}{t-1}$ $\typetwo$ amortized recourse for any monotone-sum graph problem where $w_\text{max}$ is the maximum value that can be assigned to an element. The bound also works for $\typeone$ amortized recourse.
\end{restatable}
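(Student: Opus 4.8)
The plan is to bound the amortized recourse by a charging/potential argument that tracks the "gap" between $\tas_t$'s current solution value and the yardstick's value. First I would set up the key invariant: between consecutive switches, $\tas_t$ only does greedy assignments, so each arriving element contributes at most $w_{\text{max}}$ to the difference $|\ralg(\ins) - \tas_t(\ins)|$ (in the maximization case $\ralg$ can only increase, in the minimization case decrease, while $\tas_t$'s greedy move keeps it feasible). A switch is triggered precisely when $t$-competitiveness against the yardstick fails, i.e. when $\max\{\ralg/\tas_t, \tas_t/\ralg\} > t$; since $\ralg$ is incremental and monotone, this means the accumulated gap has grown past a threshold proportional to the current value of the solution. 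The $\typetwo$ recourse of the switch itself is at most the gap at that moment (we align completely to $\ralg$), which is $|\ralg - \tas_t|$.

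The core of the argument is: between two successive switches, enough elements must arrive to build up the gap that forces the next switch, and the recourse paid at that switch is comparable to that gap. Concretely, suppose a switch happens when the value is roughly $V$; after switching, $\tas_t$ agrees with $\ralg$, so the gap is $0$. For the next switch to fire, the gap must reach about $\frac{t-1}{t}$ times the then-current value (for maximization; similar for minimization), and each new element grows the gap by at most $w_{\text{max}}$, so at least $\frac{(t-1)V'}{t\, w_{\text{max}}}$-ish new elements must have arrived — hence the switch cost $\le V' \cdot \frac{t-1}{t}$-ish (actually the full gap, which I bound by $(t-1)$ times $\tas_t$'s value over $t$, plus the small overshoot of one element) is amortized against that many arrivals. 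Dividing gives a per-element cost bounded by a constant times $w_{\text{max}}$; chasing the constants carefully — accounting for the $+1$ from the single element that overshoots the threshold, and the fact that the denominator is $\tas_t$'s value which is within a factor of the number of elements only up to $w_{\text{max}}$ — yields exactly $\frac{w_{\text{max}}(t+1)}{t-1}$. I would use Observation~\ref{obs:sum-max} to convert the global ratio bound into the per-phase bound: the total recourse is a sum over phases of (switch cost)/(elements in phase), and $\frac{\sum \text{cost}_j}{\sum n_j} \le \max_j \frac{\text{cost}_j}{n_j}$, so it suffices to bound one phase.

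For the $\typeone$ bound, observe that in a single switch the number of elements whose value changes is at most $|\ralg - \tas_t|/w_{\text{min}}$... but since we only want to show the same numerical bound works, the cleaner route is: each changed element changes its value by at most $w_{\text{max}}$, so $\typeone$ count $\le \typetwo$ cost$/w_{\text{min}}$ in general — however the theorem as stated (without the $\min\{1,w_{\text{min}}\}$ refinement) claims the $w_{\text{max}}(t+1)/(t-1)$ bound for $\typeone$ directly, which holds because in the worst relevant case each element contributes at most $1 \le w_{\text{max}}$ to the count per unit and the same threshold argument on the number of arrivals applies verbatim with "$1$" in place of "change amount". For the $\alpha$-approximation variant, the only change is that $\ralg(\ins) \le \alpha \cdot \opt(\ins)$ (or $\ge \opt/\alpha$), so $t$-competitiveness against $\ralg$ gives $(t\alpha)$-competitiveness against $\opt$; the recourse analysis is untouched since it never referenced $\opt$, only $\ralg$.

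The main obstacle I anticipate is getting the constants exactly right — in particular handling the "boundary element" that pushes the gap just over the $t$-threshold (this is where the "$+1$" in "$t+1$" comes from rather than "$t-1$"), and correctly relating the value $\tas_t(\ins)$ appearing in the competitive-ratio test to the count of arrived elements, since a priori one element can carry weight up to $w_{\text{max}}$, which is exactly why $w_{\text{max}}$ (rather than $1$) appears in the numerator. I would also need to check the minimization case symmetrically, where the roles of "$\ralg$ too large" and "$\ralg$ too small" swap, and confirm the gap still grows by at most $w_{\text{max}}$ per arrival there.
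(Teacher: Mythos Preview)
Your overall skeleton---partition into phases at each switch, bound per-phase amortized recourse, then invoke Observation~\ref{obs:sum-max}---matches the paper. But two of the load-bearing steps are wrong.

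\textbf{The recourse of a switch is not the gap.} You write that the $\typetwo$ recourse of a switch ``is at most the gap at that moment \ldots\ which is $|\ralg-\tas_t|$.'' This confuses $|\sum_x \tas_t(x)-\sum_x \ralg(x)|$ with $\sum_x |\tas_t(x)-\ralg(x)|$. The assignments can disagree on every element while having the same total (e.g.\ $(1,0,1,0)$ vs.\ $(0,1,0,1)$), so the gap in totals can be zero while the actual $\typetwo$ recourse is $2n$. The correct bound, which the paper uses, is the crude $\tr_{i+1}\le \alg+\ralg_{i+1}$ (zero everything, then copy $\ralg$). This is where the factor $t+1$ really comes from: for maximization the switching condition gives $\alg<\ralg_{i+1}/t$, hence $\alg+\ralg_{i+1}<(1+1/t)\ralg_{i+1}$; combined with the lower bound $|\ins_{i+1}|\ge (1-1/t)\ralg_{i+1}/w_{\text{max}}$ on the number of arrivals (via the incremental property of $\ralg$), the ratio is $\frac{w_{\text{max}}(t+1)}{t-1}$. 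Your story that ``$+1$'' comes from a single overshooting element is not the mechanism.

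\textbf{The minimization case is not symmetric.} You propose to ``check the minimization case symmetrically,'' but the paper explicitly cannot lower-bound the number of arrivals via growth of $\ralg$ there: it is possible that $\ralg_{i+1}=\ralg_i$ while the yardstick's assignment shifts entirely, forcing a costly switch after very few arrivals \emph{if} you try to count by $\ralg$'s value. The fix is to count arrivals via the growth of $\tas_t$'s own value during the phase: $|\ins_{i+1}|\ge \frac{\alg-\tas_i}{w_{\text{max}}}+1$, then use $\tas_i=\ralg_i\le\ralg_{i+1}<(\alg+w_{\text{max}})/t$ from the switching condition. Without this asymmetric handling the argument fails for minimization problems.
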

\begin{proof} \textbf{(Ideas.)}
We can show that any optimal solution satisfies the incremental property (see the full version) and thus can be seen as an incremental $1$-approximation algorithm.

Since recourse is incurred only at the moments when a switch happens in the $\tas_t$ algorithm, we partition the process of the algorithm into \emph{phases} according to the switches. Phase $i$ consists all the events after the $(i-1)$-th switch until the $i$-th switch. 
By Observation~\ref{obs:sum-max}, the amortized recourse for the whole instance is bounded by the maximum amortized recourse incurred within a phase.
Therefore, we consider the amortized recourse incurred by the $(i+1)$-th switch for arbitrary $i\geq 0$.

Let $\ralg_i$ and $\tas_i$ denote the value of the yardstick algorithm's solution and the $\tas_t$ algorithm's solution right \emph{after} the $i$-th switch, respectively. By the $\tas_t$ algorithm, $\tas_i = \ralg_i$. 
Let $\alg$ denote the value of $\tas_t$'s solution right \emph{before} the arrival of $x$, which triggers the $(i+1)$-th switch. 
The total $\typetwo$ recourse cost is at most $\alg + \ralg_{i+1}$ (where the $\tas_t$ algorithm changes the value on every element to zero and then changes it to the $\ralg$ assignment).

The main ingredients for the proof are:
\begin{itemize}
    \item \textbf{Property $1$}: Monotonicity of the problem and the incremental nature of $\ralg$ implies that $\ralg_i \leq \ralg_{i+1}$.
    \item \textbf{Property $2$}: The incremental nature of $\tas_t$ during a phase implies that $\alg \geq \ralg_i$.
    \item \textbf{Property $3$}: By the switching condition of $\tas_t$, $\alg <\ralg_{i+1}/t$ for maximization problems, and $\alg+w_\text{max} > t\cdot\ralg_{i+1}$ for minimization problems.
\end{itemize}

\smallskip

\runtitle{Maximization problems.} 
By \textbf{Property $1$} and the fact that the assigned values are at most $w_\text{max}$, we can show that the adversary needs to release at least $\frac{\ralg_{i+1} - \ralg_i}{w_\text{max}}$ elements such that 
the yardstick assignment value 
increases enough to trigger the switch. 
By \textbf{Property $2$} and \textbf{Property $3$}, $\frac{\ralg_{i+1} - \ralg_i}{w_\text{max}} \geq (1-1/t)\cdot \ralg_{i+1}$.
By \textbf{Property $3$}, the total recourse incurred by the $(i+1)$-th switch is at most $\alg+\ralg_{i+1} < (1+1/t) \cdot \ralg_{i+1}$. 
Hence, the $\typetwo$ amortized recourse incurred in phase $i+1$ is bounded by $\frac{w_\text{max}\cdot(1+1/t) \cdot \ralg_{i+1}}{(1-1/t)\cdot \ralg_{i+1}} = \frac{w_\text{max}\cdot(t+1)}{t-1}$.

\smallskip

\runtitle{Minimization problems.} 
In minimization problems, the $(i+1)$-th switch may be triggered by shifting the $\ralg$ assignment completely but without changing its value. In this case, a massive amount of recourse is incurred by a single input. 
However, we can show by \textbf{Property $3$} that in this case, the $\alg$ value must be large enough to trigger the switch. 
Thus, we can bound the number of elements released during phase $i+1$ by the change of $\tas_t$ assignment's total value. That is, it is at least $\frac{\alg - \tas_i}{w_\text{max}}+1$, where the $1$ is the element which triggers the switching.
By \textbf{Property $1$} and \textbf{Property $2$}, the number is at least $\frac{(1-1/t)\cdot(\alg +w_\text{max})}{w_\text{max}}$.
By \textbf{Property $3$}, the total recourse incurred by the $(i+1)$-th switch is at most $\alg+\ralg_{i+1} < (1+1/t) \cdot \alg+w_\text{max}/t$. 
Therefore, the $\typetwo$ amortized recourse incurred in phase $i+1$ is bounded by $\frac{w_\text{max}\cdot((1+1/t) \cdot \alg+w_\text{max}/t)}{(1-1/t)\cdot(\alg +w_\text{max})} = \frac{w_\text{max}\cdot(t+1)}{t-1}$.
\end{proof}


The yardstick algorithm can be the optimal offline algorithm. 
Since the problem is monotone, our algorithm can be $t$-competitive for arbitrary $t >1$.
Furthermore, if we apply a polynomial-time incremental $\alpha$-approximation algorithm as the yardstick, then our algorithm also runs in polynomial time.

\hide{
Note that the amount of recourse in the analysis is measured by the \emph{total amount of change} among all elements. That is, for a single element $x$, its incurred recourse is $|\mathcal{A^\prime}(x) - \mathcal{A}(x)|$, where $\mathcal{A}^\prime$ is the adapted assignment.
This amortized recourse is always an upper bound of the one if we measure the recourse by the \emph{number} of changed element values when the minimum non-zero value assigned to the elements is at least $1$. 
}

The results work for weighted versions of problems, and it also work for fractional assignment problems, where the value assigned to any element is in $[0,1]$ (for example, the fractional $\VC$ problem in~\cite{DBLP:conf/icalp/WangW15}). In this case, the $\typetwo$ amortized recourse is bounded above by the $\typeone$ amortized recourse:

\begin{restatable}{corollary}{corFractionalUB}
\label{cor:fractionalUB}
For a fractional monotone-sum problem, $\tas_t$ is $(t\cdot \alpha)$-competitive and incurs at most $\frac{t+1}{w_{\text{min}}\cdot(t-1)}$ $\typeone$ amortized recourse using an incremental $\alpha$-approximation algorithm as the yardstick. The bound also works for $\typetwo$ amortized recourse.
\end{restatable}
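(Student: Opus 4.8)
The plan is to reduce Corollary~\ref{cor:fractionalUB} to the proof of Theorem~\ref{thm:general_UB} by reworking the two recourse measures. In the fractional setting, $w_\text{max} = 1$, so the competitiveness claim ($t\cdot\alpha$ when the yardstick is an incremental $\alpha$-approximation algorithm, $t$ when it is optimal) is immediate from Theorem~\ref{thm:general_UB} with no further work. For the recourse bound, I would first observe that the phase decomposition from the proof of Theorem~\ref{thm:general_UB} still applies verbatim: recourse is incurred only at switches, and by Observation~\ref{obs:sum-max} the amortized recourse over the whole instance is at most the worst per-phase amortized recourse, so it suffices to bound the $\typeone$ recourse incurred in an arbitrary phase $i+1$ and divide by the number of elements released in that phase.

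The key new estimate is a comparison between $\typeone$ cost (number of reassigned elements) and $\typetwo$ cost (total change in value). When a switch occurs, every element whose value changes contributes at least $w_\text{min}$ to the $\typetwo$ cost, since each changed value moves between distinct feasible values and the smallest positive gap is $w_\text{min}$ (more carefully: either an element's value goes from $0$ to something $\ge w_\text{min}$, or from something $\ge w_\text{min}$ to $0$, or between two positive values differing by at least... — here I would just use the cleaner bound that the total $\typetwo$ cost at the switch is at most $\alg + \ralg_{i+1}$ exactly as in Theorem~\ref{thm:general_UB}, while the $\typeone$ cost is at most the number of elements touched, each of which contributes a change, and I instead bound the $\typeone$ cost directly). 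Concretely, the cleanest route: the $\typeone$ cost of the $(i+1)$-th switch is at most the number of elements that currently have a nonzero value under $\tas_t$ plus the number that have a nonzero value under $\ralg_{i+1}$, and since each such element carries value at least $w_\text{min}$, these counts are at most $\alg/w_\text{min}$ and $\ralg_{i+1}/w_\text{min}$ respectively. Hence the $\typeone$ switch cost is at most $(\alg + \ralg_{i+1})/w_\text{min}$, which is exactly $1/w_\text{min}$ times the $\typetwo$ switch cost bound used in Theorem~\ref{thm:general_UB}.

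From there the argument is a carbon copy of Theorem~\ref{thm:general_UB}'s proof with $w_\text{max}=1$: for maximization problems the number of elements released in phase $i+1$ is at least $(1-1/t)\ralg_{i+1}$ (by Properties~1--3 and $w_\text{max}=1$), so the $\typeone$ amortized recourse is at most $\frac{(\alg+\ralg_{i+1})/w_\text{min}}{(1-1/t)\ralg_{i+1}} < \frac{(1+1/t)\ralg_{i+1}/w_\text{min}}{(1-1/t)\ralg_{i+1}} = \frac{t+1}{w_\text{min}(t-1)}$; the minimization case runs identically with the denominator $(1-1/t)(\alg+1)$ and numerator $((1+1/t)\alg+1/t)/w_\text{min}$, again giving $\frac{t+1}{w_\text{min}(t-1)}$. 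Finally, since in the fractional setting every changed value contributes at most $w_\text{max}=1 \ge$ its own magnitude to the $\typetwo$ cost while contributing exactly $1$ to the $\typeone$ count, the $\typeone$ bound dominates the $\typeone$ bound\,---\,more precisely, $|A_1(x_i)-A_2(x_i)| \le 1 = \mathds{1}[A_1(x_i)\neq A_2(x_i)]$ whenever the values differ, so the $\typetwo$ amortized recourse never exceeds the $\typeone$ amortized recourse, and the same numerical bound applies.

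I expect the main obstacle to be the clean justification that each touched element carries value at least $w_\text{min}$, i.e. that the $\typeone$ switch cost is genuinely at most $(\alg+\ralg_{i+1})/w_\text{min}$ rather than something subtler. One has to be slightly careful that when $\tas_t$ rewrites element values during a switch it may change an element from one positive value to another positive value; but the bound ``number of nonzero-valued elements in a feasible assignment of total value $V$ is at most $V/w_\text{min}$'' still covers every element that is nonzero either before or after the switch, and an element that is zero both before and after is not counted in the $\typeone$ cost at all, so the estimate holds. Everything else is routine reuse of the three Properties established in Theorem~\ref{thm:general_UB}.
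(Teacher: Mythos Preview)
Your proposal is correct and matches the paper's approach essentially verbatim: both reuse the phase decomposition and three Properties of Theorem~\ref{thm:general_UB}, bound the $\typeone$ switch cost by $(\alg + \ralg_{i+1})/w_\text{min}$, and then rerun the max/min computations with $w_\text{max}=1$ to obtain $\frac{t+1}{w_\text{min}(t-1)}$. Your union-bound justification (a changed element is nonzero either before or after, and each nonzero element carries at least $w_\text{min}$) is in fact slightly more careful than the paper's one-line ``number changed $\le$ (total value change)$/w_\text{min}$'', and your $\typetwo\le\typeone$ remark is a fine way to get the last sentence (the paper implicitly relies on Theorem~\ref{thm:general_UB} with $w_\text{max}=1$ for that).
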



The monotone-sum problem property captures many classical graph optimization problems such as \textsc{Independent} \textsc{Set}, $\MCM$, and $\VC$. The three problems can be interpreted as a special case of general monotone-sum problems as follows.

\runtitle{$\IS$ problem in vertex-arrival model.} Vertices arrive one at a time and should be assigned a value $0$ or $1$. Once a vertex is revealed, the edges between it and its previously-revealed neighbors are known. The goal is to find a maximum value assignment such that for any edge, the sum of values assigned to the two endpoints is at most $1$.

\runtitle{$\MCM$ problem in vertex/edge-arrival model.} Edges or vertices arrive one at a time and each of the edges should be assigned a value $0$ or $1$. The goal is to find a maximum value assignment such that for any vertex, the sum of values assigned to its incident edges is at most $1$.

\runtitle{$\VC$ problem in vertex-arrival model.} Vertices arrive one at a time and should be assigned a value $0$ or $1$. Once a vertex is revealed, the edges between it and its previously-revealed neighbors are known. The goal is to find a minimum value assignment such that for any edge, the sum of values assigned to its two endpoints is at least $1$.

\smallskip

Since the available value for each element is either $0$ or $1$ in these three problems, we say that an element is \emph{accepted} if it is assigned a value $1$. Similarly, an element is \emph{rejected} if it is assigned a value $0$. An element is \emph{late-accepted} if its value is changed from $0$ to $1$ after its arrival, and \emph{late-rejected} if its value is changed from $1$ to $0$ after its arrival.
Furthermore, since the value for any element only changes between $0$ and $1$, the $\typeone$ recourse cost and $\typetwo$ recourse cost are equivalent in these three problems. Therefore, we have the following corollary.
\begin{corollary}\label{col:threeproblems}
For $\IS$, $\MCM$, and \textsc{Vertex}\textsc{Cover} problems, the $\tas_t$ algorithm attains competitive ratio $t>1$ while incurring at most $\frac{t+1}{t-1}$ ($\typeone$ or $\typetwo$) amortized recourse.
\end{corollary}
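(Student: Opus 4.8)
The plan is to derive this statement directly from Theorem~\ref{thm:general_UB} (equivalently Corollary~\ref{cor:fractionalUB}), once we have checked that $\IS$, $\MCM$, and $\VC$ genuinely fit the monotone-sum framework and admit an optimal yardstick. So the first step is to instantiate the definitions. In all three problems the objective is a sum over the elements (vertices for $\IS$ and $\VC$, edges for $\MCM$), so the sum condition is immediate. For the monotonicity conditions: the projection of a feasible assignment onto an induced subgraph $H$ of $G$ stays feasible because every constraint of $H$ is a constraint of $G$ (an edge of $H$ is an edge of $G$; a vertex of $H$ retains only a subset of its incident edges), and $\opt(H) \le \opt(G)$ holds since any independent set / matching of $H$ is also one of $G$ for the two maximization problems, while for $\VC$ the restriction of an optimal cover of $G$ to $V(H)$ is a cover of $H$ of no larger size. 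The ``always-extendable'' hypothesis also holds: on arrival of a new element one can assign it value $0$ for $\IS$ and $\MCM$, or value $1$ for $\VC$ (which covers all of its newly revealed incident edges), preserving feasibility.

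Next I would record the two numerical facts that specialize the general bound to these problems. Since each element is assigned a value in $\{0,1\}$, we have $w_{\text{max}} = 1$ and $w_{\text{min}} = 1$; and for the same reason $|A_1(x_i) - A_2(x_i)| = \mathds{1}[A_1(x_i) \neq A_2(x_i)]$ for every element and every reassignment $A_1 \to A_2$, so the $\typeone$ and $\typetwo$ recourse costs coincide, hence so do the corresponding amortized quantities. Moreover, because each problem is monotone-sum we may run $\tas_t$ with an optimal offline algorithm as the yardstick: an optimal solution is incremental, as noted in the proof of Theorem~\ref{thm:general_UB} (indeed every online algorithm is incremental), so this is a legitimate choice with $\alpha = 1$.

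Finally I would plug these in. Applying Theorem~\ref{thm:general_UB} with an optimal yardstick ($\alpha = 1$) and $w_{\text{max}} = 1$ yields that $\tas_t$ is $t$-competitive and incurs at most $\frac{w_{\text{max}}\cdot(t+1)}{t-1} = \frac{t+1}{t-1}$ $\typetwo$ amortized recourse, with the same bound for $\typeone$ amortized recourse; by the equivalence of the two cost measures established in the previous paragraph this is exactly the claimed bound, valid for either notion. (One could equally invoke Corollary~\ref{cor:fractionalUB} with $w_{\text{min}} = 1$.) I do not expect a genuine obstacle here: the argument is pure bookkeeping confirming that the three classical problems satisfy the hypotheses of the general theorem, and all of the real work sits in Theorem~\ref{thm:general_UB} itself.
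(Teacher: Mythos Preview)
Your proposal is correct and matches the paper's own (implicit) argument: the paper simply notes that the three problems are monotone-sum with values in $\{0,1\}$, so $w_{\text{max}}=1$ and the two recourse notions coincide, and then states the corollary as an immediate consequence of Theorem~\ref{thm:general_UB}. Your write-up is more explicit in checking the monotone-sum hypotheses, but the approach is the same.
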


\section{Maximum Independent Set}\label{sec:IS}
For the maximum independent set problem in the vertex-arrival model, the algorithm proposed by Boyar et al. incurs at most $2$ amortized recourse while maintaining a competitive ratio of $2.598$~\cite{DBLP:journals/algorithmica/BoyarFKL22}.
By Theorem~\ref{thm:general_UB}, the general $\tas_t$ algorithm incurs at most $\frac{t+1}{t-1}$ amortized recourse and guarantees a competitive ratio of $t$.
In this section, we show that the amortized recourse incurred by $\tas_t$ is even smaller by a more sophisticated analysis. 

\begin{restatable}{lemma}{lmISreduc}
\label{lm:reduction}
\runtitle{\emph{(Instance reduction)}}
For any instance $(G, \sigma)$ of the maximum independent set problem, there exists an instance $(G', \sigma')$ for which any newly revealed vertex is either accepted by $\tas_t$ or is part of the optimal offline solution when $\tas_t$ incurs its next switch, but not both, such that the amortized recourse for $(G', \sigma')$ is at least that for $(G, \sigma)$.
\end{restatable}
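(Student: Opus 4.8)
The plan is to transform an arbitrary instance $(G,\sigma)$ into one where the "offending" vertices — those that are both accepted by $\tas_t$ and belong to the offline optimum captured at the next switch — are removed without decreasing the amortized recourse. The key observation is that amortized recourse is controlled phase-by-phase (by Observation~\ref{obs:sum-max}, the global amortized recourse is at most the maximum over phases), so I would argue locally within a single phase and then reassemble. Within a phase, recall the recourse of a switch is $\tas + \opt'$ where $\tas$ is our solution value just before the switch and $\opt'$ is the yardstick (optimal) value just after; the phase length is the number of vertices released since the previous switch. I want to perturb the instance so that deleting an offending vertex $v$ from the graph (and from the arrival sequence) does not decrease this ratio.

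First I would fix a phase and a vertex $v$ that is simultaneously in $\tas_t$'s solution and in the optimal solution $\opt'$ that triggers (or is installed at) the next switch. Deleting $v$ from $G$: by monotonicity and the incremental property, $\opt'$ can drop by at most the weight of $v$ (here $1$, since $\IS$ is unweighted in this model), and likewise $\tas$ — the accepted-set value before the switch — drops by exactly $1$ if $v$ had already arrived and been accepted, or is unaffected if $v$ arrives later in the phase. So the numerator $\tas+\opt'$ of the recourse decreases by at most $2$. But the phase also loses one arrival (the vertex $v$ itself), so the denominator decreases by $1$. The heart of the argument is the elementary fact that for a fraction $p/q$ with $p \ge 2q$ (roughly — we need the recourse ratio to be large enough, which we may assume, else the bound we are proving already holds trivially), replacing it by $(p-2)/(q-1)$ does not decrease it; more carefully one checks $\frac{p-2}{q-1}\ge\frac{p}{q}\iff p\ge 2q$. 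One needs to verify the relevant regime makes $p \ge 2q$ hold — intuitively a switch that releases few vertices but forces a full realignment has large recourse ratio — and handle the boundary cases ($q=1$, or $v$ being the switch-triggering vertex itself) separately.

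Second, I would argue that this deletion does not create *new* offending vertices elsewhere and does not merge or destroy phases in a way that lowers the max-over-phases quantity: removing $v$ only shrinks the graph, which can only weaken the switching condition, so the phase structure is preserved or further refined, and by Observation~\ref{obs:sum-max} refining phases only helps. Iterating the deletion over all offending vertices, across all phases, yields $(G',\sigma')$ with the claimed property and amortized recourse at least that of $(G,\sigma)$.

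The main obstacle I anticipate is the bookkeeping around *when* $v$ arrives relative to the switch and how its deletion interacts with the greedy acceptance decisions of $\tas_t$ on the remaining vertices: removing an accepted vertex $v$ could cause a previously-rejected neighbor to now be greedily accepted, changing $\tas$'s trajectory and hence the exact phase boundaries and the precise values of $\tas$ and $\opt'$. The clean way around this is to not track $\tas_t$'s run on $G'$ vertex-by-vertex, but rather to bound the *amortized recourse* of the modified run directly by the quantities from the original run together with the monotonicity/incremental inequalities — i.e. show the extremal instance can be taken in the claimed restricted form by a compactness/exchange argument rather than by simulating the algorithm. Getting that reduction airtight — in particular ruling out the degenerate case where a phase collapses to length zero — is where the real care is needed.
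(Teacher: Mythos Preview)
Your proposal has two genuine gaps.

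\textbf{First, you only treat half the reduction.} The lemma demands an exclusive-or: every phase-vertex must be \emph{either} accepted by $\tas_t$ \emph{or} in the next optimal, but not both. That rules out two vertex types --- those in both sets (your ``offending'' vertices) and those in \emph{neither}. You only attempt to eliminate the former. Vertices that are neither accepted nor in $\opt_{i+1}$ contribute nothing to recourse and nothing to either solution, so they are pure padding in the denominator; they must also be removed (or, as the paper does, delayed past the last switch and then deleted) to reach the claimed restricted form.

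\textbf{Second, and more seriously, you identify but do not close the cascading-acceptance hole, and your proposed workaround does not work.} You correctly note that deleting an accepted vertex $v$ may free a previously-blocked neighbor $u$ to be greedily accepted, altering $\tas_t$'s entire trajectory. Your suggested fix --- bound the amortized recourse of the modified run by quantities from the original run via monotonicity, without simulating --- does not go through, because the lemma is a statement about the \emph{actual} amortized recourse of $\tas_t$ on $(G',\sigma')$, not about an analytic upper bound. The paper's resolution is to exploit the interaction between the two vertex types: any neighbor $u$ that would become acceptable after deleting a both-type vertex $v$ is necessarily a neither-type vertex (it was not accepted because $v$ blocked it, and it is not in $\opt_{i+1}$ because its neighbor $v$ is). Hence if one \emph{first} eliminates all neither-type vertices, \emph{then} deletes the both-type vertices, the cascading effect cannot occur and $\tas_t$'s run on the remaining graph is unchanged. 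Your fraction inequality $(p-2)/(q-1)\ge p/q$ with its unverified side condition $p\ge 2q$ is then unnecessary: a both-type vertex is assigned $1$ in both solutions, so deleting it leaves the actual switch recourse unchanged while shrinking the phase by one, which can only raise the ratio.
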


Using Lemma~\ref{lm:reduction}, we can bound above the amortized recourse incurred by $\tas_t$ against any reduced instance, and thus the amortized recourse incurred against any instance.

\begin{restatable}{theorem}{thmISub}
\label{thm:ISUB}
For the maximum independent set problem, given a target competitive ratio~$t > 1$, $\tas_t$ is $t$-competitive while incurring at most $\frac{t}{t-1}$ amortized recourse. 
\end{restatable}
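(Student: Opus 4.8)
The plan is to feed the instance-reduction lemma (Lemma~\ref{lm:reduction}) into the phase decomposition from the proof of Theorem~\ref{thm:general_UB} and then carry out a sharper count of both the incurred recourse and the number of arrivals inside each phase. By Lemma~\ref{lm:reduction} I may assume the instance is already reduced, so that every arriving vertex is, at the moment of $\tas_t$'s next switch, either in $\tas_t$'s solution or in the optimal set used as the yardstick, but never in both. As in Theorem~\ref{thm:general_UB}, split the run into phases, phase $i$ being the arrivals strictly after the $(i-1)$-th switch and up to and including the $i$-th switch; by Observation~\ref{obs:sum-max} it suffices to bound, over all phases, the recourse charged to a phase divided by the number of its arrivals. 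Write $R_i$ for the yardstick's optimal set right after the $i$-th switch and $\ralg_i = |R_i|$ (with $R_0 = \emptyset$), $A_i$ for the vertices greedily accepted by $\tas_t$ during phase $i$ and $a_i = |A_i|$, and split $R_i = R_i^{\mathrm{old}} \cup R_i^{\mathrm{new}}$ according to whether a vertex of $R_i$ arrived before or during phase $i$. Just before the $i$-th switch $\tas_t$ holds exactly $R_{i-1} \cup A_i$, a disjoint union, and the reduction forces $A_i \cap R_i = \emptyset$; hence the number of vertices whose value changes at the $i$-th switch is $a_i + |R_{i-1} \triangle R_i| = a_i + \ralg_{i-1} + \ralg_i - 2 c_i$ with $c_i = |R_{i-1} \cap R_i|$, while the reduction makes the phase length exactly $n_i = a_i + |R_i^{\mathrm{new}}| = a_i + \ralg_i - |R_i^{\mathrm{old}}|$ — strictly more arrivals than the generic estimate $\ralg_i - \ralg_{i-1}$ used in Theorem~\ref{thm:general_UB}.

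The heart of the argument is to control $|R_i^{\mathrm{old}}| - c_i$, the previously-revealed optimum vertices at the $i$-th switch that are not already in the previous optimum $R_{i-1}$ — precisely the ``recycled'' vertices that must be switched back on. The key claim is that in a reduced instance this quantity is at most $\ralg_{i-1}/t$: such a vertex was once in $\tas_t$'s solution (by acceptance) or in some earlier $R_j$ and has since been dropped, so it belongs to the material $\tas_t$ has already abandoned, whose supply the switching condition $\alg_{i-1} < \ralg_{i-1}/t$ (with $\alg_{i-1}$ the value of $\tas_t$ at the $(i-1)$-th switch) caps. Combining this with $\ralg_{i-1} = \alg_i - a_i$ and $\alg_i < \ralg_i / t$, and noting that $c_i = 0$ is the worst case (once the parameters are in range, increasing $c_i$ only lowers the ratio), one gets a recourse of at most $\alg_i + \ralg_i < \ralg_i(1 + 1/t)$ against $n_i \ge \ralg_i(1 - 1/t^2) + a_i(1 + 1/t)$; the ratio is maximized at $a_i = 0$ and equals $\frac{1+1/t}{1-1/t^2} = \frac{t}{t-1}$, as claimed. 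This matches a tight family in which at each switch $\tas_t$'s whole solution is discarded and the new optimum recycles exactly the set discarded two switches earlier, giving per-phase ratio $\frac{t}{t-1}$ for every phase after the first (whose ratio is exactly $1$), so the bound is also tight up to the vanishing contribution of the first phase.

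The step I expect to be the real obstacle is exactly the bound on the recycled material: a single phase taken in isolation can still look as bad as the generic $\frac{t+1}{t-1}$ (discard the entire previous optimum $R_{i-1}$ and switch on a disjoint optimum of the same size), so the improvement cannot be purely local — it rests on the fact, which Lemma~\ref{lm:reduction} is engineered to preserve, that the optimum replacing $\tas_t$'s discarded solution can only reuse previously-revealed vertices that $\tas_t$ has itself already given up on, and that the switching rule makes this reservoir geometrically smaller than $\ralg_{i-1}$. If that per-phase bound turns out not to hold with room to spare, the fallback is a global amortization that charges the cost of recycling old material at the $i$-th switch against the earlier arrivals needed to create it (two disjoint large independent sets require twice as many vertices), again using Observation~\ref{obs:sum-max} only after this rebalancing. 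Making the charging precise and uniform, and dispatching the first phase (empty base, ratio $1$) as the base case, is where the care is required.
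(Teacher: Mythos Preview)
Your proposal follows the same scaffold as the paper---apply Lemma~\ref{lm:reduction}, decompose into phases, and bound the per-phase ratio via Observation~\ref{obs:sum-max}---and your endgame calculation (maximize over $a_i$, land on $\frac{t}{t-1}$) is also the paper's. The divergence, and the gap, is in the middle step.

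Your argument hinges on the claim $|R_i^{\mathrm{old}} \setminus R_{i-1}| \le \ralg_{i-1}/t$, which you yourself flag as the obstacle. The justification you offer does not work: the switching condition $\alg_{i-1} < \ralg_{i-1}/t$ bounds $|R_{i-2} \cup A_{i-1}|$, the size of $\tas_t$'s solution just \emph{before} switch $i-1$; it says nothing about independent sets in $G_{i-1}$ that happen to avoid $R_{i-1}$. In general $R_i^{\mathrm{old}}$ is merely an independent set in $G_{i-1}$ and can a priori have size up to $\ralg_{i-1}$, which only recovers the generic $\frac{t+1}{t-1}$. The fact that your claim holds with near-equality in the complete-bipartite example (where $|R_i^{\mathrm{old}}| = \ralg_{i-2} \approx \ralg_{i-1}/t$) is suggestive but not a proof, and your fallback ``global amortization'' is left entirely vague.

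The paper does not try to bound recycled material. Its key move is to look \emph{two} switches back: the arrival count in phase $i+1$ is controlled via $\opt_{i+1}-\opt_{i-1}$ (the point being that $\opt_{i+1}$ can be constructed by extending an independent set of size at least $\opt_{i-1}$ already present in the graph), and the bridge from $\opt_{i-1}$ to $\opt_i$ is then supplied by the switching condition at the end of phase $i$. This injects $k'$---the number of type-2 vertices from the \emph{previous} phase---into the denominator alongside the current phase's $k$. Optimizing jointly over $k$ and $k'$ (worst case $k=k'=0$) is exactly what tightens $\frac{t+1}{t-1}$ to $\frac{t+1}{t-1/t}=\frac{t}{t-1}$. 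Your accounting tracks only the current phase's parameters and so cannot see this two-phase coupling; that is the missing idea.
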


\begin{proof} \textbf{(Ideas.)}
We show that, for any reduced instance from Lemma~\ref{lm:reduction}, $\tas_t$ will incur at most $\frac{t}{t-1}$ amortized recourse, and thus that this upper bound holds for any instance. To do this, we use the same phase partition argument as in the proof of Theorem~\ref{thm:general_UB} combined with Observation~\ref{obs:sum-max}.
We consider a scheme in which each newly-revealed vertex carries budget~$B$, and the vertices revealed in phase $i+1$ must pay the full cost of the recourse incurred by switch $i+1$. 
If the total budget carried by these newly-revealed vertices is at least $ALG + \opt_{i+1}$, the amortized recourse is $B$.

We can show that the number of vertices revealed in phase $i+1$ that are part of $\opt_{i+1}$ is bounded above by $\opt_{i+1} - \opt_{i-1}$, which implies that it is sufficient for the budget to satisfy $B \ge \frac{ALG + \opt_{i+1}}{\opt_{i+1} - \opt_{i-1}}$.
Furthermore, we incorporate both the number of vertices in phase $i+1$ that are accepted by $\tas_t$ and the number of vertices revealed in phase $i$ that are accepted by $\tas_t$ into our analysis and show that the lower bound on the required budget is largest when there are no such vertices.

We conclude that it is sufficient for each newly-revealed vertex to carry budget $B = \frac{t}{t - 1}$. Thus, $\tas_t$ is $t$-competitive while incurring at most $\frac{t}{t - 1}$ amortized recourse.
\end{proof}

\begin{restatable}{theorem}{thmISlb}
For any $1 < t \le 2$, $\varepsilon > 0$, and $t$-competitive deterministic online algorithm, there exists an instance for which the algorithm incurs at least $\frac{1}{t-1} - \varepsilon$ amortized recourse.
\end{restatable}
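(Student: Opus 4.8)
The plan is to construct an explicit adversarial instance for the maximum independent set problem in the vertex-arrival model that forces any $t$-competitive deterministic online algorithm (with recourse) to pay roughly $\frac{1}{t-1}$ per vertex. The natural building block is a long chain of disjoint ``gadgets,'' each of which traps the algorithm into repeatedly accepting and then late-rejecting vertices. The simplest gadget is a path or a star-like configuration: first reveal a batch of mutually non-adjacent vertices $v_1,\dots,v_k$ (an independent set on its own). To stay $t$-competitive against the optimum $k$, the algorithm must accept at least $k/t$ of them. Then reveal a single new vertex $u$ adjacent to all currently-accepted vertices among $v_1,\dots,v_k$, together with enough additional private non-neighbors of $u$ so that the new optimum forces $u$ into the solution; this obligates the algorithm to late-reject all of its previously accepted $v_i$'s (at least $k/t$ of them) in order to stay feasible and competitive. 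Iterating this over $m$ independent gadgets, the total recourse is at least $m \cdot (k/t - O(1))$ while the total number of revealed vertices is $\Theta(mk)$, giving an amortized recourse approaching $\frac{1}{t}$ — which is not yet strong enough.

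To push the bound up to $\frac{1}{t-1}-\varepsilon$, I would refine the gadget so that after the forced late-rejections, the algorithm is again left with a fresh independent set that it must re-populate, so the accept/reject cycle compounds. Concretely, after $u$ arrives and the $v_i$'s are rejected, reveal a second independent batch that is non-adjacent to $u$; competitiveness now forces the algorithm to accept $\approx 1/t$ fraction of the union of the surviving old accepted set and the new batch, and a subsequent adjacency-introducing vertex forces another round of rejections. Setting up the recursion so that at each round the algorithm is forced to hold $\approx t/(t-1)$ vertices but can only ever keep a $1/t$ fraction after each adjacency revelation — using the bound $1 \le t \le 2$ to guarantee the fractions stay in a consistent range — yields the amortized cost $\frac{1}{t-1}$ in the limit, with the $-\varepsilon$ absorbing the lower-order additive terms and the finite length of the construction. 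The key inequality throughout is the competitiveness constraint $\mathrm{ALG} \ge \mathrm{OPT}/t$ applied at each stage, combined with the feasibility constraint that no edge has both endpoints accepted.

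The main obstacle I anticipate is controlling the optimum precisely at every step: introducing the adjacency-creating vertex $u$ changes $\opt$, and I must attach enough private pendant vertices to $u$ (non-adjacent to everything else in the current gadget) so that $u$ genuinely belongs to every near-optimal solution, while making sure these pendants do not themselves give the algorithm a cheap way to regain competitiveness without rejecting the $v_i$'s. Balancing the gadget sizes so that the per-vertex amortized cost is maximized — i.e., choosing $k$ large relative to the number of ``overhead'' vertices per gadget, and choosing the number of rounds so the limit $\frac{1}{t-1}$ is approached — is a routine optimization once the gadget is correct, so I would state it as a limiting argument ($k \to \infty$, number of gadgets $\to \infty$) rather than computing exact constants. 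I would also need to verify that the adversary's choices are valid against an arbitrary deterministic algorithm (not just a specific one), which is handled in the standard way: the adversary observes the algorithm's current assignment before deciding which vertices to make adjacent to which, so "the accepted $v_i$'s" is well-defined at the moment $u$ is revealed.
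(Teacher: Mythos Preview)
Your proposal identifies the right mechanism (iterated forced rejections driven by the competitiveness constraint) but leaves the crucial step unfinished: you yourself note that a single gadget round gives only $\frac{1}{t}$ amortized recourse, and your ``compounding'' refinement to reach $\frac{1}{t-1}$ is only sketched. The sentence ``Setting up the recursion so that at each round the algorithm is forced to hold $\approx t/(t-1)$ vertices but can only ever keep a $1/t$ fraction'' is exactly the content of the theorem, not a proof of it; you have not exhibited a gadget where these numbers actually hold. The obstacle you flag --- controlling $\opt$ precisely while attaching pendants and new batches --- is real, and your proposal does not resolve it: once you mix surviving $v_i$'s, the vertex $u$, its pendants, and a fresh batch, both $\opt$ and the set the algorithm is \emph{forced} to reject become hard to pin down, and it is not clear the per-round loss stays at $1/t$ of the running total rather than of the new increment.

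The paper sidesteps all of this with a single clean construction: a \emph{complete bipartite} graph where the adversary always adds vertices to the side \emph{not} containing the algorithm's current solution. Because every independent set lies entirely in one side, $\opt$ is simply the size of the larger side, and any time the algorithm regains $t$-competitiveness it must discard its \emph{entire} current solution and rebuild on the other side. Each such switch costs at least $\frac{1}{t}$ amortized over the current graph size, and between switches the graph grows by at most a factor $t$, giving the recurrence $f(i)=\frac{1}{t}+\frac{1}{t}f(i-1)$ with $f(1)=1$ (this initial value is where $t\le 2$ is used). The recurrence solves to $\frac{(t-2)(1/t)^{i-1}+1}{t-1}\to\frac{1}{t-1}$. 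If you want to salvage your gadget approach, the missing ingredient is precisely this geometric accumulation $\sum_{j\ge 1} t^{-j}=\frac{1}{t-1}$, and the bipartite graph is the simplest structure that realizes it without the bookkeeping headaches you anticipate.
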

\hide{
\begin{proof} \textbf{(Ideas.)}
Consider any $t$-competitive online algorithm against an adversary that constructs a complete bipartite graph and only reveals new vertices in the partition which does not contain the algorithm's current solution.
This means that the maximum number of vertices that the algorithm's solution can contain will only increase if the algorithm moves its solution from one partition to the other.
Furthermore, in doing so, the algorithm is forced to late-reject all vertices in its old solution in order to late-accept all vertices in its new solution.

We use the structure of this adversary to show that each partition-changing switch will incur at least $\frac{1}{t}$ recourse amortized over the size of the revealed graph when the switch occurs.
Furthermore, we show that there are at most $t$ times more vertices at switch $i + 1$ than at switch $i$.
From this, we derive a recurrence relation that bounds below the recourse incurred by all switches up to switch $i$ amortized over the size of the revealed graph when switch $i$ occurs: $f(i) = \frac{1}{t} + \frac{1}{t} f(i-1)$.
We further show that we can set the initial value of the recurrence relation as $f(1) = 1$.

Solving this recurrence relation, we obtain that the recourse incurred by all switches up to switch $i$ amortized over the size of the revealed graph when switch $i$ occurs is bounded below by $\frac{(t - 2)(\frac{1}{t})^{i-1} + 1}{t - 1}$. This lower bound applies for any online algorithm against the described adversary when said adversary terminates its input sequence after the algorithm's $i$-th switch.

Therefore, for any $1 < t \le 2$, $\varepsilon > 0$, and $t$-competitive deterministic online algorithm, there exists an instance for which the algorithm incurs at least $\frac{1}{t-1} - \varepsilon$ amortized recourse.
\end{proof}
}

\section{Maximum Cardinality Matching}\label{sec:Matching}

The $\tas_t$ algorithm greedily aligns with the yardstick solution completely and incurs a lot of recourse. 
However, for some of the elements whose value is changed, the alignment may not contribute to the improvement of the competitive ratio as much as the alignment of other elements.
This observation suggests that it may be possible to reduce the amount of amortized recourse while maintaining $t$-competitiveness by switching the solution only \emph{partially} into the yardstick. 
In this section, we show that the $\LGreedy$ algorithm by Angelopoulos et al.~\cite{DBLP:journals/jco/AngelopoulosDJ20}, which is in fact a $\tas_t$ algorithm that uses an optimal solution as the yardstick without aligning to it fully, incurs less amortized recourse for the $\MCM$ problem.

\smallskip

\runtitle{$\LGreedy$ algorithm~\cite{DBLP:journals/jco/AngelopoulosDJ20}.} The algorithm is associated with a parameter $L$. Throughout the process, the $\LGreedy$ algorithm partially switches its solution to the optimal once by eliminating all augmenting paths with length at most $2L+1$. 
That is, it late rejects all the edges selected by itself and late accepts all the edges in the optimal solution on the path. 

After applying late operations on all augmenting paths with at most $2L+1$ edges, every remaining augmenting path has length at least $2L+3$, and the ratio of the $\opt$ solution value to the $\LGreedy$ solution value is $\frac{\opt(P)}{\LGreedy(P)} \leq \frac{L+2}{L+1}$ on the component $P$.
Since the $\MCM$ problem can be solved in $O(n^{2.5})$ time, the following theorem holds by selecting $L = \lceil\frac{1}{t-1}\rceil-1$.
\begin{restatable}{theorem}{thmMCMcr}
\label{thm:MCMcr}
The $\LGreedy$ algorithm returns a valid matching with competitive ratio $\frac{L+2}{L+1}$ in $O(n^{3.5})$ time, where $n$ is the number of vertices in the final graph.
\end{restatable}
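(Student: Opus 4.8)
The plan is to establish the three claims in turn: that \LGreedy\ always holds a valid matching, that it is $\frac{L+2}{L+1}$-competitive, and that it runs in $O(n^{3.5})$ time. Fix the currently revealed graph and write $M$ for the matching held by \LGreedy, $M^{*}$ for a maximum matching of that graph, and $M'$ for the matching after the late operations of the current step. For validity, note that \LGreedy\ changes its matching only in two ways: the greedy acceptance of an arriving edge, which occurs only when both its endpoints are currently unmatched and so keeps a matching; and the partial switch, which acts independently along each augmenting path of $M\triangle M^{*}$ of length at most $2L+1$, late-rejecting the $M$-edges and late-accepting the $M^{*}$-edges on that path. Since these augmenting paths are pairwise vertex-disjoint components of $M\triangle M^{*}$ and swapping the two alternating edge sets along one alternating path yields a matching on that component, the combined result $M'$ is again a matching.

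For the competitive ratio, I would first record the bookkeeping identity $M'\triangle M^{*}=(M\triangle M^{*})\setminus S$, where $S$ is the edge set of the short augmenting paths: replacing, along each of those vertex-disjoint components, the $M$-part by the $M^{*}$-part deletes exactly those components from the symmetric difference and leaves every other component (cycle, even-length path, or long augmenting path) intact. Hence every augmenting-path component of $M'\triangle M^{*}$ has length at least $2L+3$ --- this is the structural statement preceding the theorem --- and so contains at least $L+1$ edges of $M'$ and at least $L+2$ edges of $M^{*}$. I would then classify the components: a component with strictly more $M'$-edges than $M^{*}$-edges would be an $M^{*}$-augmenting path in the graph, contradicting the maximality of $M^{*}$; a cycle or even-length path has equally many $M'$- and $M^{*}$-edges; and an augmenting path with $\ell\ge L+1$ edges of $M'$ has $\ell+1$ edges of $M^{*}$, hence a local ratio $\frac{\ell+1}{\ell}\le\frac{L+2}{L+1}$. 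Writing $|M^{*}|=\opt$ and $|M'|$ for the value of the \LGreedy\ solution, and expressing both as sums over the components of $M'\triangle M^{*}$ together with the shared edges of $M\cap M^{*}$ (each of which contributes ratio $1$), Observation~\ref{obs:sum-max} gives $\frac{\opt}{|M'|}\le\max_{P}\frac{|M^{*}\cap P|}{|M'\cap P|}\le\frac{L+2}{L+1}$; since $M^{*}$ is maximum the reverse ratio is at most $1$, so the competitive ratio is at most $\frac{L+2}{L+1}$. Finally, with $L=\lceil\frac{1}{t-1}\rceil-1$ one has $L+1\ge\frac{1}{t-1}$, hence $\frac{L+2}{L+1}=1+\frac{1}{L+1}\le t$.

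For the running time, each time the algorithm acts its work is dominated by computing a maximum matching $M^{*}$ of the current graph, which takes $O(n^{2.5})$ time; forming $M\triangle M^{*}$, extracting its components, deciding which augmenting ones have at most $2L+1$ edges, and performing the swaps all cost $O(n+m)$, which is dominated. The algorithm needs to act at most $O(n)$ times --- in the vertex-arrival model, once per vertex arrival; in the edge-arrival model one obtains the same bound, using that $|M|$ is non-decreasing and bounded by $n/2$ so that only $O(n)$ late operations are ever performed --- whence the total time is $O(n\cdot n^{2.5})=O(n^{3.5})$.

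The step I expect to be the main obstacle is the bookkeeping in the competitive-ratio argument: one must verify that the partial switch neither creates a new short augmenting path nor shortens a surviving one, which is exactly the content of the identity $M'\triangle M^{*}=(M\triangle M^{*})\setminus S$, and that no surviving component is $M^{*}$-augmenting, which rests on $M^{*}$ being a \emph{maximum} matching and not merely a matching. A secondary point is to check that every surviving augmenting component contains at least one $M'$-edge, so that its local ratio is well defined; this is immediate from the length lower bound $2L+3$, and also follows from the fact that the greedy step keeps $M$, hence $M'$, maximal.
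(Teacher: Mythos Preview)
Your competitive-ratio argument is correct and follows the same component-decomposition approach as the paper: both of you work in the symmetric difference $M\triangle M^{*}$, classify its components by parity, bound each augmenting component's local ratio by $\frac{L+2}{L+1}$, and then invoke Observation~\ref{obs:sum-max}. Your bookkeeping identity $M'\triangle M^{*}=(M\triangle M^{*})\setminus S$ makes explicit a step the paper leaves implicit, and your validity paragraph fills in something the paper does not prove at all.

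One small gap: your running-time argument for the edge-arrival model does not go through as written. Bounding the total number of late operations by $O(n)$ via the monotonicity of $|M|$ does not bound the number of times a maximum matching must be recomputed, because an arriving edge can create a new short augmenting path even when $|M|$ does not change at that step; you would still have to detect that path. As stated, the $O(n)$ count of ``actions'' is justified only in the vertex-arrival model. The paper itself does not address this point either, simply asserting the $O(n^{3.5})$ bound after noting that maximum matching takes $O(n^{2.5})$ time.
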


\hide{
\begin{proof}
After applying the late operations on all the augmenting path with at most $2L+1$ edges, every remaining augmenting path has length at least $2L+3 = (L+2) + (L+1)$, and the ratio of the $\opt$ size to the $\LGreedy$ size $\frac{\opt(P)}{\LGreedy(P)} \leq \frac{L+2}{L+1}$ on the component $P$.

By selecting $L = \lceil \frac{1}{t-1}\rceil-1$, the $\LGreedy$ algorithm eliminates all augmenting paths that has length at most $2L+1 = 2\lceil \frac{1}{t-1}\rceil-1$. 

Afterwards, every the remaining augmenting path has length at least $2\lceil \frac{1}{t-1}\rceil+1$, and the algorithm attains a competitive ratio at most $\frac{\lceil\frac{1}{t-1}\rceil+1}{\lceil \frac{1}{t-1}\rceil}$.

\runtitle{$t$-competitiveness.} We first show that after switching every augmenting path with length at most $2L+1$, the $\LGreedy$ algorithm attains a competitive ratio of $t$. 
Consider the disjunctive union of the algorithm's matching $M$ and the optimal matching $M^*$, $M \triangle M^*$. The disjunctive union consists of connected components. Let $M_i$ and $M_i^*$ be the edges in the $i$-th component from $M$ and from $M^*$, respectively. According to our algorithm, there are two properties of the connected components:
\begin{enumerate}[label={(\bfseries P\arabic*)}]
\setlength\itemsep{0em}
    \item In any component with odd size, $\frac{|M_i^*|}{|M_i|} \leq \frac{\lceil \frac{1}{t-1}\rceil+1}{\lceil \frac{1}{t-1}\rceil}$.
    \item In any component with even size, $|M_i^*| = |M_i|$.
\end{enumerate}
Therefore, the competitive ratio by the $\LGreedy$ algorithm is at most $\frac{\sum_i |M_i^*|}{\sum_i |M_i|} \leq \max_i \frac{|M_i^*|}{|M_i|} \leq \frac{\lceil\frac{1}{t-1}\rceil+1}{\lceil\frac{1}{t-1}\rceil} = 1 + \frac{1}{\lceil\frac{1}{t-1}\rceil} \leq 1 + \frac{1}{1/(t-1)} = 1+(t-1) = t$.
\end{proof}
}

Since it was shown that to achieve $1.5$-competitiveness, every vertex incurs at most $2$ recourse, we consider a target competitive ratio $1<t<2$ and have the following theorem. Note that $1<t^*<2$, thus $0<\frac{t^*-1}{3-t^*}<1$.
\begin{restatable}{theorem}{thmMCMub}
\label{thm:matchingUB}
For the $\MCM$ problem in the edge/vertex-arrival model, the $\LGreedy$ algorithm is $t$-competitive for any $1<t<2$ and incurs at most $\frac{(2-t^*)}{(t^*-1)(3-t^*)}+\frac{t^*-1}{3-t^*}$ amortized recourse, where $t^*$ is the largest number such that $t^* \leq t$ and $t^*= 1 +\frac{1}{j}$ for some integer $j$.
\end{restatable}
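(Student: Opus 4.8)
The plan is to treat the two assertions separately. For $t$-competitiveness, I would invoke Theorem~\ref{thm:MCMcr} with the parameter choice $L:=\lceil\frac{1}{t-1}\rceil-1$. By the definition of $t^*$ we have $L+1=\lceil\frac{1}{t-1}\rceil=\frac{1}{t^*-1}$, so $\frac{L+2}{L+1}=1+\frac{1}{L+1}=t^*\le t$; hence $\LGreedy$ is $t^*$-competitive, a fortiori $t$-competitive, and runs in $O(n^{3.5})$ time. It is also convenient to rewrite the target: substituting $t^*=\frac{L+2}{L+1}$ into $\frac{2-t^*}{(t^*-1)(3-t^*)}+\frac{t^*-1}{3-t^*}$ yields the equivalent expression $\frac{L(L+1)}{2L+1}+\frac{1}{2L+1}=\frac{L^2+L+1}{2L+1}$, and it is this form I would actually bound.

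For the recourse, I would mirror the phase decomposition in the proof of Theorem~\ref{thm:general_UB}, but exploit that $\LGreedy$ switches only \emph{partially}. A \emph{phase} consists of all arrivals after the previous partial switch up to and including the arrival that triggers the next one, together with that switch (the batch of augmentations eliminating every augmenting path of length at most $2L+1$). By Observation~\ref{obs:sum-max} it suffices to bound, for every phase, the ratio of the recourse incurred in it to the number of elements (edges) arriving in it. If the phase-ending switch augments along paths of half-lengths $\ell_1,\dots,\ell_q\le L$, then, since augmenting along a path of length $2\ell_s+1$ late-rejects $\ell_s$ edges and late-accepts at most $\ell_s+1$ edges, the recourse of the phase is at most $\sum_{s=1}^q(2\ell_s+1)\le q(2L+1)=(2L+1)(\LGreedy^{+}-\LGreedy^{-})$, where $\LGreedy^{-},\LGreedy^{+}$ are the algorithm's matching sizes just before and just after the switch.

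The core of the argument is therefore a matching lower bound on the number of arrivals per phase. I would combine three facts: (i) the optimum grows by at most $1$ per arrival, and (for a maximization problem) a switch is triggered only once the optimum exceeds $t$ times the current matching, so the phase contains at least $\opt^{+}-\opt^{-}$ arrivals, where $\opt^{-}$ is the optimum just after the previous switch (at which point $\LGreedy\ge\opt^{-}/t^*$) and $\opt^{+}>t\cdot\LGreedy^{-}$ at the trigger; (ii) $\LGreedy$'s matching is maximal at every step, so $\LGreedy^{-}\ge\opt^{+}/2$ and $\LGreedy^{-}$ exceeds the post-previous-switch size by the number $g$ of greedy acceptances in the phase, all of which are arrivals; and (iii) every augmenting path eliminated in the switch consists of $2\ell_s+1$ edges each of which was either assigned for the first time during this phase or accepted/rejected in an earlier phase, and re-using a region that a previous switch already aligned to the yardstick cannot by itself trigger a switch, since undoing such a region only improves the ratio (below $t^*$). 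Quantifying (iii) — how many fresh edges the adversary must reveal to build an augmenting path of half-length $\ell$, and how that competes against the recourse it buys — and then optimizing over $q$, the $\ell_s$'s, and the evolution of the $\opt$- and $\LGreedy$-values, I expect the worst case to be a phase whose switch augments a single path grown step by step from half-length $1$ to $L$, forcing $2L+1$ fresh edges against $L^2+L+1$ recourse; via Observation~\ref{obs:sum-max} this gives exactly $\frac{L^2+L+1}{2L+1}$ and hence the claimed amortized bound.

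The step I expect to be the main obstacle is making (iii) and the ensuing optimization rigorous: one has to track augmenting paths that span edges contributed by several earlier phases (so the $2\ell_s+1$ edges are not all fresh), rule out the adversary obtaining recourse "for free" by repeatedly perturbing already-stabilized regions, and show that the three constraints above, taken together, drive the per-phase ratio down to $\frac{L^2+L+1}{2L+1}$ rather than merely to the crude $2L+1$ that fact (i) alone would give.
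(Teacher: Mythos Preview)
Your treatment of $t$-competitiveness is correct and matches the paper: with $L=\lceil\frac{1}{t-1}\rceil-1$ one has $\frac{L+2}{L+1}=t^*\le t$ by Theorem~\ref{thm:MCMcr}, and your rewriting of the target bound as $\frac{L^2+L+1}{2L+1}$ is correct and useful.

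The recourse argument, however, rests on the wrong decomposition and has a real gap. First, $\LGreedy$ does not trigger a switch when the global ratio $\opt/\LGreedy$ exceeds $t$; it triggers whenever an augmenting path of length at most $2L+1$ appears, irrespective of the global ratio, so your fact~(i) is based on a misreading of the algorithm. Second, and more fundamentally, even with the correct trigger the phase decomposition cannot deliver the bound through Observation~\ref{obs:sum-max}. On a single growing path, the phase ending with the augmentation of a length-$(2\ell+1)$ path contains just two newly arrived edges (those extending the path from length $2\ell-1$) but incurs $2\ell$ late operations, so its individual ratio is $\ell$; taking the maximum over phases therefore yields $L$, not $\frac{L^2+L+1}{2L+1}$. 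Your clause~(iii) is precisely an attempt to escape this by charging a phase's recourse to edges revealed in earlier phases, but once you do that you are no longer running a phase argument---and you rightly flag~(iii) as the unresolved step.

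The paper avoids all of this by decomposing spatially rather than temporally. It applies Observation~\ref{obs:sum-max} to the connected components $C_i$ of the union of the edges chosen by $\LGreedy$ and by $\opt$, with $\tr_i$ the total recourse ever incurred on the edges of $C_i$. A component that is a $(2n+1)$-edge path has, over its history, been augmented at lengths $3,5,\dots,2n+1$ (the augmentation at length $2k+1$ costing $2k$), so $\tr_i\le 1+\sum_{k=1}^{n}2k=1+n(n+1)$; since $\frac{1+n(n+1)}{2n+1}$ is increasing in $n$ and $\LGreedy$ never augments beyond half-length $L$, the maximum over components is exactly $\frac{L^2+L+1}{2L+1}$. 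The missing idea in your plan is this component-wise amortization across the whole run, in place of a per-phase accounting.
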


\begin{proof} \textbf{(Ideas.)}
Consider the connected components generated by the union of edges chosen by $\LGreedy$ or by $\opt$. Let $C_i$ be the components in the graph and $\tr_i$ be the total recourse incurred by the elements in $C_i$ (from very beginning till the end), the amortized recourse given by the whole graph will be upper-bounded by $\max_i \frac{\tr_i}{|C_i|}$ (Observation~\ref{obs:sum-max}). 

By selecting $L = \lceil\frac{2-t}{t-1}\rceil$, the path eliminations only happen at odd-size components with length from $3$ to $2(\lceil \frac{1}{t-1}\rceil-1)+1$ (note that $\lceil \frac{1}{t-1}\rceil \geq 2$ since $1<t<2$). 
Moreover, for such a $(2k+1)$-edge augmenting path, the total recourse incurred by the $2k+1$ elements in the path is at most $1+\sum_{k=1}^{\lceil\frac{1}{t-1}\rceil-1} 2k$.
Hence, the amount of amortized recourse incurred by this component is at most $\frac{(\lceil\frac{1}{t-1}\rceil-1)\cdot\lceil\frac{1}{t-1}\rceil+1}{2\lceil\frac{1}{t-1}\rceil-1}$.

If $t = 1 + \frac{1}{j}$ for some integer~$j$, the amortized recourse for a component is at most 
$\frac{(\lceil\frac{1}{t-1}\rceil-1)\cdot\lceil\frac{1}{t-1}\rceil+1}{2\lceil\frac{1}{t-1}\rceil-1} =\frac{2-t}{(t-1)(3-t)} + \frac{t-1}{3-t}$.
It can be adapted to the case in which there is no integer $j$ such that $t = 1 + \frac{1}{j}$ by rounding down $t$ to the largest $t^*\leq t$ such that $t^* = 1+\frac{1}{j}$ for some integer $j$. 
By eliminating all augmenting paths that have length at most $\frac{2}{t^*-1}-1$, the amount of incurred amortized recourse is at most $\frac{2-t^*}{(t^*-1)(3-t^*)} + \frac{t^*-1}{3-t^*}$, and the algorithm attains a competitive ratio of $t^* \leq t$.
\end{proof}

\begin{restatable}{theorem}{thmMCMlb}
No deterministic $t$-competitive online algorithm can incur amortized recourse less than $\frac{(2-t^*)}{(t^*-1)(3-t^*)}$ in the worst case.
\end{restatable}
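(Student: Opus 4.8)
The plan is to exhibit an adversarial instance for which every $t$-competitive deterministic online algorithm is forced into a pattern of costly partial realignments, and then amortize. The natural adversary is a variant of the classical path-lengthening adversary used for online matching with recourse (and dual to the independent set lower bound sketched earlier in the excerpt). First I would let $t^\ast = 1 + \frac{1}{j}$ be the largest number of this form with $t^\ast \le t$; as in Theorem~\ref{thm:matchingUB}, the relevant regime is $1 < t < 2$, so $j \ge 1$ and $\lceil \frac{1}{t-1}\rceil = j = \frac{1}{t^\ast - 1}$. The adversary builds a long path one edge (or one vertex, depending on the arrival model) at a time. After a prefix forming a path with $2k$ edges, the optimal matching has size $k$ and any matching the algorithm holds has size $k$ as well; when the adversary extends the path by one more edge to length $2k+1$, $\opt$ jumps to $k+1$ while the algorithm's current matching, if left untouched, still has size $k$, giving ratio $\frac{k+1}{k}$. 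To stay $t$-competitive the algorithm must, by the time the path has length $2k+1$ with $\frac{k+1}{k} > t^\ast$, i.e. $k < \frac{1}{t^\ast - 1} = j$, flip the entire alternating path: reject its $k$ chosen edges and accept the $k+1$ complementary edges. The key point is that a path of length $2k+1$ cannot be made $t$-competitive by flipping a proper alternating subpath without flipping the whole thing, because any shorter flip leaves a residual augmenting path and the ratio on that component stays at least $\frac{k'+1}{k'} \ge \frac{k+1}{k} > t^\ast$ for the shortest residual piece. Hence the total recourse charged to a component that grows to an augmenting path of length $2k+1$ and is then flipped is at least $2k$ (the $k$ late-rejections plus $k$ of the late-acceptances, or more carefully $1 + \sum_{i=1}^{k} 2i$ type counting if the path is grown incrementally through successive odd lengths).

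Next I would carry out the amortization. The adversary grows a single augmenting component through the odd lengths $3, 5, \dots, 2j-1$ (it stops at $2j - 1$ because at length $2j+1$ the ratio $\frac{j+1}{j} = t^\ast \le t$ is already acceptable, so no further forced flip occurs); at each stage the algorithm is forced to realign the whole path. Summing, the total recourse incurred on the component is at least $\sum_{k=1}^{j-1} 2k = (j-1)j$ — matching the numerator in the upper bound of Theorem~\ref{thm:matchingUB} — and the final component has $2j-1 = \frac{2}{t^\ast-1} - 1$ edges. Using Observation~\ref{obs:sum-max} in reverse (concatenating many disjoint copies of this gadget so the global amortized recourse equals the per-component ratio), the amortized recourse is at least
\[
\frac{(j-1)j}{2j-1} \;=\; \frac{\bigl(\frac{1}{t^\ast-1}-1\bigr)\cdot \frac{1}{t^\ast-1}}{\frac{2}{t^\ast-1}-1} \;=\; \frac{(2-t^\ast)}{(t^\ast-1)(3-t^\ast)},
\]
which is the claimed bound. (If one wants the bound to hold without the ``$+\varepsilon$'' slack, one concatenates $N$ copies and lets $N \to \infty$; the boundary effects from the last, unflipped stage of each copy vanish in the limit, exactly as in the independent set lower bound argument.)

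The main obstacle I expect is the first technical claim: that a $t$-competitive algorithm really is \emph{forced} to flip the \emph{entire} path rather than being clever about which alternating subpath to realign, and that it must do so \emph{at the moment} the ratio would otherwise exceed $t^\ast$ rather than deferring. For the ``entire path'' part, the argument is that the graph revealed so far is literally a single path, so the symmetric difference of the algorithm's matching and $\opt$ is that whole path; any feasible matching on a path of $2k+1$ edges has size either $k$ or $k+1$, and to reach size $k+1$ the algorithm must change every edge of the path. For the ``right moment'' part, I would note the adversary is free to stop immediately after any forced stage, so for the algorithm to guarantee $t$-competitiveness on \emph{every} prefix it has no choice but to be $t$-competitive precisely when the path reaches each odd length — it cannot gamble that the path will grow further. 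A secondary, more bookkeeping-level obstacle is making the incremental charging ($1 + \sum 2k$ versus $(j-1)j$) match the upper bound's numerator exactly and handling the edge-arrival versus vertex-arrival models uniformly; both are routine once the path-flip structure is pinned down, and I would handle the vertex-arrival case by revealing the path's vertices in order so that each new vertex extends the path by exactly one edge.
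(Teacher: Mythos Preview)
Your approach is the paper's: grow a single path to length $2n+1$ where $n=j-1=\frac{2-t^*}{t^*-1}$, observe that $t$-competitiveness pins the algorithm to the unique perfect matching at every odd length $2k+1$ for $k\le n$, charge $2k$ recourse for each transition from length $2k-1$ to $2k+1$, and sum to $n(n+1)$ over $2n+1$ edges. The paper packages this as three invariants (I1, I2: the algorithm holds exactly $k+1$, respectively $k$, edges at odd, respectively even, lengths; I3: each phase costs at least $2k$) and proves I1, I2 by a direct competitive-ratio calculation. Your concatenation-of-copies step is unnecessary --- a single path already realizes the ratio exactly --- and your side remark ``any feasible matching on a path of $2k+1$ edges has size either $k$ or $k+1$'' is false (every size from $0$ to $k+1$ is feasible); what you actually need is the paper's I1, namely that $\frac{k+1}{k}>t$ for $k\le n$ forces the size to be exactly $k+1$.

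The obstacle you flag is the real one, and your resolution does not close it. At odd lengths the perfect matching is unique, but at the intermediate even length $2k$ there are $k+1$ distinct maximum matchings and the algorithm may choose among them. Your claim ``to reach size $k+1$ the algorithm must change every edge of the path'' presumes that the algorithm's even-length matching is disjoint from the next perfect matching; but after the adversary adds the first edge of the phase, the algorithm can, at the cost of one late operation, shift to a matching that leaves an \emph{interior} vertex unmatched, and such a matching overlaps each of the two candidate perfect matchings at length $2k+1$ (one for each end the adversary might extend next). With that hedge the phase costs only $2k-1$ rather than $2k$. The paper's proof of I3 asserts the same disjointness without ruling this manoeuvre out, so your gap is the paper's gap as well; to make the argument airtight one needs either a sharper adaptive adversary or a global charging scheme that does not decompose into independent per-phase bounds.
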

\hide{
\hhl{
\begin{proof} \textbf{(Ideas.)}
If $\frac{n+2}{n+1} \leq t < \frac{n+1}{n}$ for some integer $n \geq 1$, we release a sequence of $2n+1$ edges that form a path.

Consider any $1 \leq k \leq n$, the following invariants hold for any $t$-competitive algorithm (the full proof for the invariants is in Appendix):
\begin{enumerate}[label={(\bfseries I\arabic*)}]
\setlength\itemsep{0em}
    \item For a path with length $2k+1$, a $t$-competitive algorithm has to accept $k+1$ edges.
    \item For a path with length $2k$, a $t$-competitive algorithm has to accept $k$ edges.
    \item When an instance is increased from a $2(k-1)+1$ path to a $2k+1$ path, a $t$-competitive algorithm incurs at least $2k$ amount of recourse.
\end{enumerate}

Given the invariants I1, I2, and I3, the $(2n+1)$-length path instance incurs recourse with total amount at least $\sum_{k=1}^n (2k) = n\cdot(n+1)$. Therefore, any $t$-competitive algorithm incurs at least $\frac{n\cdot(n+1)}{2n+1}$ amortized recourse for this instance. Let $t^* = \frac{n+2}{n+1} \leq t$. It follows that $n = \frac{2-t^*}{t^*-1}$. 
Therefore, the amortized recourse is at least $\frac{n\cdot(n+1)}{2n+1} = \frac{(2-t^*)}{(t^*-1)(3-t^*)}$.
\end{proof}
}
}

\section{Minimum Vertex Cover}\label{sec:VC}

In this section, we propose a special version of the $\tas_t$ algorithm, $\texttt{Duo-Halve}$, that attains a competitive ratio of $2-\frac{2}{\opt}$ for the $\MVC$ problem with optimal vertex cover size $\opt$ in polynomial time.
The \texttt{Duo-Halve} algorithm uses an optimal solution as the yardstick with $t= 2-\frac{2}{\opt}$. However, the computation of the optimal solution of $\VC$ is very expensive. Thus, we maintain a maximal matching greedily (as the well-known $2$-approximation algorithm for $\VC$) on the current input graph and only select vertices that are saturated by the matching.
If the $\PEA$ algorithm rejects two of these vertices, the competitive ratio is at most $2-\frac{2}{\opt}$.
We show that there always exists a feasible solution where the last two matched edges only contribute two vertices to the solution, or the optimal solution has size at least $|M|+1$, where $M$ is the current maximal matching, and the \texttt{Duo-Halve} algorithm is therefore $(2-\frac{2}{\opt})$-competitive.

\hide{During the process, the algorithm maintains a maximal matching $M(\ins)$ on the current input graph $\ins$ to construct a solution $\PEA(\ins)$ (we omit the parameter $\ins$ when the context is clear). 
The algorithm only selects vertices that are saturated by the matching and rejects as many vertices as possible from the two latest matched edges. 
It is clear that if we reject $2$ such vertices, the competitive ratio of the algorithm is at most $2-\frac{2}{\opt}$ since $\opt \geq |M|$. 
We show that if we cannot reject $2$ such vertices, the optimal solution size must be big so the competitive ratio is still $2-\frac{2}{\opt}$ (Theorem~\ref{thm:VCCompRatio}).}

In the following discussion, we use some terminology. 
Let $\me1$ and $\me2$ be the most and the second-most recently matched edges respectively. Also, let $\vmi$ be the vertices saturated by the maximal matching $M(\ins)$.
The $\pea$ algorithm partitions the vertices into three groups: \textbf{Group-1}: the endpoints of $\me1$ or $\me2$, \textbf{Group-2}: the vertices in $\vm$ but not in Group-1, and \textbf{Group-3}: the vertices in $V\setminus\vm$.
\hide{
A matched edge is \emph{full} if both of its endpoints are selected by the algorithm $\pea$. Otherwise, the edge is \emph{half}. 
The algorithm \emph{halves} a matched edge $(u,v)$ by producing a valid vertex cover while only accepting either $u$ or $v$.
A matched edge $(u,v)$ is \emph{halvable} if there exists a vertex cover that contains exactly one of $u$ and $v$. 
A half edge \emph{flips} if the accept/reject status of its endpoints is swapped.
A \emph{configuration} of a set of edges is a set of accept/reject statuses associated with each endpoint of those edges.
}

\runtitle{\texttt{Duo-Halve} Algorithm ($\PEA$).} 
When a new vertex $v$ arrives, if an edge $(p, v)$ is added to $M(\ins)$, then it introduces a new $\me1$ (namely $(p, v)$). The algorithm first accepts all \textbf{Group-2} vertices that are adjacent to $v$. 
Then, the algorithm decides the assignment of $\me1$ and $\me2$ and minimizes the number of accepted endpoints of $\me1$ and $\me2$.
If there is a tie, we apply the one that accepts fewer endpoints in $\me1$ and/or incurs less recourse.
\hide{
by testing if they can be both halved by the \texttt{HalveBoth procedure} as follows.

If $\me1$ is half or one of its endpoints is $v$, the $\PEA$ algorithm halves $\me2$ if it is valid giving the current configuration of $\me1$ or $v$ being accepted. 
Otherwise, $\PEA$ halves $\me2$ if it is valid by flipping $\me1$ or late-accepting $p$. 
Otherwise, in the vertex cover returned by $\PEA$, $\me1$ is full and $\me2$ is unchanged. (See Figure~\ref{fig:flow} in Appendix for a detailed flow diagram about this \texttt{HalveBoth} procedure.)  
}




The $\PEA$ algorithm returns a feasible solution in $O(n^3)$ time, where $n$ is the number of vertices in the graph.
Intuitively, the algorithm maintains a valid solution as it greedily covers edges using vertices in the maximal matching, with the exception of $\me1$ and $\me2$, where it carefully ensures that a feasible configuration is chosen.
Furthermore, the most computationally-expensive component of the $\PEA$ algorithm, which checks the validity of a constant number of configurations by looking at the neighborhoods of $\me1$ and $\me2$, runs in $O(n^2)$ time for each new element.
\hide{
\begin{restatable}{lemma}{lmVCtime}
\label{lm:vc-time}
The $\PEA$ algorithm always returns a valid vertex cover in $O(n^3)$ time, where $n$ is the number of vertices in the graph.
\end{restatable}
}

\hide{
\begin{observation}\label{obs:subset}
If $\opt \geq |M|+1$, then $\frac{\PEA}{\opt} \leq 2-\frac{2}{\opt}$.
\end{observation}
\hide{\begin{proof}
Since the $\PEA$ algorithm only accepts the vertices saturated by $M$, $\PEA \le 2|M|$. Therefore, $\frac{\PEA}{\opt} \le \frac{2|M|}{|M| + 1} = \frac{2|M|+2-2}{|M|+1}$ $= 2-\frac{2}{|M|+1}\leq 2-\frac{2}{\opt}$.
\end{proof}}

The above observation allows us to show another condition under which $\opt \ge |M| + 1$ and thus our desired bound of $2 - \frac{2}{\opt}$ on the competitive ratio is achieved.
The contrapositive of this lemma also provides us with a necessary characteristic of any ``problematic'' instance.

\begin{restatable}{lemma}{lmVCsubset}
\label{lm:subset}
If $\opt \setminus \vm \neq \varnothing$ or there exists an edge $(u,v)$ in $M$ such that $\{u,v\}\subseteq \opt$, then $\frac{\PEA}{\opt} \leq 2-\frac{2}{\opt}$.
\end{restatable}

\begin{corollary}\label{col:subset}
If $2-\frac{2}{\opt} < \frac{\PEA}{\opt}$, then $\opt \subseteq \vm$.
\end{corollary}
\hide{
\begin{proof}
This corollary follows directly as the contrapositive to Lemma~\ref{lm:subset}.
\end{proof}
}

We then present another observation, which helps with our analysis of $\PEA$ by providing a feasibility guarantee for any newly-revealed $\me1$.
\begin{observation}\label{obs:shift-me1}
Immediately after an edge is added to the matching, $\me1$ is halvable by accepting the newly-revealed vertex.
\end{observation}
}
\hide{\begin{proof}
Let $\me1 = (p, v)$, where $v$ is the newly-revealed vertex.
Prior to $v$ being revealed, $p$ is unmatched and thus rejected.
Furthermore, accepting $v$ covers all newly-revealed edges.
Therefore, $\me1$ is halvable by accepting $v$ and rejecting $p$.
\end{proof}}


We first show that if $\PEA$ fails to produce a solution where it accepts only one vertex of $\me1$, then $\opt \geq |M|+1$.
The intuition is that if $\PEA$ has to accept both endpoints of $\me1$, there must be at least one \textbf{Group-3} vertex in each of the endpoints' neighborhoods. Therefore, the optimal solution has to cover the corresponding edges with at least two vertices.  



\begin{restatable}{lemma}{lmVCfull}
\label{lm:full-me1_new}
In the assignment of $\pea$, if both endpoints of $\me1$ are selected, then the optimal solution must contain at least two vertices in $\me1 \cup (V \setminus \vm)$, and $\frac{\pea}{\opt} \leq 2-\frac{2}{\opt}$.
\end{restatable}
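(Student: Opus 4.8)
\textbf{Proof proposal for Lemma~\ref{lm:full-me1_new}.}

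The plan is to argue by contradiction about the structure of the neighborhoods of the two endpoints of $\me1$. Write $\me1 = (p, v)$. Suppose that in the assignment produced by $\pea$ both $p$ and $v$ are selected. Since $\pea$ minimizes the number of accepted endpoints of $\me1$ (breaking ties toward fewer endpoints of $\me1$), the fact that it could not reject either $p$ or $v$ means that neither rejection leads to a feasible assignment under \emph{any} choice for $\me2$ and the \textbf{Group-2} accepts. First I would use Observation~\ref{obs:shift-me1} (or rather its analogue in the full version): immediately after $(p,v)$ enters the matching, accepting $v$ alone covers every edge incident to $p$ that existed at that time, so the only way $\pea$ is later forced to also accept $p$ is that $p$ has acquired — actually no, $p$'s neighborhood is fixed once $v$ arrives in the vertex-arrival model. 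So the real reason $\pea$ must accept $p$ as well as $v$ is that $v$ has a neighbor $u_v \in V \setminus \vm$ (a \textbf{Group-3} vertex) that is not covered unless $v$ is accepted, \emph{and} $p$ has a neighbor $u_p \in V \setminus \vm$ that is not covered unless $p$ is accepted. Indeed, if every neighbor of $p$ outside $\me1$ were in $\vm$ and hence accepted by $\pea$ as a \textbf{Group-2}/\textbf{Group-1} vertex, then rejecting $p$ and keeping $v$ would be feasible, contradicting that $\pea$ accepted $p$. The symmetric statement holds for $v$: if all of $v$'s neighbors other than $p$ were covered independently of $v$, then $\pea$ would reject $v$.

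The key step is therefore to extract, from the assumption that both endpoints are accepted, two distinct \textbf{Group-3} vertices $u_p, u_v$ with $u_p \in N(p)\setminus\vm$ and $u_v \in N(v)\setminus\vm$. They are distinct because $u_p$ is adjacent to $p$ and $u_v$ is adjacent to $v$, and if $u_p = u_v$ then that single vertex together with, say, $v$ would cover both edges $(p,\cdot)$ and $(v,\cdot)$ — more carefully, one must check that a single shared Group-3 neighbor would actually let $\pea$ halve $\me1$, which it would, since accepting $v$ then covers $(p,u_p)=(p,u_v)$ too; so $u_p \neq u_v$. Now consider the optimal cover $\opt$. It must cover the edges $(p, u_p)$ and $(v, u_v)$. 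Since $u_p, u_v \notin \vm$ and $u_p \neq u_v$, every vertex that $\opt$ uses to cover these two edges lies in $\{p, v, u_p, u_v\} \subseteq \me1 \cup (V\setminus\vm)$; covering two vertex-disjoint-enough edges requires at least two vertices from this set. (If $\opt$ used only one vertex here it would have to be a common endpoint of both edges, but the only common vertex $p,v$ could share is impossible since $u_p\neq u_v$ and $p\neq v$.) This gives $|\opt \cap (\me1 \cup (V\setminus\vm))| \ge 2$, the first claim.

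For the competitive ratio bound, I would combine this with the counting argument underlying Observation~\ref{obs:subset}: $\pea$ accepts only vertices saturated by $M$, so $\pea \le 2|M|$, and $\opt \ge |M|$ always. The extra two vertices just identified must be counted \emph{on top of} the one-per-matched-edge lower bound: for each of the other $|M|-1$ edges of $M$, $\opt$ contains at least one endpoint, and these endpoints are in $\vm \setminus \me1$, hence disjoint from the two Group-3 vertices $u_p,u_v$; and $\opt$ contains at least two vertices from $\me1 \cup (V\setminus\vm)$ as shown. Hence $\opt \ge (|M|-1) + 2 = |M| + 1$, and then $\frac{\pea}{\opt} \le \frac{2|M|}{|M|+1} = 2 - \frac{2}{|M|+1} \le 2 - \frac{2}{\opt}$, using $\opt \ge |M|+1 \ge 1$ and monotonicity of $2 - 2/x$. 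The main obstacle I anticipate is the careful case analysis needed to justify that $\pea$'s failure to halve $\me1$ genuinely forces \emph{two distinct} Group-3 neighbors rather than just one: one has to rule out the degenerate configurations (a single shared Group-3 neighbor, or the constraint coming entirely from $\me2$ rather than from Group-3 vertices) by appealing to the precise tie-breaking and feasibility-checking behavior of the \texttt{HalveBoth} procedure, and to the fact that $\me2$'s endpoints are themselves in $\vm$ so accepting one of them is always available to $\pea$ as a fallback.
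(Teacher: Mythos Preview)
Your overall approach matches the paper's: show that if $\me1=(p,v)$ is full then each of $p$ and $v$ has a rejected neighbor in $V\setminus\vm$, deduce that $\opt$ must pick at least two vertices from $\me1\cup(V\setminus\vm)$, and then count to get $\opt\ge|M|+1$. The paper formalizes the first step via ``blocking sets'' and a three-class partition of rejected neighbors, which is exactly the case analysis you sketch in your closing paragraph about ruling out $\me2$-only obstructions.

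There is, however, a genuine error in your distinctness step. You assert that $u_p\neq u_v$ because ``accepting $v$ then covers $(p,u_p)=(p,u_v)$ too.'' This is false: $v$ is not an endpoint of the edge $(p,u_p)$, so accepting $v$ does nothing for that edge. If $u_p=u_v=u$, then $(p,u)$ remains uncovered when $p$ and $u$ are both rejected, so $\pea$ genuinely cannot halve $\me1$ by accepting $v$ alone. The paper explicitly allows $x=y$ (in its notation). Fortunately your downstream argument does not actually need distinctness: if $u_p=u_v=u$, the edges $(p,v),(p,u),(v,u)$ form a triangle, and any vertex cover must use at least two vertices from $\{p,v,u\}\subseteq\me1\cup(V\setminus\vm)$; if $u_p\neq u_v$, the three edges $(p,v),(p,u_p),(v,u_v)$ still require two vertices from $\{p,v,u_p,u_v\}$ since no single vertex is incident to all three. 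So simply drop the distinctness claim and argue both cases.

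A second, smaller gap: your sentence ``if every neighbor of $p$ outside $\me1$ were in $\vm$ and hence accepted by $\pea$'' is too quick, since the rejected endpoint of $\me2$ lies in $\vm$ but is not accepted. You correctly flag this in your final paragraph, but the fix is not merely ``accepting one of them is always available'': the point (which the paper makes explicit) is that $\pea$ is willing to make $\me2$ \emph{full} if that is what it takes to halve $\me1$, so an $\me2$-only obstruction cannot prevent halving $\me1$.
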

\vspace{-2ex}
\hide{
\begin{corollary}\label{col:full-me1}
In the assignment of $\pea$, if $\me1$ is full, then $ \frac{\pea}{\opt} \leq 2-\frac{2}{\opt}$.
\end{corollary}
}
\hide{
\begin{proof}
By Lemma~\ref{lm:full-me1} and Observation~\ref{obs:subset}.
\end{proof}
}
\hide{After taking care of the above preliminaries, we show that our desired bound of $2 - \frac{2}{\opt}$ on the competitiveness of the $\PEA$ algorithm is indeed attainable.
}
\begin{restatable}{theorem}{VCcr}
\label{thm:VCCompRatio_new}
The $\PEA$ algorithm is $(2-\frac{2}{\opt})$-competitive.
\end{restatable}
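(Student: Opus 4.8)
The plan is to establish $(2-\tfrac{2}{\opt})$-competitiveness by a case analysis on how the $\PEA$ algorithm configures the two latest matched edges $\me1$ and $\me2$, leaning on the earlier structural lemmas. First I would recall that $\PEA$ only accepts vertices saturated by the maximal matching $M$, so $\PEA \le 2|M|$; combined with $\opt \ge |M|$ (since $M$ is a matching and $\opt$ covers every matched edge), the only way the competitive ratio can exceed $2-\tfrac{2}{\opt}$ is if $\opt = |M|$ \emph{and} $\PEA = 2|M|$, i.e.\ every matched edge contributes both endpoints to $\PEA$. So it suffices to show that in every ``problematic'' instance one of these two fails. By Corollary~\ref{col:subset} we may assume $\opt \subseteq \vm$, and by Lemma~\ref{lm:subset} we may assume $\opt$ contains exactly one endpoint of every edge of $M$, so in particular $|\opt| = |M|$ is the tight case we must rule out.

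Next I would invoke Lemma~\ref{lm:full-me1_new}: if $\PEA$ accepts both endpoints of $\me1$, then $\opt$ must contain at least two vertices of $\me1 \cup (V\setminus\vm)$, contradicting $\opt\subseteq\vm$ together with $\opt$ having only one endpoint of $\me1$; hence in a problematic instance $\PEA$ halves $\me1$, accepting exactly one of its endpoints. The crux is then $\me2$: I must argue that $\PEA$ also halves $\me2$ unless doing so is structurally impossible, and that structural impossibility itself forces $\opt \ge |M|+1$ (which by the observation $\frac{\PEA}{\opt}\le 2-\frac{2}{|M|+1}\le 2-\frac{2}{\opt}$ closes the case). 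Concretely, I would examine the \texttt{HalveBoth} decision: $\PEA$ tries all feasible configurations of $\{\me1,\me2\}$ in which at most one endpoint of each is accepted (allowing a flip of $\me1$ or a late-accept of its non-$v$ endpoint $p$). If some such configuration is feasible, $\PEA$ adopts it and then $\PEA \le 2(|M|-2) + 2 = 2|M|-2 < 2|M|$, so $\frac{\PEA}{\opt} \le \frac{2|M|-2}{|M|} = 2 - \frac{2}{|M|} \le 2-\frac{2}{\opt}$. If no such configuration is feasible, I would show (this is the combinatorial heart of the argument) that every feasible vertex cover of the current graph — in particular the optimal one — must accept at least three of the four endpoints of $\me1\cup\me2$, or else accept a vertex outside $\vm$; either way $|\opt| \ge |M|+1$, and the observation finishes it. Here I would use Observation~\ref{obs:shift-me1} (that $\me1$ is halvable by accepting the newly-revealed vertex $v$) to control which obstructions can arise — the infeasibility of all half/half configurations must be caused by edges from the endpoints of $\me2$ into Group-3 or into the non-accepted endpoint of $\me1$.

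I would organize the write-up as: (1) reduce to the tight case $\opt=|M|$, $\PEA=2|M|$ via the counting bounds; (2) apply Corollary~\ref{col:subset} and Lemma~\ref{lm:subset} to pin down $\opt$'s intersection with each matched edge; (3) apply Lemma~\ref{lm:full-me1_new} to force $\me1$ to be halved; (4) the $\me2$ case split, using the explicit list of configurations $\PEA$ checks and Observation~\ref{obs:shift-me1}, concluding either $\PEA < 2|M|$ directly or $\opt \ge |M|+1$. Finally I would note that all quantities refer to the \emph{final} graph, so the competitive ratio bound holds for the instance as a whole.

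The main obstacle I anticipate is step (4): proving that whenever $\PEA$ is \emph{forced} to accept three or four endpoints among $\me1,\me2$, the optimal cover is genuinely forced to grow beyond $|M|$. This requires a careful argument that the local obstruction ($\me2$-endpoints having uncovered-by-alternatives neighbors) cannot be ``fixed'' globally by $\opt$ using vertices it would already be spending on other matched edges — i.e.\ one must show the extra vertex $\opt$ needs is genuinely additional. I expect this to hinge on the fact that $\opt$ restricted to the rest of the graph must still cover all of $M\setminus\{\me1,\me2\}$ with one endpoint each (from step (2)), so it has no ``spare'' vertices, and on a careful accounting of which Group-3 vertices are adjacent to $\me1$ versus $\me2$; handling the interaction when $\me1$ and $\me2$ share a Group-3 neighbor, or when $p$ (the old endpoint of $\me1$) is itself adjacent to an endpoint of $\me2$, will be the fiddliest subcase.
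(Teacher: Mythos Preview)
Your approach matches the paper's: show either $\PEA \le 2|M|-2$ or $\opt \ge |M|+1$, dispose of the full-$\me1$ case via Lemma~\ref{lm:full-me1_new}, and then analyze the state where $\me1$ is half but $\me2$ is full. One arithmetic slip in your step~(1): the problematic case is not only $\PEA = 2|M|$ but also $\PEA = 2|M|-1$ (with $\opt = |M|$ this gives ratio $2 - 1/|M| > 2 - 2/\opt$), so halving $\me1$ alone does not finish the argument --- you already implicitly know this, since you proceed to step~(4) and correctly target the threshold $\PEA \le 2|M|-2$ there.

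For step~(4), the paper's organization is cleaner than what you anticipate. Rather than enumerating configuration obstructions, it looks at the \emph{blocking set} of each $\me2$ endpoint (its set of currently rejected neighbors) and splits into three cases: if some blocking set contains a Group-2 vertex, then another half edge exists besides $\me1$ and $\PEA \le 2|M|-2$; if some blocking set contains only a vertex in $\me1$, then \texttt{HalveBoth}'s exhaustive search over $\me1$/$\me2$ configurations certifies that no half/half configuration is a valid cover, so any cover (in particular $\opt$) must take both endpoints of $\me1$ or both of $\me2$; if each blocking set contains a Group-3 vertex, then $\opt$ must take a vertex outside $\vm$ or both $\me2$ endpoints. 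This absorbs your ``fiddly subcases'' (shared Group-3 neighbors, adjacency between $p$ and $\me2$) into three clean buckets, and Observation~\ref{obs:shift-me1} is not needed directly here --- it was already consumed in the proof of Lemma~\ref{lm:full-me1_new}.
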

\begin{proof} \textbf{(Ideas.)}
In any possible solution provided by $\PEA$, there are three states based on the configuration of $\me1$ and $\me2$:
1) both $\me1$ and $\me2$ are half, 2) $\me1$ is half and $\me2$ is full, and 3) $\me1$ is full.
In state $1$, we can directly show that the bound holds since $\PEA \leq 2|M|-2$. The bound holds for state $3$ by Lemma~\ref{lm:full-me1_new}.

State $2$ requires more involved analysis.
If an endpoint of $\me2$ has a rejected \textbf{Group-2} neighbor, then $\PEA$ rejects at least two vertices in $V_M$ (this \textbf{Group-2} neighbor and $1$ $\me1$ vertex) and $\PEA \le 2|M| - 2$. 
Otherwise, if at least one endpoint of $\me2$ has no \textbf{Group-3} neighbor, then we can show that there is no solution based on the maximal matching containing only 2 \textbf{Group-1} vertices. 
This means that $\opt$ must contain either a \textbf{Group-3} vertex or 3 \textbf{Group-1} vertices, and thus $\opt \ge |M| + 1$.
Finally, if each endpoint of $\me2$ has a \textbf{Group-3} neighbor, then $\opt$ must either select a \textbf{Group-3} vertex or both endpoints of $\me2$, and $\opt \ge |M| + 1$.
\end{proof}

\hide{With the upper bound on the competitive ratio of the $\PEA$ algorithm proven, we then show an upper bound on the amortized recourse incurred by $\PEA$.
Prior to doing so, we show a restriction on the possible changes to the $\PEA$ algorithm's solution when a new edge is added to the maximal matching $M$.
By extension, this limits the possible number of late operations incurred in this scenario.

\begin{restatable}{lemma}{lmVCshift}
\label{lm:shift-me2}
When an edge shifts from $\me1$ to $\me2$, it either goes from full to half or remains unchanged.
\end{restatable}

In particular, Lemma~\ref{lm:shift-me2} tells us that a half edge shifting from $\me1$ to $\me2$ will not incur any late operations (i.e. it won't become full and it won't flip).

In the worst case, there can be multiple late accepts on \textbf{Group-2} vertices
, and $4$ late operations on vertices in $\me1$ and $\me2$. In the following theorem, we use a potential function to prove that the amortized late operations per vertex is at most $\frac{10}{3}$.
}
For a single newly-revealed vertex, the amount of recourse incurred can be up to $O(n)$.
Even if we restrict our consideration to $\me1$ and $\me2$, a single new vertex can incur recourse at most~$4$.
However, this cannot happen at every input.
We use a potential function to show that the amortized recourse incurred by $\PEA$ is at most $3.33$.

\begin{restatable}{theorem}{VCar}
\label{thm:VCUB}
The amortized recourse incurred by $\PEA$ is at most $\frac{10}{3}$.
\end{restatable}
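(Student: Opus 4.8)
The plan is to use the amortized-analysis framework via a potential function $\Phi$ defined on the current configuration of $\me1$ and $\me2$ (and possibly the accept/reject status of their endpoints and whether each is ``half'' or ``full''), and to show that for every newly-revealed vertex $v$, the actual recourse incurred plus the change in potential $\Delta\Phi$ is bounded by $\frac{10}{3}$. Summing over all $n$ vertices, the total recourse is at most $\frac{10}{3}n + \Phi_{\text{init}} - \Phi_{\text{final}}$, and since $\Phi \geq 0$ with $\Phi_{\text{init}} = 0$, the amortized recourse is at most $\frac{10}{3}$. The reason a potential is needed is exactly the observation flagged before the statement: a single vertex can force up to $4$ late operations (flipping or late-accepting/late-rejecting both endpoints of $\me1$ and $\me2$), so we must ``bank'' credit on steps that are cheap — in particular, steps where a new matched edge $\me1$ is created by accepting the fresh vertex $v$ and rejecting the previously-unmatched $p$, which by the feasibility guarantee (the halvable-by-accepting-$v$ observation) costs at most $1$ — to pay for the expensive reconfigurations later.

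First I would classify each arrival into cases: (a) $v$ does not get matched, in which case it is either a Group-3 vertex (rejected, zero recourse, possibly some late-accepts of Group-2 neighbors) or it joins Group-2; (b) $v$ gets matched, creating a new $\me1 = (p,v)$ and shifting the old $\me1$ to $\me2$ and the old $\me2$ out of Group-1 into Group-2. The crucial structural input is the earlier lemma that when an edge shifts from $\me1$ to $\me2$ it goes from full to half or stays unchanged (never flips, never becomes full) — so an edge leaving Group-1 to join Group-2 incurs at most $1$ late operation (its full-to-half conversion, a late-reject of one endpoint), and afterwards that edge is ``stable'' forever. Similarly I would invoke Observation/Lemma on Group-2 vertices: once accepted they stay accepted, so the only recourse on Group-2 arises from the bounded set of late-accepts triggered by a new neighbor, and these can be charged directly to the new vertex's budget since each such accepted Group-2 vertex is paid for exactly once over the whole run. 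That leaves the genuinely delicate accounting to the finite state machine on $\{\me1, \me2\}$: the potential should be larger precisely in those configurations from which an expensive transition is reachable (e.g. $\me1$ full, or $\me2$ full), so that entering such a configuration is paid for in advance and leaving it refunds credit toward the $4$-late-operation reconfiguration of \texttt{HalveBoth}.

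Concretely I would tabulate the constant number of reachable $(\me1, \me2)$-configuration pairs, assign each a potential value (likely multiples of $\frac{1}{3}$, given the target $\frac{10}{3}$), and verify the inequality $\text{recourse} + \Delta\Phi \leq \frac{10}{3}$ for each of the finitely many transitions induced by (i) a new Group-3 vertex, (ii) a new Group-2 vertex, and (iii) a new matched edge together with whatever \texttt{HalveBoth} does to $\me1$ and $\me2$. The transitions where $v$ is matched are the binding ones: the old $\me2$ leaves for at most $1$ operation, the old $\me1$ becomes the new $\me2$ and by Lemma~\ref{lm:shift-me2} costs at most $1$, the new $\me1 = (p,v)$ costs at most $1$ to set up halved (Observation on halvability by accepting $v$), and then the subsequent re-optimization of $\me1$/$\me2$ may cost up to a couple more — the potential must absorb this surplus, which is why $\frac{10}{3}$ rather than $3$ appears. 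The main obstacle will be pinning down the exact state space of \texttt{HalveBoth}'s behavior — i.e., which configurations of $\me1, \me2$ are actually reachable and which transitions between them actually occur — tightly enough that a single fixed potential assignment simultaneously satisfies all the per-transition inequalities with the constant $\frac{10}{3}$; this is a finite but fiddly case analysis, and getting the potential values right (so that no transition overshoots) is the crux.
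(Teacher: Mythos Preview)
Your overall framework is right and matches the paper: a potential-function argument over the finite state space of $(\me1,\me2)$-configurations, with values in multiples of $\frac{1}{3}$, verifying $\lo + \Delta\Phi \le \frac{10}{3}$ for every transition. You also correctly identify Lemma~\ref{lm:shift-me2} and the halvability observation as the structural inputs that prune the transition table.

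There is, however, a genuine gap in your handling of the \textbf{Group-2} late-accepts. You write that an edge leaving \textbf{Group-1} ``incurs at most $1$ late operation (its full-to-half conversion)'' and is ``stable forever'' thereafter; this is backwards. When the old $\me2$ drops out of \textbf{Group-1}, nothing is done to it at that moment; but if it was \emph{half}, its rejected endpoint can be late-accepted much later, whenever a newly-revealed vertex happens to be adjacent to it. A single arriving vertex $v$ can be adjacent to $k$ such rejected \textbf{Group-2} endpoints and trigger $k$ late-accepts in one step, with $k$ unbounded. Your plan to ``charge these directly to the new vertex's budget'' therefore fails as a per-step inequality, and your potential --- defined only on the $(\me1,\me2)$ configuration --- cannot absorb this cost. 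The observation that each such vertex is ``paid for exactly once over the whole run'' is correct globally, but to turn it into a per-step bound you must put it into the potential. The paper does exactly this: its potential contains an extra term $|\{e : e \text{ is an expired half matched edge}\}|$ (i.e., half edges that are no longer $\me1$ or $\me2$). Each time a half edge exits \textbf{Group-1}, this term rises by $1$ (paid for out of that step's budget); each of the $k$ late-accepts later converts one such edge to full and drops the potential by $1$, so the $k$ late-accepts net to zero. With that term added, the remaining case analysis on the six $(\me1,\me2)$ states is exactly as you outline.
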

\begin{proof} \textbf{(Ideas.)}
We prove the theorem by using a potential function.
To this end, we define an edge $(u, v)$ as being \emph{free} if there exist feasible assignments both by either accepting $u$ or by accepting $v$.
Also, we define a matched edge with only one endpoint selected as being \emph{expired} if it is neither $\me1$ nor $\me2$.
Finally, we define $A$ as the set of vertices accepted by $\PEA$.
Using these terms, we define the potential function $\Phi$ as
\vspace{-0.5em}
\begin{equation*}
\Phi := |\{(u, v) | (u, v) \mbox{ expired}\}| + \frac{1}{3}|A \cap (\me1 \cup \me2)| + \frac{2}{3} \cdot \mathds{1}[ \me2 \mbox{ is free}]
\end{equation*}
Furthermore, at any given moment in the input sequence where the matching constructed by $\PEA$ contains at least $2$ edges, the status of $\me1$ and $\me2$ is characterized by one of $6$ states according to their possible combinations of selection statuses of their endpoints.
We also differentiate between the two half possibilities for $\me1$, since the newly-revealed vertex in $\me1$ can be accepted without incurring a late operation when there is a new $\me1$.

We show that, for any possible state transition triggered by a newly-revealed vertex, the number of incurred late operations $\lo$ added to the change in potential $\dphi$ is bounded above by $\frac{10}{3}$.
Note that, for any newly-revealed vertex $v$, $v$ may be adjacent to $k \ge 0$ rejected vertices that are matched by some expired edge.
This incurs $k$ late operations, but also decreases $\Phi$ by $k$, so this may be ignored when computing $\lo + \dphi$.
Since $\Phi_0 = 0$ and $\Phi_i \ge 0$, this allows us to conclude the statement of our theorem.
\end{proof}

Moreover, we can show a lower bound by constructing a family of instances that alternates between incurring a late accept on a \textbf{Group-2} vertex, and $4$ late operations on $\me1$ and $\me2$.
(See Figure~\ref{fig:vc-ar-lb} in Appendix.)

\begin{restatable}{lemma}{VCARLB}
For any $\varepsilon > 0$, there exists an instance such that $\PEA$ incurs amortized recourse strictly greater than $\frac{5}{2} - \varepsilon$.
\end{restatable}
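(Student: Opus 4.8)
The plan is to construct, for each integer $k \ge 1$, a vertex-arrival instance $I_k$ consisting of a fixed-size preamble followed by $k$ identical two-vertex \emph{rounds}, arranged so that $\PEA$ performs exactly one late operation on the first arrival of each round and exactly four late operations on the second. Granting this, $I_k$ has $2k + O(1)$ vertices and forces $5k + O(1)$ total recourse, so its amortized recourse is $\frac{5k + O(1)}{2k + O(1)}$, which exceeds $\frac52 - \varepsilon$ once $k$ is chosen large enough (relative to $\varepsilon$ and the preamble size). The instance and the induced behaviour of $\PEA$ are the ones sketched in Figure~\ref{fig:vc-ar-lb}.

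The heart of the construction is a gadget that keeps the two most recently matched edges ``flipped'' with respect to what the next arrivals will demand, while always leaving one rejected matched vertex dangling on an expired edge. Each round consists of two arrivals. The first is a degree-one vertex $u$ whose sole neighbour is the current dangling rejected vertex $w$; since $w$ is already matched, $u$ joins no matching edge and does not touch $\me1$ or $\me2$, and covering the new edge $(u,w)$ forces $\PEA$ to late-accept $w$ --- one late operation --- with the neighbourhoods chosen so that this is the unique cheapest repair. The second arrival is a vertex $v$ matched to a previously unmatched vertex $p$ (namely the degree-one vertex introduced in the preceding round, which is still unmatched), creating a new $\me1 = (p,v)$, demoting the old $\me1$ to $\me2$ and the old $\me2$ to an expired edge. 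The edges attached to $v$ together with Group-3 vertices planted in the preamble and earlier rounds are chosen so that the only configuration of the new $\me1$ and $\me2$ minimizing the number of accepted endpoints --- the primary rule of $\PEA$ --- flips both of the previously-half edges, costing four late operations, and so that this same step reproduces the dangling rejected vertex and the spare unmatched vertex needed to start the next round. Summing, each round contributes $1+4=5$ recourse for $2$ new vertices and the preamble contributes $O(1)$ of each, so the amortized recourse tends to $\frac52$.

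Verifying the two per-round claims is where the actual work lies: (a) that among all vertex covers built on the current maximal matching, the assignment $\PEA$ selects is exactly the four-operation flip configuration, which requires ruling out every cheaper reconfiguration (fewer accepted endpoints in $\me1 \cup \me2$, or an equal count with less recourse) using the planted Group-3 neighbours --- a case analysis parallel to that in the proof of Theorem~\ref{thm:VCCompRatio_new}; and (b) that the gadget truly resets, so the rounds chain indefinitely. The main obstacle I anticipate is (a): $\PEA$ actively minimizes the number of accepted $\me1/\me2$ endpoints and breaks remaining ties toward less recourse, so the Group-2 and Group-3 attachments of the gadget must be engineered to destroy every cheap alternative while still admitting the expensive one and while remaining consistent with the greedily-maintained maximal matching. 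A secondary obstacle is the bookkeeping ensuring that a dangling rejected vertex and a spare unmatched vertex are always available at the start of each round; this constraint is precisely what forces a round to span two arrivals rather than one, and hence what pins the bound at $\frac52$ rather than a larger constant.
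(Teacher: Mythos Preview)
Your high-level plan is exactly the paper's: build a two-vertex repeating gadget that forces five late operations per round, so the amortized recourse tends to $\tfrac52$. The specific mechanics you describe, however, do not work, and the discrepancy is not just a detail you could patch later.

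You allocate the five operations as $1+4$: an unmatched degree-one arrival $u$ late-accepts a rejected endpoint $w$ of an \emph{expired} half edge (one op), and then a matched arrival $v$ ``flips both of the previously-half edges'' (four ops). The second step is where the plan breaks. When $v$ is matched, the old $\me2$ is demoted to Group~2, and from that moment $\PEA$ can only touch it via the \texttt{LateAccept} step, which late-\emph{accepts} its rejected endpoint; it cannot flip it. So the old $\me2$ contributes at most one late operation and ends the step full. The remaining operations must come from \texttt{HalveBoth} on the new $\me1=(p,v)$ and the new $\me2$ (old $\me1$); but $p$ was a degree-one vertex whose sole neighbour $w$ is already accepted, so rejecting $p$ and accepting $v$ is always feasible at cost $0$, and flipping the new $\me2$ costs at most $2$. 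Hence the matched arrival yields at most $1+2=3$ ops, giving $1+3=4$ per round and an asymptotic bound of $2$, not $\tfrac52$. Worse, if you spend the \texttt{LateAccept} op on the old $\me2$ to reach even $3$, that edge is now full and there is no expired half edge left for the next round's $w$.

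The paper's construction uses the opposite split, $4+1$. The \emph{unmatched} arrival $e$ is adjacent to the rejected endpoints of \emph{both} $\me1$ and $\me2$ (so it is degree two, not one, and targets $\me1,\me2$ rather than an expired edge). Since both edges are still within \texttt{HalveBoth}'s reach, it can flip both, which costs four. Then the \emph{matched} arrival $f$ demotes the just-flipped $\me2$ to expired; $f$ is made adjacent to its rejected endpoint, so \texttt{LateAccept} contributes the fifth op while \texttt{HalveBoth} simply accepts $f$ at cost $0$. This arrangement also resets cleanly: after the matched arrival the new $\me1$ and $\me2$ are both half with the ``wrong'' endpoint accepted, ready for the next unmatched arrival to flip them again. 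The isomorphism you need for chaining rounds is between the $\me1,\me2$ configuration before step~5 and after step~6, not a dangling expired half edge.
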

\hide{
\begin{proof}
Consider $\PEA$ against an instance that reveals vertices as follows:
\begin{enumerate}
    \item isolated vertex $a$. $\PEA$ rejects $a$.
    \item $b$ adjacent to $a$. $\PEA$ adds $(a, b)$ to the matching, and accepts $b$.
    \item $c$ adjacent to $a$. $\PEA$ late-accepts $a$, and late-rejects $b$.
    \item $d$ adjacent to $c$. $\PEA$ adds $(c, d)$ to the matching, and accepts $d$.
    \item $e$ adjacent to $b$ and $c$. $\PEA$ late-accepts $b$ and $c$, and late-rejects $a$ and $d$.
    \item $f$ adjacent to $e$ and $a$. $\PEA$ adds $(e, f)$ to the matching, accepts $f$, and late-accepts $a$.
\end{enumerate}
This instance is illustrated in Figure~\ref{fig:vc-ar-lb} (see appendix).
Observe that the configuration of $\{c, d, e, f\}$ after $f$ is revealed is isomorphic to the configuration of $\{a, b, c, d\}$ after $d$ is revealed. Furthermore, all edges with an endpoint outside of $\{c, d, e, f\}$ are covered by a vertex outside of $\{c, d, e, f\}$.
Thus, this instance can be extended indefinitely by repeating steps $5$ and $6$.
Since these two steps together incur $5$ late operations by revealing two vertices, the family of instances described here incurs asymptotic amortized recourse $\frac{5}{2}$.
\end{proof}
}

Finally, we show that the analysis in Theorem~\ref{thm:VCCompRatio_new} is tight 
for a class of online algorithms where its solution only contains vertices saturated by the matching maintained throughout the process in an incremental manner. In other words, 
no online algorithm in this class achieves a lower competitive ratio, no matter how much amortized recourse it uses.
This is done by using an adversary that constructs an arbitrary number of triangles that all share a common vertex $v$.
This common vertex is revealed last, so all edges not incident to $v$ are added to the matching and $v$ is rejected. (See Figure~\ref{fig:vc-cr-lb} in Appendix.)

\hide{
\begin{definition}
An algorithm for vertex cover is \emph{incremental matching-based} if it maintains a maximal matching throughout the process in an incremental manner, and its solution only contains vertices saturated by the matching.
\end{definition}
}

\begin{restatable}{theorem}{VROPT}
No deterministic incremental matching-based algorithm achieves a competitive ratio smaller than $2-\frac{2}{\opt}$.
\end{restatable}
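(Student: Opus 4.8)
The plan is to exhibit, for every positive integer $k$, an adversarial instance on which any deterministic incremental matching-based algorithm is forced into a solution of size $2k$ against an optimum of size $k+1$, so that the ratio $\frac{2k}{k+1} = 2 - \frac{2}{k+1} = 2-\frac{2}{\opt}$ is attained. The instance is a \emph{friendship graph} $F_k$: take $k$ triangles $T_1,\dots,T_k$ and identify one vertex of each, so that all triangles share a single common vertex $v$, and each $T_j$ contributes two private vertices $a_j, b_j$ with edges $a_jb_j$, $va_j$, $vb_j$. The optimum vertex cover of $F_k$ has size $k+1$: pick $v$ together with one of $\{a_j,b_j\}$ from each triangle (this covers $a_jb_j$ and the two spokes), and one checks $k+1$ is optimal since the $k$ edge-disjoint triangles each need $\ge 2$ vertices but they overlap only in $v$, a standard packing argument.

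The key point is the \emph{arrival order}: reveal all of $a_1,b_1,a_2,b_2,\dots,a_k,b_k$ first, with the edge $a_jb_j$ presented when $b_j$ arrives, and only then reveal $v$ together with all $2k$ spoke edges $va_j, vb_j$ at once. Before $v$ arrives, the revealed graph is a perfect matching $\{a_jb_j : j=1,\dots,k\}$ on $2k$ vertices; since the algorithm is incremental matching-based, its maintained matching $M$ must be exactly $\{a_jb_j\}$ (each $a_jb_j$ must be added, as it is an isolated edge when it arrives and a maximal matching cannot leave it uncovered). When $v$ finally arrives, $v$ is already non-adjacent to nothing yet-unmatched — every neighbor of $v$ is saturated — so $v$ cannot be added to $M$ and the final matching stays $\{a_jb_j\}$, hence $v$ is never saturated. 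By the defining restriction of the class, the algorithm's cover contains only saturated vertices, so $v \notin \alg$; but then for each $j$ the spoke edge $va_j$ forces $a_j \in \alg$ and $vb_j$ forces $b_j \in \alg$, giving $|\alg| \ge 2k$. Combined with $|\alg| \le 2|M| = 2k$ we get $|\alg| = 2k$ exactly, while $\opt = k+1$, so $\frac{|\alg|}{\opt} = \frac{2k}{k+1} = 2 - \frac{2}{\opt}$.

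Since this holds for every $k$ and the algorithm may use unbounded recourse (the argument never appeals to recourse — it is purely a feasibility/structure constraint on what vertices the final cover may contain), no deterministic incremental matching-based algorithm can guarantee a ratio strictly below $2 - \frac{2}{\opt}$; formally, for any claimed ratio $\rho$ there is a $k$ with $2-\frac{2}{k+1} > \rho$, and the instance $F_k$ refutes $\rho$-competitiveness. I would close by noting this matches the upper bound of Theorem~\ref{thm:VCCompRatio_new}, so the analysis there is tight within this natural class of algorithms (which includes $\PEA$ itself and the greedy maximal-matching $2$-approximation).

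The main obstacle is making the incremental-matching forcing argument airtight: I must verify carefully that the definition of ``incremental matching-based'' genuinely pins down $M = \{a_jb_j\}$ before $v$ arrives and genuinely forbids adding $v$ to the matching afterward. The subtlety is that ``maintains a maximal matching incrementally'' should mean $M$ only grows and stays maximal at every step; when the isolated edge $a_jb_j$ arrives, maximality forces it in, and once in it can never be removed, so when $v$ arrives all of $v$'s neighbors are already matched and maximality is preserved without $v$ — but an adversary-friendly reading must rule out the algorithm \emph{preemptively} refusing some $a_jb_j$ (impossible, since then the matching is not maximal at that step) or later rematching to free $v$ (impossible by monotonic growth). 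I would spell out these two impossibilities explicitly; everything else (the value of $\opt(F_k)$, the bound $|\alg|\le 2|M|$) is routine.
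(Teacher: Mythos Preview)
Your proposal is correct and follows essentially the same approach as the paper: the paper's adversarial instance is exactly your friendship graph $F_k$ with the same arrival order (the $k$ disjoint edges $a_jb_j$ first, then the universal vertex $v$), and the argument that $v$ stays unmatched while all $2k$ private vertices are forced into the cover is identical. Your write-up is more detailed in justifying why the incremental maximal matching is pinned down and in computing $\opt(F_k)$, but the construction and logic coincide with the paper's proof.
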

\hide{
\begin{proof}
Consider any incremental matching-based algorithm against an instance where $k$ disconnected edges are revealed via their endpoints.
By nature of the incremental matching-based algorithm, each of these $k$ edges will be added to the matching, and at least one vertex from each pair will be accepted.
Then, a final vertex is revealed, adjacent to all previously-revealed vertices.
The incremental matching-based algorithm will not accept this vertex, as it is not matched, but must accept all other vertices for a vertex cover of size $2k$.
However, the optimal solution consists of the last revealed vertex, and one vertex from each pair.
Thus, against this instance, any incremental matching-based algorithm will be $\frac{2k}{k+1} = 2 - \frac{2}{\opt}$ times worse than the optimal solution.
Therefore, no incremental matching-based algorithm can achieve a competitive ratio smaller than $2 - \frac{2}{\opt}$.
\end{proof}
}


\bibliographystyle{plainurl}
\bibliography{references}

\begin{thebibliography}{10}

\bibitem{DBLP:conf/esa/AlbersS17}
Susanne Albers and Sebastian Schraink.
\newblock Tight bounds for online coloring of basic graph classes.
\newblock In Kirk Pruhs and Christian Sohler, editors, {\em 25th Annual
  European Symposium on Algorithms, {ESA} 2017, September 4-6, 2017, Vienna,
  Austria}, volume~87 of {\em LIPIcs}, pages 7:1--7:14. Schloss Dagstuhl -
  Leibniz-Zentrum f{\"{u}}r Informatik, 2017.
\newblock \href {https://doi.org/10.4230/LIPIcs.ESA.2017.7}
  {\path{doi:10.4230/LIPIcs.ESA.2017.7}}.

\bibitem{DBLP:journals/jco/AngelopoulosDJ20}
Spyros Angelopoulos, Christoph D{\"{u}}rr, and Shendan Jin.
\newblock Online maximum matching with recourse.
\newblock {\em J. Comb. Optim.}, 40(4):974--1007, 2020.
\newblock \href {https://doi.org/10.1007/s10878-020-00641-w}
  {\path{doi:10.1007/s10878-020-00641-w}}.

\bibitem{DBLP:journals/ipl/AvitabileMP13}
Tess Avitabile, Claire Mathieu, and Laura~H. Parkinson.
\newblock Online constrained optimization with recourse.
\newblock {\em Inf. Process. Lett.}, 113(3):81--86, 2013.
\newblock \href {https://doi.org/10.1016/j.ipl.2012.09.011}
  {\path{doi:10.1016/j.ipl.2012.09.011}}.

\bibitem{DBLP:conf/soda/AzarPT22}
Yossi Azar, Debmalya Panigrahi, and Noam Touitou.
\newblock Online graph algorithms with predictions.
\newblock In Joseph~(Seffi) Naor and Niv Buchbinder, editors, {\em Proceedings
  of the 2022 {ACM-SIAM} Symposium on Discrete Algorithms, {SODA} 2022, Virtual
  Conference / Alexandria, VA, USA, January 9 - 12, 2022}, pages 35--66.
  {SIAM}, 2022.
\newblock \href {https://doi.org/10.1137/1.9781611977073.3}
  {\path{doi:10.1137/1.9781611977073.3}}.

\bibitem{DBLP:conf/focs/AzarT20}
Yossi Azar and Noam Touitou.
\newblock Beyond tree embeddings - a deterministic framework for network design
  with deadlines or delay.
\newblock In Irani \cite{DBLP:conf/focs/2020}, pages 1368--1379.
\newblock \href {https://doi.org/10.1109/FOCS46700.2020.00129}
  {\path{doi:10.1109/FOCS46700.2020.00129}}.

\bibitem{DBLP:journals/jacm/BernsteinHR19}
Aaron Bernstein, Jacob Holm, and Eva Rotenberg.
\newblock Online bipartite matching with amortized \emph{O}(log
  \({}^{\mbox{2}}\) \emph{n}) replacements.
\newblock {\em J. {ACM}}, 66(5):37:1--37:23, 2019.
\newblock \href {https://doi.org/10.1145/3344999} {\path{doi:10.1145/3344999}}.

\bibitem{DBLP:conf/waoa/BienkowskiKLS18}
Marcin Bienkowski, Artur Kraska, Hsiang{-}Hsuan Liu, and Pawel Schmidt.
\newblock A primal-dual online deterministic algorithm for matching with
  delays.
\newblock In Leah Epstein and Thomas Erlebach, editors, {\em Approximation and
  Online Algorithms - 16th International Workshop, {WAOA} 2018, Helsinki,
  Finland, August 23-24, 2018, Revised Selected Papers}, volume 11312 of {\em
  Lecture Notes in Computer Science}, pages 51--68. Springer, 2018.
\newblock \href {https://doi.org/10.1007/978-3-030-04693-4\_4}
  {\path{doi:10.1007/978-3-030-04693-4\_4}}.

\bibitem{DBLP:journals/bit/BoppanaH92}
Ravi~B. Boppana and Magn{\'{u}}s~M. Halld{\'{o}}rsson.
\newblock Approximating maximum independent sets by excluding subgraphs.
\newblock {\em {BIT}}, 32(2):180--196, 1992.
\newblock \href {https://doi.org/10.1007/BF01994876}
  {\path{doi:10.1007/BF01994876}}.

\bibitem{DBLP:conf/swat/BosekDFPZ20}
Bartlomiej Bosek, Yann Disser, Andreas~Emil Feldmann, Jakub Pawlewicz, and Anna
  Zych{-}Pawlewicz.
\newblock Recoloring interval graphs with limited recourse budget.
\newblock In Susanne Albers, editor, {\em 17th Scandinavian Symposium and
  Workshops on Algorithm Theory, {SWAT} 2020, June 22-24, 2020, T{\'{o}}rshavn,
  Faroe Islands}, volume 162 of {\em LIPIcs}, pages 17:1--17:23. Schloss
  Dagstuhl - Leibniz-Zentrum f{\"{u}}r Informatik, 2020.
\newblock \href {https://doi.org/10.4230/LIPIcs.SWAT.2020.17}
  {\path{doi:10.4230/LIPIcs.SWAT.2020.17}}.

\bibitem{DBLP:journals/algorithmica/BoyarFKL22}
Joan Boyar, Lene~M. Favrholdt, Michal Kotrbc{\'{\i}}k, and Kim~S. Larsen.
\newblock Relaxing the irrevocability requirement for online graph algorithms.
\newblock {\em Algorithmica}, 84:1916--1951, 2022.
\newblock \href {https://doi.org/10.1007/s00453-022-00944-w}
  {\path{doi:10.1007/s00453-022-00944-w}}.

\bibitem{DBLP:conf/esa/CyganCMS18}
Marek Cygan, Artur Czumaj, Marcin Mucha, and Piotr Sankowski.
\newblock Online facility location with deletions.
\newblock In Yossi Azar, Hannah Bast, and Grzegorz Herman, editors, {\em 26th
  Annual European Symposium on Algorithms, {ESA} 2018, August 20-22, 2018,
  Helsinki, Finland}, volume 112 of {\em LIPIcs}, pages 21:1--21:15. Schloss
  Dagstuhl - Leibniz-Zentrum f{\"{u}}r Informatik, 2018.
\newblock \href {https://doi.org/10.4230/LIPIcs.ESA.2018.21}
  {\path{doi:10.4230/LIPIcs.ESA.2018.21}}.

\bibitem{DBLP:journals/siamcomp/GuG016}
Albert Gu, Anupam Gupta, and Amit Kumar.
\newblock The power of deferral: Maintaining a constant-competitive steiner
  tree online.
\newblock {\em {SIAM} J. Comput.}, 45(1):1--28, 2016.
\newblock \href {https://doi.org/10.1137/140955276}
  {\path{doi:10.1137/140955276}}.

\bibitem{DBLP:journals/corr/abs-1711-02078}
Anupam Gupta, Guru Guruganesh, Amit Kumar, and David Wajc.
\newblock Fully-dynamic bin packing with limited repacking.
\newblock {\em CoRR}, abs/1711.02078, 2017.
\newblock URL: \url{http://arxiv.org/abs/1711.02078}, \href
  {http://arxiv.org/abs/1711.02078} {\path{arXiv:1711.02078}}.

\bibitem{DBLP:conf/focs/GuptaL20}
Anupam Gupta and Roie Levin.
\newblock Fully-dynamic submodular cover with bounded recourse.
\newblock In Irani \cite{DBLP:conf/focs/2020}, pages 1147--1157.
\newblock \href {https://doi.org/10.1109/FOCS46700.2020.00110}
  {\path{doi:10.1109/FOCS46700.2020.00110}}.

\bibitem{DBLP:journals/siamcomp/Halperin02}
Eran Halperin.
\newblock Improved approximation algorithms for the vertex cover problem in
  graphs and hypergraphs.
\newblock {\em {SIAM} J. Comput.}, 31(5):1608--1623, 2002.
\newblock \href {https://doi.org/10.1137/S0097539700381097}
  {\path{doi:10.1137/S0097539700381097}}.

\bibitem{DBLP:conf/mfcs/HarutyunyanPR21}
Hovhannes~A. Harutyunyan, Denis Pankratov, and Jesse Racicot.
\newblock Online domination: The value of getting to know all your neighbors.
\newblock In Filippo Bonchi and Simon~J. Puglisi, editors, {\em 46th
  International Symposium on Mathematical Foundations of Computer Science,
  {MFCS} 2021, August 23-27, 2021, Tallinn, Estonia}, volume 202 of {\em
  LIPIcs}, pages 57:1--57:21. Schloss Dagstuhl - Leibniz-Zentrum f{\"{u}}r
  Informatik, 2021.
\newblock \href {https://doi.org/10.4230/LIPIcs.MFCS.2021.57}
  {\path{doi:10.4230/LIPIcs.MFCS.2021.57}}.

\bibitem{DBLP:conf/focs/2020}
Sandy Irani, editor.
\newblock {\em 61st {IEEE} Annual Symposium on Foundations of Computer Science,
  {FOCS} 2020, Durham, NC, USA, November 16-19, 2020}. {IEEE}, 2020.
\newblock \href {https://doi.org/10.1109/FOCS46700.2020}
  {\path{doi:10.1109/FOCS46700.2020}}.

\bibitem{DBLP:journals/talg/Karakostas09}
George Karakostas.
\newblock A better approximation ratio for the vertex cover problem.
\newblock {\em {ACM} Trans. Algorithms}, 5(4):41:1--41:8, 2009.
\newblock \href {https://doi.org/10.1145/1597036.1597045}
  {\path{doi:10.1145/1597036.1597045}}.

\bibitem{DBLP:journals/jcss/KhotR08}
Subhash Khot and Oded Regev.
\newblock Vertex cover might be hard to approximate to within 2-epsilon.
\newblock {\em J. Comput. Syst. Sci.}, 74(3):335--349, 2008.
\newblock \href {https://doi.org/10.1016/j.jcss.2007.06.019}
  {\path{doi:10.1016/j.jcss.2007.06.019}}.

\bibitem{DBLP:conf/approx/MegowN20}
Nicole Megow and Lukas N{\"{o}}lke.
\newblock Online minimum cost matching with recourse on the line.
\newblock In Jaroslaw Byrka and Raghu Meka, editors, {\em Approximation,
  Randomization, and Combinatorial Optimization. Algorithms and Techniques,
  {APPROX/RANDOM} 2020, August 17-19, 2020, Virtual Conference}, volume 176 of
  {\em LIPIcs}, pages 37:1--37:16. Schloss Dagstuhl - Leibniz-Zentrum f{\"{u}}r
  Informatik, 2020.
\newblock \href {https://doi.org/10.4230/LIPIcs.APPROX/RANDOM.2020.37}
  {\path{doi:10.4230/LIPIcs.APPROX/RANDOM.2020.37}}.

\bibitem{DBLP:journals/siamcomp/MegowSVW16}
Nicole Megow, Martin Skutella, Jos{\'{e}} Verschae, and Andreas Wiese.
\newblock The power of recourse for online {MST} and {TSP}.
\newblock {\em {SIAM} J. Comput.}, 45(3):859--880, 2016.
\newblock \href {https://doi.org/10.1137/130917703}
  {\path{doi:10.1137/130917703}}.

\bibitem{DBLP:journals/acta/MonienS85}
Burkhard Monien and Ewald Speckenmeyer.
\newblock Ramsey numbers and an approximation algorithm for the vertex cover
  problem.
\newblock {\em Acta Informatica}, 22(1):115--123, 1985.
\newblock \href {https://doi.org/10.1007/BF00290149}
  {\path{doi:10.1007/BF00290149}}.

\bibitem{DBLP:conf/icalp/WangW15}
Yajun Wang and Sam~Chiu{-}wai Wong.
\newblock Two-sided online bipartite matching and vertex cover: Beating the
  greedy algorithm.
\newblock In Magn{\'{u}}s~M. Halld{\'{o}}rsson, Kazuo Iwama, Naoki Kobayashi,
  and Bettina Speckmann, editors, {\em Automata, Languages, and Programming -
  42nd International Colloquium, {ICALP} 2015, Kyoto, Japan, July 6-10, 2015,
  Proceedings, Part {I}}, volume 9134 of {\em Lecture Notes in Computer
  Science}, pages 1070--1081. Springer, 2015.
\newblock \href {https://doi.org/10.1007/978-3-662-47672-7\_87}
  {\path{doi:10.1007/978-3-662-47672-7\_87}}.

\end{thebibliography}

\newpage
\appendix
\section{Algorithms and Figures}
\begin{algorithm}
\caption{$\tas_t$ algorithm for monotone-sum graph problems}\label{alg:tas}
\begin{algorithmic}
\State $\alg \gets 0$
\While{new element $v$ arrives}
    \State $g \gets$ the best value from $[0,w_\text{max}]$ such that no feasibility constraint is violated
    \If{the new assignment will fail to be $t$-competitive} \Comment{$\max\{\frac{\alg+g}{\opt_{}}, \frac{\opt_{}}{\alg+g}\} > t$}
        \State \textsc{Switch}($\opt$)
    \Else
        \State incorporate the greedy assignment
    \EndIf
    \State $\alg \gets$ the value of $\tas_t$'s current assignment
\EndWhile
\\
\Function{\textsc{Switch}}{assignment $A$}
    \For{every element $x$}
        \If{$\tas_t(x) \neq A(x)$}
            \State change the assignment of element $x$ into $A(x)$
        \EndIf
    \EndFor
\EndFunction
\end{algorithmic}
\end{algorithm}

\bigskip

\begin{algorithm}
\caption{\texttt{Duo-Halve} algorithm ($\PEA$) for Minimum Vertex Cover Problem}\label{alg:VC2}
\begin{algorithmic}
\State $\me1 \gets \varnothing$, $\me2 \gets \varnothing$, $\vm \gets \varnothing$
\While{new vertex $v$ arrives}
    \If{there is a vertex $p \in N(v)\cup (V\setminus \vm)$} 
        \State $\me2 \gets \me1$
        \State $\me1 \gets (p,v)$ \Comment{$(p,v)$ is a new matched edge. If there is more than one $p$, choose one arbitrarily.}
        \State add $p$ and $v$ into $\vm$
        \State \texttt{LateAccept} all rejected vertices in $(\vm\setminus\{$vertices in $\me1$ or $\me2\})\cap N(v)$
        \State \texttt{HalveBoth}($\me1, \me2$)
    \Else 
        \State \texttt{LateAccept} all rejected vertices in $\vm\cap N(v)$
        \State \texttt{HalveBoth}($\me1, \me2$)
    \EndIf
\EndWhile
\\
\State \textbf{Function }{\texttt{HalveBoth}}(matched edge $\me1$, matched edge $\me2$)
    \State Among accept/reject configurations of $\me1$ and $\me2$ that yield a valid vertex cover, return one that maximizes the number of half edges among $\me1$ and $\me2$ with the minimum number of late operations (see Figure~\ref{fig:flow} for details)
    
\State \textbf{end Function}
\end{algorithmic}
\end{algorithm}

\begin{figure}[h]
    \centering
    \includegraphics[scale=0.58]{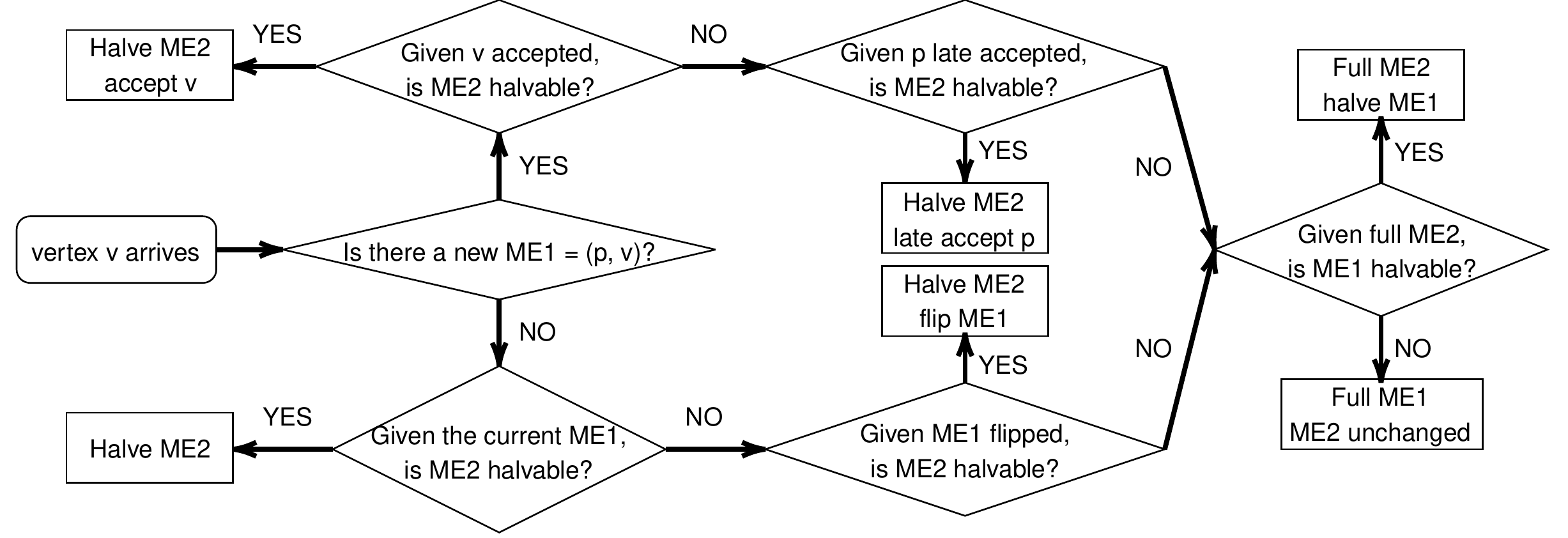}
    \caption{An illustration of the flow of \texttt{HalveBoth}($\me1,\me2$).}
    \label{fig:flow}
\end{figure}

\begin{figure}
\centering
    \includegraphics[scale=0.9]{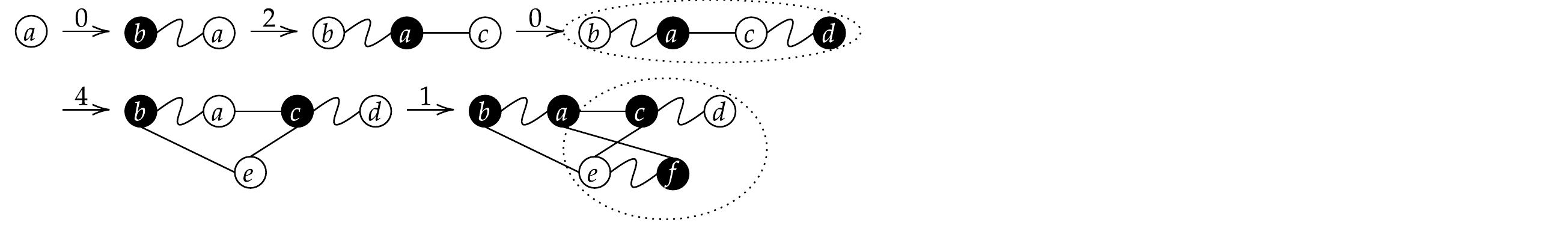}
    \caption{Adversarial instance for $\VC$ such that $\PEA$ incurs asymptotic amortized recourse $\frac{5}{2}$. Each arrow's number denotes the number of late operations incurred by the next vertex's reveal. The dotted ovals highlight the repeating structure.}
    \label{fig:vc-ar-lb}
\end{figure}

\begin{figure}
\centering
    \includegraphics[scale=0.5]{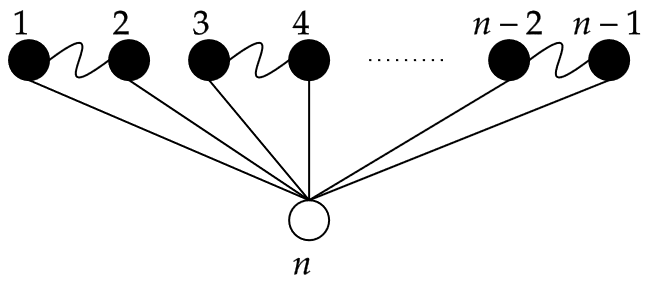}
    \caption{Adversarial instance for $\VC$ such that any incremental matching-based algorithm is exactly $(2 - \frac{2}{\opt})$-competitive. Vertices are labeled by their release order. Any such algorithm must accept $n-1$ vertices, whereas the optimal solution contains $\frac{n-1}{2} + 1$ vertices.}
    \label{fig:vc-cr-lb}
\end{figure}

\newpage

\section{General Algorithm for Monotone-Sum Graph Problems Full Proof}
\theoremGeneralUB*
\begin{proof}
First, note that in the $\tas_t$ algorithm, recourse is incurred only at the moments when a switch happens. We partition the process of the algorithm into \emph{phases} according to the switches. Phase $i$ consists all the events after the $(i-1)$-th switch until the $i$-th switch. Let~$\ins_i$ denote the set of elements that are released during the $i$-th phase and can be assigned non-zero value. Let $\tr(\ins_i)$ and $\ar(\ins_i)$ denote the total recourse and the amortized recourse incurred at the $i$-th switch, respectively. The total amortized recourse is given by $\ar(\ins) = \frac{\sum_{i} \tr_i}{\sum_i|\ins_i|} \leq \max_i \frac{\tr_i}{|\ins_i|}$ (Observation~\ref{obs:sum-max}). In the following, we prove that for any switch $i$, $\frac{\tr_i}{|\ins_i|} \leq \frac{w_\text{max}\cdot(t+1)}{t-1}$ and thus $\ar(\ins) \leq \frac{w_\text{max}\cdot(t+1)}{t-1}$.

We first show that any optimal schedule is incremental. In this case, the optimal solution can be seen as an incremental $1$-approximation algorithm. Given any graph $G$ and its subgraph~$H$, since the problem is monotone, $\opt(H) \leq \opt(G)$. 
Moreover, consider the projection $\opt^\prime$ of $\opt(G)$ on $H$, it is a feasible solution of $H$ since the problem is monotone. Since $\opt^\prime$ is feasible, $value(\opt(G)) \geq value(\opt^\prime)$ for maximization problems, and $value(\opt(G)) \leq value(\opt^\prime)$ for minimization problems.
That is, the projection $\opt^\prime$ of $\opt(G)$ on $H$ does not have a better objective value than the assignment $\opt(H)$.

Let $\ralg$ be the yardstick incremental $\alpha$-approximation algorithm referenced (where $\alpha=1$ if we reference the optimal exact solution).
Let $\ralg_i$ and $\tas_i$ denote the value of the yardstick solution and the value of $\tas_t$'s solution right after the $i$-th switch, respectively. By the $\tas_t$ algorithm, $\tas_i = \ralg_i$. 
Consider the input $x$ that triggers the $(i+1)$-th switch. Let $\alg$ denote the value of $\tas_t$ algorithm right before the arrival of $x$. Once $x$ is revealed, the $\tas_t$ algorithm attempts to greedily assign a value $g \in [0,w_\text{max}]$ to $x$, while the value of the $\ralg$ assignment for the whole instance is $\ralg_{i+1}$. By the definition of the $\tas_t$ algorithm, the $(i+1)$-th switch is triggered since $\max\{\frac{\alg+g}{\ralg_{i+1}}, \frac{\ralg_{i+1}}{\alg+g}\} > t$. In the worst case, the $\tas_t$ algorithm changes the decision on every element. Since we check the $t$-competitiveness before assigning the value greedily, $\tr_{i+1} \leq \alg + \ralg_{i+1}$.

\smallskip

\runtitle{Maximization problems.} 
According to the condition that triggers the switching, at the moment when the $(i+1)$-th switch happens, $\ralg_{i+1} > t\cdot(\alg+g)$. Hence, the total recourse $\tr_{i+1} \leq \alg+\ralg_{i+1} < (1+1/t) \cdot \ralg_{i+1}$. 

Now, we bound the number of elements released between switch $i$ and switch $i+1$ from below. Since $\ralg$ is incremental, and the assigned values are non-negative, the adversary needs to release a sufficient number of elements such that the value of the referenced assignment increases enough to trigger the switch.
Let $\ralg^\prime$ be the projection of $\ralg_{i+1}$ on the graph at the beginning of the $i+1$-th phase.
That is, $\ralg^\prime = \sum_{j=1}^i \sum_{x \text{ released in phase }j} \ralg_{i+1}(x)$. Since the $\ralg$ algorithm is incremental, $\ralg^\prime \leq \ralg_i$.
Since every new element brings at most $w_\text{max}$ additional cost in the $\ralg$ assignment value, the minimum number of elements that the adversary has to release such that the $\ralg$ value increases enough is at least $\frac{\ralg_{i+1}-\ralg^\prime}{w_\text{max}}$. Therefore, $|\ins_{i+1}| \geq \frac{\ralg_{i+1}-\ralg^\prime}{w_\text{max}} \geq \frac{\ralg_{i+1}-\ralg_{i}}{w_\text{max}}$. By the $\tas_t$ algorithm and the fact that the assigned values are non-negative, $|\ins_{i+1}| \geq \frac{\ralg_{i+1}-\ralg_{i}}{w_\text{max}} = \frac{\ralg_{i+1}-\alg_i}{w_\text{max}} \geq \frac{\ralg_{i+1}-\alg}{w_\text{max}}$. By the switch triggering condition, $\alg+g < \ralg_{i+1}/t$. Hence, $|\ins_{i+1}| \geq \frac{\ralg_{i+1}-(\alg+g)}{w_\text{max}} \geq \frac{\ralg_{i+1}-\ralg_{i+1}/t}{w_\text{max}}$.

By definition, $\ar_{i+1} = \frac{\tr_{i+1}}{|\ins_{i+1}|} \leq \frac{w_\text{max}\cdot(1+1/t)\cdot \ralg_{i+1}}{(1-1/t)\cdot\ralg_{i+1}} = \frac{w_\text{max}\cdot(t+1)}{t-1}$.

\smallskip

\runtitle{Minimization problems.} 
The $(i+1)$-th switch occurs when $\alg + g \geq t\cdot \ralg_{i+1}$. Hence, $\tr_{i+1} \leq \alg + \ralg_{i+1} \leq \alg + (1/t)\cdot (\alg+g) \leq (1+1/t) \cdot \alg + w_\text{max}/t$.

Now, we bound the number of elements released between switch $i$ and switch $i+1$ from below. Unlike in maximization problems, in minimization problems it is possible that $\ralg_{i+1} = \ralg_{i}$ but the assignment at the $(i+1)$-th switch is completely different from that at the $i$-th switch and causes massive amount of recourse for $\tas_t$ to remain $t$-competitive. Therefore, we cannot bound the number of new elements by the change of $\ralg$ value but instead do so according to the change of $\tas_t$ assignment's total value. 
Right before the input that triggers the $(i+1)$-th switch, $\tas_t$ has value $\alg \geq \alg_i$ since the assigned values are non-negative (note that $\alg=\alg_i$ if it is the input that triggers the $i$-th switch, and the switches happen in a row). Together with the reveal of the input that triggers the $(i+1)$-th switch, $|\ins_{i+1}|$ is at least $\frac{\alg-\tas_i}{w_\text{max}}+1$.
Therefore, $|\ins_{i+1}| \geq \frac{\alg-\tas_i}{w_\text{max}}+1 = \frac{\alg-\ralg_i}{w_\text{max}}+1 \geq \frac{\alg-\ralg_{i+1}}{w_\text{max}}+1 > \frac{\alg-(1/t)\cdot(\alg+g)}{w_\text{max}}+1 \geq \frac{\alg-(1/t)\cdot(\alg+w_\text{max})}{w_\text{max}}+1 = \frac{(1-1/t)\cdot\alg + w_\text{max}\cdot(1-1/t)}{w_\text{max}}$.

By definition, $\ar_i = \frac{\tr_{i+1}}{|\ins_{i+1}|} \leq w_\text{max}\cdot \frac{(1+1/t) \cdot \alg + w_\text{max}/t}{(1-1/t)\cdot\alg + w_\text{max}\cdot(1-1/t)} = w_\text{max}\cdot\frac{(t+1)\cdot\alg + w_\text{max}}{(t-1)\cdot \alg + w_\text{max}\cdot(t-1)} = \frac{w_\text{max}}{t-1}\cdot\frac{(t+1)\cdot \alg + w_\text{max}}{\alg + w_\text{max}} \leq \frac{w_\text{max}}{t-1}\cdot(t+1) = \frac{w_\text{max}\cdot(t+1)}{t-1}$. 
\end{proof}

\bigskip

\corFractionalUB*
\begin{proof}
The proof is similar to the one of Theorem~\ref{thm:general_UB}. 
The new ingredient of the proof is that, given the same amount of value changed, the number of elements whose assigned values are changed is upper-bounded by $\frac{\text{total amount of value change value}}{w_\text{min}} \leq \frac{\alg+\ralg_{i+1}}{w_\text{min}}$. 
The other arguments in the proof of Theorem~\ref{thm:general_UB} still follow. 
Hence, the amortized recourse is at most $\frac{w_\text{max}\cdot(t+1)}{w_\text{min}\cdot(t-1)} = \frac{t+1}{w_\text{min}\cdot(t-1)}$.
\end{proof}

\section{Independent Set Full Proof}
\lmISreduc*
\begin{proof}
For any instance $(G, \sigma)$ of $\IS$, any vertex $v \in G$ released after switch $i$ and before switch $i+1$ belongs to one of the following four types:
\begin{itemize}
	\item $v$ is selectable by $\tas_t$, and $v \in \opt_{i+1}$.
	\item $v$ is selectable by $\tas_t$, and $v \notin \opt_{i+1}$.
	\item $v$ is not selectable by $\tas_t$, and $v \in \opt_{i+1}$.
    \item $v$ is not selectable by $\tas_t$, and $v \notin \opt_{i+1}$.
\end{itemize}
Since $\tas_t$\ accepts any newly-revealed vertex that it can, any vertex selectable by $\tas_t$\ will be accepted.
\\
Then, consider any vertex $v$ of the fourth type. Such a vertex does not incur any late operation at switch $i + 1$, nor does it have any impact on either $\tas_t$'s solution or $\opt_{i+1}$. Thus, an instance $(G, \sigma')$ such that $\sigma'$ releases $v$ immediately after switch $i + 1$ and is otherwise identical to $\sigma$ will incur no less amortized recourse than $(G, \sigma)$. Furthermore, if $v$ would be released in $\sigma'$ after the last switch, it can instead be removed from $G$. This transformed instance $(G \setminus \{v\}, \sigma' \setminus \{v\})$ incurs the same recourse as $(G, \sigma)$ and contains one less vertex, and thus incurs more amortized recourse. In this way, we can transform any general instance $(G, \sigma)$ into an instance $(G', \sigma')$ containing no vertices of the fourth type such that the amortized recourse incurred by $(G', \sigma')$ is no less than that incurred by $(G, \sigma)$.
\\	
Furthermore, consider any vertex $v$ of the first type in an instance $(G', \sigma')$ containing no vertices of the fourth type. Such a vertex does not incur any late operation at switch $i + 1$, and decreases the ratio between the size of the optimal solution and $\tas_t$'s solution. Thus, $v$ cannot directly contribute towards an increase in amortized recourse incurred by switch $i + 1$. However, it may have an indirect impact on $\tas_t$'s solution. In particular, there may be some neighbor $u$ of $v$ that would be in $\tas_t$'s solution if $v$ were not present in the graph. Since $\tas_t$\ accepts any selectable vertex, $v$ must be released before $u$ and thus $u$ is not selectable by $\tas_t$\ when it is released. Therefore, since $u$ is not part of $\opt_{i+1}$, $u$ is a vertex of the fourth type. However, $(G', \sigma')$ contains no vertices of the fourth type, so there is no such vertex $u$. Then, using the same approach as for vertices of the fourth type, we can transform any instance $(G', \sigma')$ containing no vertices of the fourth type into an instance $(G'', \sigma'')$ containing no vertices of the first or fourth types.
\end{proof}

\bigskip

\thmISub*
\begin{proof}
We show that, for any reduced instance from Lemma~\ref{lm:reduction}, $\tas_t$ will incur at most $\frac{t}{t-1}$ amortized recourse, and thus that this upper bound holds for any instance. To do this, we show that for any switch~$i$, $\frac{\tr_i}{|\ins_i|} \leq \frac{t}{t-1}$ and thus $\ar(\ins) = \frac{\sum_i \tr_i}{\sum_i |\ins_i|} \leq \max_i\frac{\tr_i}{|\ins_i|} \leq \frac{t}{t-1}$.
\\
Consider a scheme in which each newly-revealed vertex carries budget $B$, and the vertices revealed between switch $i$ and switch $i+1$ must pay the full cost of the recourse incurred by switch $i+1$. Then, the total budget carried by these newly-revealed vertices must be at least $ALG + \opt_{i+1}$.
\\
Furthermore, the total available budget at switch $i+1$ is at least $B$ times the number of vertices revealed between switches $i$ and $i+1$, which in turn is at least $B$ times the number of these vertices of the third type. Note that~$\opt_{i+1}$ can always be constructed by extending an existing independent set of size at least $\opt_{i-1}$, so the number of vertices of the third type is bounded above by $\opt_{i+1} - \opt_{i-1}$. Therefore, to have sufficient budget,~$B$ must satisfy~$B \ge \frac{ALG + \opt_{i+1}}{\opt_{i+1} - \opt_{i-1}}$.
\\
Denote $k$ the number of vertices of the second type revealed between switches~$i$ and~$i+1$, and $k'$ the number of vertices of the second type revealed between switches $i-1$ and $i$. Then $ALG = \opt_i + k$ and $ALG' = \opt_{i-1} + k'$, so $\opt_{i+1} \ge t(\opt_i + k)$, $\opt_{i+1} - 1 < t \cdot ALG$ which is equivalent to $\opt_{i+1} < t \cdot \opt_i + tk + 1$, and $\opt_i - 1 < t \cdot ALG'$ which is equivalent to $\opt_{i-1} > \frac{\opt_i}{t} - k' - \frac{1}{t}$.
\\
Therefore, we have $B \ge \frac{ALG + \opt_{i+1}}{\opt_{i+1} - \opt_{i-1}} \ge \frac{\opt_i + k + t(\opt_i + k)}{t \cdot \opt_i + tk + 1 - \frac{\opt_i}{t} + k' + \frac{1}{t}} = \frac{\opt_i(t + 1) + tk + k}{\opt_i(t - \frac{1}{t}) + tk + 1 + \frac{1}{t} + k'}$. Note that this lower bound is maximized when $k' = 0$. Furthermore, note that $\frac{t + 1}{t - \frac{1}{t}} \ge \frac{tk + k}{tk + 1 + \frac{1}{t}}$ for any $k \ge 0$, so this lower bound is maximized when $k = 0$.
\\
Therefore, for any instance, it is sufficient for each newly-revealed vertex to carry budget $B = \frac{t + 1}{t - \frac{1}{t}} = \frac{t}{t - 1}$. Thus, $\tas_t$ is $t$-competitive while incurring at most $\frac{t}{t - 1}$ amortized recourse.
\end{proof}

\bigskip

\thmISlb*
\begin{proof}
Consider any $t$-competitive online algorithm against an adversary that constructs a complete bipartite graph and only reveals new vertices in the partition which does not contain the algorithm's current solution.
\\
We assume that the algorithm is ``sane'', in that it will not reduce the size of of its solution. Thus, whenever the algorithm switches solution and changes partition, it incurs recourse of at least twice the size of its current solution. In addition, whenever such a switch occurs, the number of vertices in the graph is at most $2t$ times the size of the algorithm's solution prior to switching. Thus, each partition-changing switch will incur at least $\frac{1}{t}$ recourse amortized over the size of the revealed graph when the switch occurs.
\\
Then, consider the largest possible increase in the size of the revealed graph between two consecutive switches. The least possible number of vertices at switch $i$ relative to $ALG$ is $(1+t)ALG$, whereas the largest possible number of vertices at switch $i+1$ is $(t^2+t)ALG$. Thus, there are at most $\frac{(t^2+t)ALG}{(1+t)ALG} = t$ times more vertices at switch $i+1$ than at switch $i$.
\\
Combining the above two results, we derive a recurrence relation that bounds below the recourse incurred by all switches up to switch $i$ amortized over the size of the revealed graph when switch $i$ occurs: $f(i) = \frac{1}{t} + \frac{1}{t} f(i-1)$. Furthermore, consider the smallest possible amortized recourse incurred by the first switch. The online algorithm must select the first revealed vertex in order to remain competitive. Then, the algorithm can wait until at most $\lceil t \rceil + 1$ vertices are revealed in the other partition before its first switch, at which point it must construct a solution with at least two vertices for an amortized recourse of $\frac{3}{\lceil t \rceil + 1}$. Alternatively, the algorithm can switch earlier and construct a solution with only one vertex in the other partition, for an amortized recourse of at least $\frac{2}{\lceil t \rceil}$. If we assume that $t \le 2$, then the first switch for any $t$-competitive online algorithm incurs at least $1$ amortized recourse. Therefore, we can set the initial value of the aforementioned recurrence relation as $f(1) = 1$.
\\
Solving this recurrence relation, we obtain that the recourse incurred by all switches up to switch $i$ amortized over the size of the revealed graph when switch $i$ occurs is bounded below by $\frac{(t - 2)(\frac{1}{t})^{i-1} + 1}{t - 1}$. This lower bound applies for any online algorithm against the described adversary when said adversary terminates its input sequence after the algorithm's $i$-th switch.
\\
Therefore, for any $1 < t \le 2$, $\varepsilon > 0$, and $t$-competitive deterministic online algorithm, there exists an instance for which the algorithm incurs at least $\frac{1}{t-1} - \varepsilon$ amortized recourse.
\end{proof}

\section{Maximum Matching Full Proof}
\thmMCMcr*
\begin{proof}
After applying the late operations on all the augmenting path with at most $2L+1$ edges, every remaining augmenting path has length at least $2L+3 = (L+2) + (L+1)$, and the ratio of the $\opt$ size to the $\LGreedy$ size $\frac{\opt(P)}{\LGreedy(P)} \leq \frac{L+2}{L+1}$ on the component $P$.

By selecting $L = \lceil \frac{1}{t-1}\rceil-1$, the $\LGreedy$ algorithm eliminates all augmenting paths that has length at most $2L+1 = 2\lceil \frac{1}{t-1}\rceil-1$. 

Afterwards, every the remaining augmenting path has length at least $2\lceil \frac{1}{t-1}\rceil+1$, and the algorithm attains a competitive ratio at most $\frac{\lceil\frac{1}{t-1}\rceil+1}{\lceil \frac{1}{t-1}\rceil}$.

\runtitle{$t$-competitiveness.} We first show that after switching every augmenting path with length at most $2L+1$, the $\LGreedy$ algorithm attains a competitive ratio of $t$. 
Consider the disjunctive union of the algorithm's matching $M$ and the optimal matching $M^*$, $M \triangle M^*$. The disjunctive union consists of connected components. Let $M_i$ and $M_i^*$ be the edges in the $i$-th component from $M$ and from $M^*$, respectively. According to our algorithm, there are two properties of the connected components:
\begin{enumerate}[label={(\bfseries P\arabic*)}]
\setlength\itemsep{0em}
    \item In any component with odd size, $\frac{|M_i^*|}{|M_i|} \leq \frac{\lceil \frac{1}{t-1}\rceil+1}{\lceil \frac{1}{t-1}\rceil}$.
    \item In any component with even size, $|M_i^*| = |M_i|$.
\end{enumerate}
Therefore, the competitive ratio by the $\LGreedy$ algorithm is at most $\frac{\sum_i |M_i^*|}{\sum_i |M_i|} \leq \max_i \frac{|M_i^*|}{|M_i|} \leq \frac{\lceil\frac{1}{t-1}\rceil+1}{\lceil\frac{1}{t-1}\rceil} = 1 + \frac{1}{\lceil\frac{1}{t-1}\rceil} \leq 1 + \frac{1}{1/(t-1)} = 1+(t-1) = t$.
\end{proof}

\bigskip

\thmMCMub*
\begin{proof}
Consider the connected component generated by the union of edges chosen by $\LGreedy$ or by $\opt$. Let $C_i$ be the components in the graph and $\tr_i$ be the total recourse incurred by the elements in $C_i$ (from very beginning till the end), the amortized recourse given by the whole graph will be upper-bounded by $\max_i \frac{\tr_i}{|C_i|}$ (Observation~\ref{obs:sum-max}). 

Now, we analyze the upper bound of the amortized recourse for any connected component. 
First we observe that the path eliminations only happen at odd-size components with length $3, 5, 7, \cdots, 2(\lceil \frac{1}{t-1}\rceil-1)+1$ (note that we only consider the case when $1< t < 2$, thus $\lceil \frac{1}{t-1}\rceil \geq 2$). 
Moreover, in our algorithm, we first check the competitiveness before any actual movement.
Hence, for an augmenting path with $2n+1$ edges that triggers a path elimination, the recourse incurred by this path elimination is $2n$.
Therefore, for such a $2n+1$-edge augmenting path, the total recourse incurred by the $2n+1$ elements in the path is at most $1+\sum_{i=1}^n 2i = 1 + n\cdot(n+1)$ (the $1$ is from the first edge in this path, which may be late accepted). Hence, the amount of amortized recourse incurred by this component is at most $\frac{1+n\cdot(n+1)}{2n+1}$.
Since the amortized recourse incurred by an augmenting path increases as the augmenting path gets longer and the $\LGreedy$ algorithm eliminates augmenting paths with length at most $2(\lceil\frac{1}{t-1}\rceil-1)+1$, the amortized recourse for a component is at most $\frac{(\lceil\frac{1}{t-1}\rceil-1)\cdot(\lceil\frac{1}{t-1}\rceil-1+1)+1}{2(\lceil\frac{1}{t-1}\rceil-1)+1} = \frac{(\lceil\frac{1}{t-1}\rceil-1)\cdot\lceil\frac{1}{t-1}\rceil+1}{2\lceil\frac{1}{t-1}\rceil-1}$.

There are two cases of the target competitive ratio $t$, $t = 1 + \frac{1}{j}$ for some integer $j$ or otherwise. 
First, we consider the case when $t = 1+\frac{1}{j}$. 
In this case, the amortized recourse for a component is at most 
$\frac{(\lceil\frac{1}{t-1}\rceil-1)\cdot\lceil\frac{1}{t-1}\rceil+1}{2\lceil\frac{1}{t-1}\rceil-1} = \frac{(\frac{1}{t-1}-1)\cdot\frac{1}{t-1}+1}{2\cdot\frac{1}{t-1}-1} = \frac{(1-(t-1))\cdot\frac{1}{t-1}+(t-1)}{2-(t-1)} = \frac{(2-t)\cdot\frac{1}{t-1}+(t-1)}{3-t} = \frac{(2-t)+(t-1)^2}{(t-1)(3-t)} = \frac{2-t}{(t-1)(3-t)} + \frac{t-1}{3-t}$.

For the case when there is no integer $j$ such that $t = 1 + \frac{1}{j}$, we round down $t$ to the largest $t^*\leq t$ such that $t^* = 1+\frac{1}{j}$ for some integer $j$. 
By eliminating all augmenting paths that have length at most $\frac{2}{t^*-1}-1 = \frac{2}{1+\frac{1}{j}-1}-1 = 2j-1$, the amount of incurred amortized recourse is at most $\frac{2-t^*}{(t^*-1)(3-t^*)} + \frac{t^*-1}{3-t^*}$, and the algorithm attains a competitive ratio of $t^* \leq t$.
\end{proof}

\bigskip

\thmMCMlb*
\begin{proof}
If $\frac{n+2}{n+1} \leq t < \frac{n+1}{n}$ for some integer $n \geq 1$, we release a sequence of $2n+1$ edges that form a path.

\runtitle{Invariants.} Consider any $1 \leq k \leq n$, the following invariants hold for any $t$-competitive algorithm:
\begin{enumerate}[label={(\bfseries I\arabic*)}]
\setlength\itemsep{0em}
    \item For a path with length $2k+1$, a $t$-competitive algorithm has to accept $k+1$ edges.
    \item For a path with length $2k$, a $t$-competitive algorithm has to accept $k$ edges.
    \item When an instance is increased from a $2(k-1)+1$ path to a $2k+1$ path, a $t$-competitive algorithm incurs at least $2k$ amount of recourse.
\end{enumerate}
Given the invariants I1, I2, and I3, the $2n+1$-path instance incurs recourse with total amount at least $\sum_{k=1}^n (2k) = n\cdot(n+1)$. Therefore, any $t$-competitive algorithm incurs at least $\frac{n\cdot(n+1)}{2n+1}$ amortized recourse for this $2n+1$-path instance. Let $t^* = \frac{n+2}{n+1} \leq t$. It follows that $n = \frac{2-t^*}{t^*-1}$. 
The amortized recourse is at least $\frac{n\cdot(n+1)}{2n+1} = \frac{\frac{2-t^*}{t^*-1}\cdot(\frac{2-t^*}{t^*-1}+1)}{2\cdot\frac{2-t^*}{t^*-1}+1} = \frac{1}{t^*-1}\cdot\frac{(2-t^*)((2-t^*)+(t^*-1))}{2\cdot(2-t^*)+(t^*-1)} = \frac{1}{t^*-1}\cdot\frac{2-t^*}{3-t^*} = \frac{(2-t^*)}{(t^*-1)(3-t^*)}$. 
Therefore, the total amortized recourse is at least $\frac{n\cdot(n+1)}{2n+1} = \frac{(2-t^*)}{(t^*-1)(3-t^*)}$.

\medskip

\runtitle{Proof of the invariants.}
Now we prove that the invariants are true throughout the instance. Recall that $\frac{n+2}{n+1} \leq t < \frac{n+1}{n}$ and $1\leq k \leq n$.
\begin{itemize}
\setlength\itemsep{0em}
    \item Proof of (I1): For any path with $2k+1$ edges, an $t$-competitive algorithm should accept at least $\frac{\opt}{t}$ edges, where $\opt$ is the cost of the offline optimal solution and has $k+1$ edges. Hence, the online algorithm should accept $\geq\frac{\opt}{t} = \frac{k+1}{t} > \frac{k+1}{(n+1)/n} = \frac{n\cdot(k+1)}{n+1} \geq \frac{k\cdot(k+1)}{k+1} = k$ edges (the last inequality is because $k\leq n$). There is a strictly greater and hence the online algorithm should accept $k+1$ edges for this $2k+1$-path. Since the optimal solution has $k+1$ edges, this means that the algorithm must accept exactly $k+1$ edges.
    \item Proof of (I2): For any path with $2k$ edges, the optimal solution has $k$ edges. Therefore, a $t$-competitive algorithm should accept at least $\frac{\opt}{t} = \frac{k}{t}$ edges. Since $t < \frac{n+1}{n}$, then $\frac{k}{t} > \frac{k}{(n+1)/n} = \frac{n\cdot k}{n+1} \geq \frac{k\cdot k}{k+1} = \frac{(k+1)(k-1)+1}{k+1} = (k-1) + \frac{1}{k+1}$. 
    Since $k > 0$, then $0<\frac{1}{k+1}<1$. Therefore, the algorithm should accept at least $(k-1)+\lceil\frac{1}{k+1}\rceil = k$ edges. Since the optimal solution has $k$ edges, this means that the algorithm must accept exactly $k$ edges.
    \item The invariant (I3) can be proven by a two-step growing for a $2k-1$ path to a $2k+1$ one. First, for the $2k-1$ path, from (I1), the algorithm accepts $k$ edges. When one more edge arrives and the instance is a $2k$-path, the algorithm has to accept exactly $k$ edges. At this step, the algorithm might incur recourse or not. Finally, when another edge arrives and the instance is a $2k+1$-path, the algorithm has to accept $k+1$ edges, which form a disjoint set with the previously accepts ones. Hence, the algorithm incurs $2k$ recourse at this step (if the algorithm is smart enough to accept the last revealed edge instead of late accepting it).
\end{itemize}
\end{proof}

\section{Vertex Cover (Full Version)}

In this section, we propose a special version of the $\tas_t$ algorithm, $\texttt{Duo-Halve}$, that attains a competitive ratio of $2-\frac{2}{\opt}$ for the $\MVC$ problem with optimal vertex cover size $\opt$.
During the process, the algorithm maintains a maximal matching $M(\ins)$ on the current input graph $\ins$ to construct a solution $\PEA(\ins)$ (we omit the parameter $\ins$ when the context is clear). 
The algorithm only selects vertices that are saturated by the matching and rejects as many vertices as possible from the two latest matched edges. 
It is clear that if we reject $2$ such vertices, the competitive ratio of the algorithm is at most $2-\frac{2}{\opt}$ since $\opt \geq |M|$. 
We show that if we cannot reject $2$ such vertices, the optimal solution size must be big so the competitive ratio is still $2-\frac{2}{\opt}$ (Theorem~\ref{thm:VCCompRatio_new}).

In the following discussion, we use some terminology. 
Let $\me1$ and $\me2$ be the most and the second-most recently matched edges respectively. Also, let $\vmi$ be the vertices saturated by the maximal matching $M(\ins)$.
The $\pea$ algorithm partitions the vertices into three groups: \textbf{Group-1}: the endpoints of $\me1$ or $\me2$, \textbf{Group-2}: the vertices in $\vm$ but not in Group-1, and \textbf{Group-3}: the vertices in $V\setminus\vm$.

A matched edge is \emph{full} if both of its endpoints are selected by the algorithm $\pea$. Otherwise, the edge is \emph{half}. 
The algorithm \emph{halves} a matched edge $(u,v)$ by producing a valid vertex cover while only accepting either $u$ or $v$.
A matched edge $(u,v)$ is \emph{halvable} if there exists a vertex cover that contains exactly one of $u$ and $v$. 
A half edge \emph{flips} if the accept/reject status of its endpoints is swapped.
A \emph{configuration} of a set of edges is a set of accept/reject statuses associated with each endpoint of those edges.

\runtitle{\texttt{Duo-Halve} Algorithm ($\PEA$).} 
When a new vertex $v$ arrives, if an edge $(p, v)$ is added to $M(\ins)$, then it introduces a new $\me1$ (namely $(p, v)$). The algorithm first accepts all \textbf{Group-2} vertices that are adjacent to $v$. 
Then, the algorithm decides the assignment of $\me1$ and $\me2$ by testing if they can be both halved by the \texttt{HalveBoth procedure} as follows.

If $\me1$ is half or one of its endpoints is $v$, the $\PEA$ algorithm halves $\me2$ if it is valid giving the current configuration of $\me1$ or $v$ being accepted. 
Otherwise, $\PEA$ halves $\me2$ if it is valid by flipping $\me1$ or late-accepting $p$. 
Otherwise, in the vertex cover returned by $\PEA$, $\me1$ is full and $\me2$ is unchanged. (See Figure~\ref{fig:flow} in Appendix for a detailed flow diagram about this \texttt{HalveBoth} procedure.)   




We start our analysis by showing that the $\PEA$ algorithm is valid, and that it runs in polynomial time.
Intuitively, the algorithm maintains a valid solution as it greedily covers edges using vertices in the maximal matching, with the exception of $\me1$ and $\me2$, where it carefully ensures that a feasible configuration is chosen.
Furthermore, the most computationally-expensive component of the $\PEA$ algorithm, which checks the validity of a constant number of configurations by looking at the neighborhoods of $\me1$ and $\me2$, runs in polynomial time.

\begin{restatable}{lemma}{lmVCtime}
\label{lm:vc-time}
The $\PEA$ algorithm always returns a valid vertex cover in $O(n^3)$ time, where $n$ is the number of vertices in the graph.
\end{restatable}
The $\PEA$ algorithm always returns a valid vertex cover in $O(n^3)$ time, where $n$ is the number of vertices in the graph.
\begin{proof}
We prove the validity of the $\pea$ algorithm by induction on the size of the input graph. 
Initially, when the graph is empty, the solution returned by $\pea$ is empty and feasible. 
Suppose that the vertex cover maintained by $\pea$ is valid, we show that after the arrival of a new vertex $v$, the new assignment by $\pea$ is also valid.

Let $\vm$ and $V$ denote the vertices saturated by the maximum matching maintained by $\pea$ and the set of vertices that have been released before the arrival of $v$. 
There are three classes of the new edges $(u,v)$ incident to the new vertex $v$: 1) $u \in \vm$ but not in $\me1$ or $\me2$, 2) $u \in \me1$ or in $\me2$, or 3) $u \in V\setminus \vm$.

By the definition of the $\pea$ algorithm, all class-$1$ edges $(u,v)$ will be covered by $u \in \vm$. By the definition of the \texttt{HalveBoth} procedure in the $\PEA$ algorithm, all new class-$2$ edges $(u, v)$ will be covered by $v$ (if there is a new matched edge introduced by the arrival of $v$) or by $u$. Class-$3$ edges only occur when there is a new matched edge $(p,v)$, where $p$ may or may not be $u$. In this case, the \texttt{HalveBoth} procedure makes sure that the edge is covered by~$v$ (or by~$u$ if the edge $(u,v)$ is added into the matching). Therefore, the $\pea$ algorithm returns a valid vertex cover.

Whenever a new vertex $v$ arrives, the $\pea$ algorithm checks all its adjacent vertices $u$, accepts them if $(u,v)$ is in class $1$, and then runs procedure \texttt{HalveBoth}. The \texttt{HalveBoth} procedure checks the coverage of edges where at least one of their endpoints is in $\me1$ or $\me2$ for a constant number of possible configurations of $\me1$ and $\me2$ being full or half. 
Hence, the total time complexity incurred by the arrival of a single vertex is $O(|E|)$ where $|E|$ is the number of edges in the final graph. Therefore, the total time complexity is $O(n^3)$.
\end{proof}

After establishing the correctness of the $\PEA$ algorithm, we describe a simple yet crucial observation about the relationship between $|M|$, $\opt$, and $\frac{\PEA}{\opt}$.

\begin{observation}\label{obs:subset}
If $\opt \geq |M|+1$, then $\frac{\PEA}{\opt} \leq 2-\frac{2}{\opt}$.
\end{observation}
\hide{\begin{proof}
Since the $\PEA$ algorithm only accepts the vertices saturated by $M$, $\PEA \le 2|M|$. Therefore, $\frac{\PEA}{\opt} \le \frac{2|M|}{|M| + 1} = \frac{2|M|+2-2}{|M|+1}$ $= 2-\frac{2}{|M|+1}\leq 2-\frac{2}{\opt}$.
\end{proof}}

The above observation allows us to show another condition under which $\opt \ge |M| + 1$ and thus our desired bound of $2 - \frac{2}{\opt}$ on the competitive ratio is achieved.
The contrapositive of this lemma also provides us with a necessary characteristic of any ``problematic'' instance.

\begin{restatable}{lemma}{lmVCsubset}
\label{lm:subset}
If $\opt \setminus \vm \neq \varnothing$ or there exists an edge $(u,v)$ in $M$ such that $\{u,v\}\subseteq \opt$, then $\frac{\PEA}{\opt} \leq 2-\frac{2}{\opt}$.
\end{restatable}
\begin{proof}
Assume there exists at least one vertex in $\opt$ but not in $\alg$. Since $\opt$ must select one vertex from each matched edges in $M$, $\opt \geq |M|+1$.

If there is a matched edge whose endpoints are both selected by $\opt$, since $\opt$ must select one vertex from each of the other matched edges, $\opt \geq 2 + (|M|-1) = |M| + 1$.

By Observation~\ref{obs:subset}, $\frac{\PEA}{\opt} \leq 2-\frac{2}{\opt}$ in both of the cases.
\end{proof}

\begin{corollary}\label{col:subset}
If $2-\frac{2}{\opt} < \frac{\PEA}{\opt}$, then $\opt \subseteq \vm$.
\end{corollary}
\hide{
\begin{proof}
This corollary follows directly as the contrapositive to Lemma~\ref{lm:subset}.
\end{proof}
}

We then present another observation, which helps with our analysis of $\PEA$ by providing a feasibility guarantee for any newly-revealed $\me1$.

\begin{observation}\label{obs:shift-me1}
Immediately after an edge is added to the matching, $\me1$ is halvable by accepting the newly-revealed vertex.
\end{observation}
\hide{\begin{proof}
Let $\me1 = (p, v)$, where $v$ is the newly-revealed vertex.
Prior to $v$ being revealed, $p$ is unmatched and thus rejected.
Furthermore, accepting $v$ covers all newly-revealed edges.
Therefore, $\me1$ is halvable by accepting $v$ and rejecting $p$.
\end{proof}}

Prior to bounding the competitiveness of the $\PEA$ algorithm, we show one more structural property of this algorithm.
Namely, if $\PEA$ fails to produce a solution in which it accepts only one vertex of $\me1$, then there is no possible solution contained in $\vm$ that only accepts one vertex of $\me1$.
In particular, this applies to $\opt$ as well, which by Lemma~\ref{lm:subset} means that our desired bound on the competitive ratio is achieved.

We achieve this by considering the rejected vertices in the neighborhoods of the endpoints of $\me1$.
Specifically, we show that if $\me1$ is full, then each of these neighborhoods must contain a vertex outside of $\vm$, and thus $\opt$ must either contain a vertex outside of $\vm$ or both endpoints of $\me1$.

\lmVCfull*
\begin{proof}
Assume that $\me1 = (p, v)$ (where $v$ was revealed later) is full and consider the ``blocking set'' of each of its endpoints, i.e. the set of rejected vertices in its neighborhood.
We partition the possible blocking set vertices into three classes: 1) vertices in $\vm$ but not in $\me2$, 2) vertices in $\me2$, and 3) vertices in $V\setminus\vm$. We prove the lemma by showing the following claim.

\runtitle{Claim:} If $\me1$ is full, then both the blocking set of $p$ and of $v$ contain at least one vertex in $V\setminus \vm$.

Let $x$ and $y$ be a class-$3$ vertex in the blocking set of $p$ and of $v$, respectively (it is possible that $x=y$).
If the claim is true, then the optimal solution has to choose at least two vertices in $\{p,v,x,y\}$, since the three edges $(p,v)$, $(p,x)$, and $(y,v)$ must be covered. This proves the lemma.

Now, we prove the correctness of the claim by showing that the blocking sets of $p$ and of $v$ cannot contain class-$1$ vertices, and cannot contain only a class-$2$ vertex.
Note that a blocking set cannot contain more than one class-$2$ vertex, as that would imply that both endpoints of $\me2$ are rejected.,

First, consider any class-$1$ blocking set vertex $x$, which is in $\vm$ but not in $\me2$.
Immediately prior to $(p, v)$ being added to the matching, all edges $(p,x)$ are covered by $x$ since $x$ must be accepted.
When $v$ is revealed, all of the considered vertices $x$ are (late-)accepted, and all edges $(v,x)$ are covered by $x$.
Furthermore, none of the considered vertices will be late-rejected after $(p, v)$ is added to the matching (by definition of $\PEA$).
Thus, all edges between $\{p,v\}$ and the considered vertices are covered. Hence, none of the considered vertices can be in the blocking set of either $p$ or $v$.

Next, assume that either blocking set contains only a vertex in $\me2$.
By the definition of the \texttt{HalveBoth} function in the $\PEA$ algorithm, if a valid configuration exists such that $\me1$ is half, then it would have been chosen by $\PEA$.
However, if either $p$ or $v$ is blocked only by a vertex in $\me2$, $\me1$ is halvable by making $\me2$ full.
This is a contradiction, so each blocking set must contain at least one vertex outside of $\vm$, and the claim is proven.
\end{proof}

\begin{corollary}\label{col:full-me1}
In the assignment of $\pea$, if $\me1$ is full, then $ \frac{\pea}{\opt} \leq 2-\frac{2}{\opt}$.
\end{corollary}
\hide{
\begin{proof}
By Lemma~\ref{lm:full-me1} and Observation~\ref{obs:subset}.
\end{proof}
}

After taking care of the above preliminaries, we show that our desired bound of $2 - \frac{2}{\opt}$ on the competitiveness of the $\PEA$ algorithm is indeed attainable.

\VCcr*
\begin{proof}
We prove the theorem by considering three possible states for $\PEA$'s solution: 1) both $\me1$ and $\me2$ are half, 2) $\me1$ is half and $\me2$ is full, and 3) $\me1$ is full.
In each case, we show that either $\PEA \le 2|M| - 2 \le 2\opt - 2$ or $\opt \ge |M| + 1$, both of which imply that $\frac{\PEA}{\opt} \le 2 - \frac{2}{\opt}$.
In state $1$, there are at least two half edges in $M$, so $\PEA \le 2|M| - 2$ and thus $\frac{\PEA}{\opt} \le 2 - \frac{2}{\opt}$.
In state $3$, by Corollary~\ref{col:full-me1}, $\frac{\PEA}{\opt} \le 2 - \frac{2}{\opt}$.
Therefore, we consider state $2$ for the remainder of the proof.

Then, assume that $\me2$ is full and consider the ``blocking set'' of each of its endpoints, i.e. the set of rejected vertices in its neighborhood. 
We partition the possible blocking set vertices into three classes: 1) vertices in $\vm$ but not in $\me1$, 2) vertices in $\me1$, and 3) vertices in $V\setminus\vm$.
We use this partition to show that, based on the composition of each blocking set, either there are two half edges and $\PEA \le 2|M| - 2 \le 2\opt - 2$, or $\opt \ge |M| + 1$ (which implies that $\frac{\PEA}{\opt} \le 2 - \frac{2}{\opt}$ by Observation~\ref{obs:subset}).
This proves the theorem.

First, consider the vertices in $\vm$ that are not in $\me1$.
If any of these vertices is in the blocking set of either endpoint of $\me2$, then there must be at least one half edge in $M$ other than $\me1$, which implies that $\pea \leq 2|M|-2 \leq 2\opt-2$. 

Then, assume that either blocking set contains only a vertex in $\me1$.
By the definition of the \texttt{HalveBoth} function in the $\PEA$ algorithm, if a valid configuration exists such that both $\me1$ and $\me2$ are half (with all other edges in $M$ full as per the previous paragraph), then it would have been chosen by $\PEA$.
Therefore, any vertex cover must contain either both vertices in $\me1$ or both vertices in $\me2$.
Thus, $\opt \ge |M| + 1$, which implies $\frac{\PEA}{\opt} \le 2 - \frac{2}{\opt}$ by Observation~\ref{obs:subset}.

Finally, if the blocking set for each vertex in $\me2$ contains a vertex outside of $\vm$, $\opt$ must contain either a vertex outside of $\vm$ or both vertices in $\me2$.
This implies that $\opt \ge |M| + 1$ and thus $\frac{\PEA}{\opt} \le 2 - \frac{2}{\opt}$ by Observation~\ref{obs:subset}.

Therefore there is no instance such that $\frac{\PEA}{\opt} > 2-\frac{2}{\opt}$, and thus the $\PEA$ algorithm is $(2-\frac{2}{\opt})$-competitive.
\end{proof}

With the upper bound on the competitive ratio of the $\PEA$ algorithm proven, we then show an upper bound on the amortized recourse incurred by $\PEA$.
Prior to doing so, we show a restriction on the possible changes to the $\PEA$ algorithm's solution when a new edge is added to the maximal matching $M$.
By extension, this limits the possible number of late operations incurred in this scenario.

\begin{restatable}{lemma}{lmVCshift}
\label{lm:shift-me2}
When an edge shifts from $\me1$ to $\me2$, it either goes from full to half or remains unchanged.
\end{restatable}
\begin{proof}
First, note that when a shift occurs, it is because a new edge $\me1$ is added to the matching, and this edge is incident to the newly-revealed vertex $v$.

Let $C$ denote the vertices accepted by $\pea$ prior to the arrival of $v$.
The $\pea$ algorithm maintains a valid vertex cover throughout the process (by Lemma~\ref{lm:vc-time}), so by Observation~\ref{obs:shift-me1}, $C\cup \{v\}$ is a valid vertex cover.
Therefore, the $\pea$ algorithm accepts $v$ and does not change the assignment of $\me2$ if $\me2$ was half (by the procedure \texttt{HalveBoth}).
That is, the accept/reject status of endpoints of $\me2$ is not swapped, and $\me2$ cannot go from half to full.

Therefore, when an edge shifts from $\me1$ to $\me2$, it either goes from full to half or remains unchanged.
\end{proof}

In particular, Lemma~\ref{lm:shift-me2} tells us that a half edge shifting from $\me1$ to $\me2$ will not incur any late operations (i.e. it won't become full and it won't flip).

In the worst case, there can be multiple late accepts on \textbf{Group-2} vertices
, and $4$ late operations on vertices in $\me1$ and $\me2$. In the following theorem, we use a potential function to prove that the amortized late operations per vertex is at most $\frac{10}{3}$.

\VCar*
\begin{proof}
We prove the theorem by using a potential function. 
For each vertex $v_i$, let $\lo_i$ denote the number of incurred late operations when it arrives. Assume there exists a potential function $\Phi$ such that the potential after the arrival of $v_i$ and all the incurred late operations is $\Phi_i \geq \Phi_0$, and $\lo_i + \Phi_i-\Phi_{i-1} \leq c$. Then, the total number of late operations is upper-bounded by $c\cdot n + \Phi_0 - \Phi_n \leq c\cdot n$, where $n$ is the number of vertices. Thus, the amortized recourse is bounded above by $c$.

At any given point in the input sequence where the matching constructed by $\PEA$ contains at least $2$ edges, the status of $\me1 = (p, v)$ and $\me2 = (u, w)$ is characterized by one of the following $6$ states, where $v$ is the vertex of $\me1$ revealed last:
\begin{enumerate}
    \item $u$, $w$, and $p$ are accepted
    \item $u$, $w$, and $v$ are accepted
    \item $u$, $w$, $p$, and $v$ are accepted
    \item $p$ and one of $u$ and $w$ are accepted
    \item $v$ and one of $u$ and $w$ are accepted
    \item $p$, $v$, and one of $u$ and $w$ are accepted
\end{enumerate}

Furthermore, define a half edge $(u, v)$ as being \emph{free} if it is halvable both by accepting $u$ and by accepting $v$.
Also, define a half edge as being \emph{expired} if it is neither $\me1$ nor $\me2$.
Finally, define $A$ as the set of vertices accepted by $\PEA$.
Then, define the potential function $\Phi$ as
\begin{equation*}
\Phi := |\{(u, v) | (u, v) \mbox{ expired half}\}| + \frac{1}{3}|A \cap (\me1 \cup \me2)| + \frac{2}{3} \cdot \mathds{1}[\mbox{free half } \me2]
\end{equation*}

Now, we show that, for any possible state transition triggered by a newly-revealed vertex, the number of incurred late operations $\lo$ added to the change in potential $\dphi$ is bounded above by $\frac{10}{3}$.
Note that, for any newly-revealed vertex $v$, $v$ may be adjacent to $k \ge 0$ rejected vertices that are matched by some expired edge.
This incurs $k$ late operations, but also decreases $\Phi$ by $k$, so this may be ignored when computing $\lo + \dphi$.

\begin{itemize}
    \item $1 \to 1$, without shift: $\lo + \dphi \le 0 + (0 + 0 + 0) = 0$
    \item $1 \to 1$, with shift: not possible, because an edge shifting from $\me1$ to $\me2$ won't go from half to full (Lemma~\ref{lm:shift-me2}) 
    \item $1 \to 2$, without shift: $\lo + \dphi \le 2 + (0 + 0 + 0) = 2$
    \item $1 \to 2$, with shift: not possible, because an edge shifting from $\me1$ to $\me2$ won't go from half to full (Lemma~\ref{lm:shift-me2}) 
    \item $1 \to 3$, without shift: $\lo + \dphi \le 1 + (0 + \frac{1}{3} + 0) = \frac{4}{3}$
    \item $1 \to 3$, with shift: not possible, because $\me1$ is always halvable after a shift (Observation~\ref{obs:shift-me1}) 
    \item $1 \to 4$, without shift: $\lo + \dphi \le 1 + (0 - \frac{1}{3} + \frac{2}{3}) = \frac{4}{3}$
    \item $1 \to 4$, with shift: $\lo + \dphi \le 1 + (0 - \frac{1}{3} + \frac{2}{3}) = \frac{4}{3}$
    \item $1 \to 5$, without shift: $\lo + \dphi \le 3 + (0 - \frac{1}{3} + \frac{2}{3}) = \frac{10}{3}$
    \item $1 \to 5$, with shift: $\lo + \dphi \le 0 + (0 - \frac{1}{3} + \frac{2}{3}) = \frac{1}{3}$
    \item $1 \to 6$, without shift: not possible, because $\PEA$ won't halve $\me2$ if $\me1$ is full 
    \item $1 \to 6$, with shift: not possible, because $\PEA$ prioritizes halving $\me1$ and this is always possible after a shift (Observation~\ref{obs:shift-me1}) 
    
    \item $2 \to 1$, without shift: $\lo + \dphi \le 2 + (0 + 0 + 0) = 2$
    \item $2 \to 1$, with shift: not possible, because an edge shifting from $\me1$ to $\me2$ won't go from half to full (Lemma~\ref{lm:shift-me2}) 
    \item $2 \to 2$, without shift: $\lo + \dphi \le 0 + (0 + 0 + 0) = 0$
    \item $2 \to 2$, with shift: not possible, because an edge shifting from $\me1$ to $\me2$ won't go from half to full (Lemma~\ref{lm:shift-me2}) 
    \item $2 \to 3$, without shift: $\lo + \dphi \le 1 + (0 + \frac{1}{3} + 0) = \frac{4}{3}$
    \item $2 \to 3$, with shift: not possible, because $\me1$ is always halvable after a shift (Observation~\ref{obs:shift-me1}) 
    \item $2 \to 4$, without shift: $\lo + \dphi \le 3 + (0 - \frac{1}{3} + \frac{2}{3}) = \frac{10}{3}$
    \item $2 \to 4$, with shift: $\lo + \dphi \le 1 + (0 - \frac{1}{3} + \frac{2}{3}) = \frac{4}{3}$
    \item $2 \to 5$, without shift: $\lo + \dphi \le 1 + (0 - \frac{1}{3} + \frac{2}{3}) = \frac{4}{3}$
    \item $2 \to 5$, with shift: $\lo + \dphi \le 0 + (0 - \frac{1}{3} + \frac{2}{3}) = \frac{1}{3}$
    \item $2 \to 6$, without shift: not possible, because $\PEA$ won't halve $\me2$ if $\me1$ is full 
    \item $2 \to 6$, with shift: not possible, because $\PEA$ prioritizes halving $\me1$ and this is always possible after a shift (Observation~\ref{obs:shift-me1}) 
    
    \item $3 \to 1$, without shift: $\lo + \dphi \le 1 + (0 - \frac{1}{3} + 0) = \frac{2}{3}$
    \item $3 \to 1$, with shift: $\lo + \dphi \le 1 + (0 - \frac{1}{3} + 0) = \frac{2}{3}$
    \item $3 \to 2$, without shift: $\lo + \dphi \le 1 + (0 - \frac{1}{3} + 0) = \frac{2}{3}$
    \item $3 \to 2$, with shift: $\lo + \dphi \le 0 + (0 - \frac{1}{3} + 0) = -\frac{1}{3}$
    \item $3 \to 3$, without shift: $\lo + \dphi \le 0 + (0 + 0 + 0) = 0$
    \item $3 \to 3$, with shift: not possible, because $\me1$ is always halvable after a shift (Observation~\ref{obs:shift-me1}) 
    \item $3 \to 4$, without shift: $\lo + \dphi \le 2 + (0 - \frac{2}{3} + \frac{2}{3}) = 2$
    \item $3 \to 4$, with shift: $\lo + \dphi \le 2 + (0 - \frac{2}{3} + \frac{2}{3}) = 2$
    \item $3 \to 5$, without shift: $\lo + \dphi \le 2 + (0 - \frac{2}{3} + \frac{2}{3}) = 2$
    \item $3 \to 5$, with shift: $\lo + \dphi \le 1 + (0 - \frac{2}{3} + \frac{2}{3}) = 1$
    \item $3 \to 6$, without shift: not possible, because $\PEA$ won't halve $\me2$ if $\me1$ is full 
    \item $3 \to 6$, with shift: not possible, because $\PEA$ prioritizes halving $\me1$ and this is always possible after a shift (Observation~\ref{obs:shift-me1}) 
    
    \item $4 \to 1$, without shift: $\lo + \dphi \le 1 + (0 + \frac{1}{3} + 0) = \frac{4}{3}$
    \item $4 \to 1$, with shift: not possible, because an edge shifting from $\me1$ to $\me2$ won't go from half to full (Lemma~\ref{lm:shift-me2}) 
    \item $4 \to 2$, without shift: $\lo + \dphi \le 3 + (0 + \frac{1}{3} + 0) = \frac{10}{3}$
    \item $4 \to 2$, with shift: not possible, because an edge shifting from $\me1$ to $\me2$ won't go from half to full (Lemma~\ref{lm:shift-me2}) 
    \item $4 \to 3$, without shift: $\lo + \dphi \le 2 + (0 + \frac{2}{3} + 0) = \frac{8}{3}$
    \item $4 \to 3$, with shift: not possible, because $\me1$ is always halvable after a shift (Observation~\ref{obs:shift-me1}) 
    \item $4 \to 4$, without shift: $\lo + \dphi \le 2 + (0 + 0 - \frac{2}{3}) = \frac{4}{3}$ 
    \item $4 \to 4$, with shift: $\lo + \dphi \le 1 + (1 + 0 + \frac{2}{3}) = \frac{8}{3}$
    \item $4 \to 5$, without shift: $\lo + \dphi \le 4 + (0 + 0 - \frac{2}{3}) = \frac{10}{3}$ 
    \item $4 \to 5$, with shift: $\lo + \dphi \le 0 + (1 + 0 + \frac{2}{3}) = \frac{5}{3}$
    \item $4 \to 6$, without shift: $\lo + \dphi \le 3 + (0 + \frac{1}{3} - \frac{2}{3}) = \frac{8}{3}$ 
    \item $4 \to 6$, with shift: not possible, because $\PEA$ prioritizes halving $\me1$ and this is always possible after a shift (Observation~\ref{obs:shift-me1}) 
    
    \item $5 \to 1$, without shift: $\lo + \dphi \le 3 + (0 + \frac{1}{3} + 0) = \frac{10}{3}$
    \item $5 \to 1$, with shift: not possible, because an edge shifting from $\me1$ to $\me2$ won't go from half to full (Lemma~\ref{lm:shift-me2}) 
    \item $5 \to 2$, without shift: $\lo + \dphi \le 1 + (0 + \frac{1}{3} + 0) = \frac{4}{3}$
    \item $5 \to 2$, with shift: not possible, because an edge shifting from $\me1$ to $\me2$ won't go from half to full (Lemma~\ref{lm:shift-me2}) 
    \item $5 \to 3$, without shift: $\lo + \dphi \le 2 + (0 + \frac{2}{3} + 0) = \frac{8}{3}$
    \item $5 \to 3$, with shift: not possible, because $\me1$ is always halvable after a shift (Observation~\ref{obs:shift-me1}) 
    \item $5 \to 4$, without shift: $\lo + \dphi \le 4 + (0 + 0 - \frac{2}{3}) = \frac{10}{3}$ 
    \item $5 \to 4$, with shift: $\lo + \dphi \le 1 + (1 + 0 + \frac{2}{3}) = \frac{8}{3}$
    \item $5 \to 5$, without shift: $\lo + \dphi \le 2 + (0 + 0 - \frac{2}{3}) = \frac{4}{3}$ 
    \item $5 \to 5$, with shift: $\lo + \dphi \le 0 + (1 + 0 + \frac{2}{3}) = \frac{5}{3}$
    \item $5 \to 6$, without shift: $\lo + \dphi \le 3 + (0 + \frac{1}{3} - \frac{2}{3}) = \frac{8}{3}$ 
    \item $5 \to 6$, with shift: not possible, because $\PEA$ prioritizes halving $\me1$ and this is always possible after a shift (Observation~\ref{obs:shift-me1}) 
    
    \item $6 \to 1$, without shift: $\lo + \dphi \le 2 + (0 + 0 + 0) = 2$
    \item $6 \to 1$, with shift: $\lo + \dphi \le 1 + (1 + 0 + 0) = 2$
    \item $6 \to 2$, without shift: $\lo + \dphi \le 2 + (0 + 0 + 0) = 2$
    \item $6 \to 2$, with shift: $\lo + \dphi \le 0 + (1 + 0 + 0) = 1$
    \item $6 \to 3$, without shift: $\lo + \dphi \le 1 + (0 + \frac{1}{3} + 0) = \frac{4}{3}$
    \item $6 \to 3$, with shift: not possible, because $\me1$ is always halvable after a shift (Observation~\ref{obs:shift-me1}) 
    \item $6 \to 4$, without shift: $\lo + \dphi \le 3 + (0 - \frac{1}{3} - \frac{2}{3}) = 2$ 
    \item $6 \to 4$, with shift: $\lo + \dphi \le 2 + (1 - \frac{1}{3} + \frac{2}{3}) = \frac{10}{3}$
    \item $6 \to 5$, without shift: $\lo + \dphi \le 3 + (0 - \frac{1}{3} - \frac{2}{3}) = 2$ 
    \item $6 \to 5$, with shift: $\lo + \dphi \le 1 + (1 - \frac{1}{3} + \frac{2}{3}) = \frac{7}{3}$
    \item $6 \to 6$, without shift: $\lo + \dphi \le 2 + (0 + 0 - \frac{2}{3}) = \frac{4}{3}$ 
    \item $6 \to 6$, with shift: not possible, because $\PEA$ prioritizes halving $\me1$ and this is always possible after a shift (Observation~\ref{obs:shift-me1}) 
\end{itemize}

Thus, for any possible state transition, $\lo + \dphi \le \frac{10}{3}$.
Furthermore, $\Phi_0 = 0$ and $\Phi_i \ge 0$.
Therefore, the amortized recourse incurred by $\PEA$ is bounded above by $\frac{10}{3}$.
\end{proof}

After establishing an upper bound on the amortized recourse incurred by the $\PEA$ algorithm, we show a lower bound by constructing a family of instances that alternates between incurring a late accept on a \textbf{Group-2} vertex, 
and $4$ late operations on $\me1$ and $\me2$.
This is illustrated in Figure~\ref{fig:vc-ar-lb} (see Appendix).

\VCARLB*
\hide{
\begin{proof}
Consider $\PEA$ against an instance that reveals vertices as follows:
\begin{enumerate}
    \item isolated vertex $a$. $\PEA$ rejects $a$.
    \item $b$ adjacent to $a$. $\PEA$ adds $(a, b)$ to the matching, and accepts $b$.
    \item $c$ adjacent to $a$. $\PEA$ late-accepts $a$, and late-rejects $b$.
    \item $d$ adjacent to $c$. $\PEA$ adds $(c, d)$ to the matching, and accepts $d$.
    \item $e$ adjacent to $b$ and $c$. $\PEA$ late-accepts $b$ and $c$, and late-rejects $a$ and $d$.
    \item $f$ adjacent to $e$ and $a$. $\PEA$ adds $(e, f)$ to the matching, accepts $f$, and late-accepts $a$.
\end{enumerate}
This instance is illustrated in Figure~\ref{fig:vc-ar-lb} (see appendix).
Observe that the configuration of $\{c, d, e, f\}$ after $f$ is revealed is isomorphic to the configuration of $\{a, b, c, d\}$ after $d$ is revealed. Furthermore, all edges with an endpoint outside of $\{c, d, e, f\}$ are covered by a vertex outside of $\{c, d, e, f\}$.
Thus, this instance can be extended indefinitely by repeating steps $5$ and $6$.
Since these two steps together incur $5$ late operations by revealing two vertices, the family of instances described here incurs asymptotic amortized recourse $\frac{5}{2}$.
\end{proof}
}

Finally, we show that the upper bound on the competitive ratio provided by Theorem~\ref{thm:VCCompRatio_new} is tight not only for the $\PEA$ algorithm, but for a class of online algorithms defined below.
This is done by using an adversary that constructs an arbitrary number of triangles that all share a common vertex $v$.
This common vertex is revealed last, so all edges not incident to $v$ are added to the matching and $v$ is rejected.
This is illustrated in Figure~\ref{fig:vc-cr-lb}.

\begin{definition}
An algorithm for vertex cover is \emph{incremental matching-based} if it maintains a maximal matching throughout the process in an incremental manner, and its solution only contains vertices saturated by the matching.
\end{definition}

\VROPT*
\begin{proof}
Consider any incremental matching-based algorithm against an instance where $k$ disconnected edges are revealed via their endpoints.
By nature of the incremental matching-based algorithm, each of these $k$ edges will be added to the matching, and at least one vertex from each pair will be accepted.
Then, a final vertex is revealed, adjacent to all previously-revealed vertices.
The incremental matching-based algorithm will not accept this vertex, as it is not matched, but must accept all other vertices for a vertex cover of size $2k$.
However, the optimal solution consists of the last revealed vertex, and one vertex from each pair.
Thus, against this instance, any incremental matching-based algorithm will be $\frac{2k}{k+1} = 2 - \frac{2}{\opt}$ times worse than the optimal solution.
Therefore, no incremental matching-based algorithm can achieve a competitive ratio smaller than $2 - \frac{2}{\opt}$.
\end{proof}

\hide{
\section{Vertex Cover Full Proof}
\lmVCtime*
The $\PEA$ algorithm always returns a valid vertex cover in $O(n^3)$ time, where $n$ is the number of vertices in the graph.
\begin{proof}
We prove the validity of the $\pea$ algorithm by induction on the size of the input graph. 
Initially, when the graph is empty, the solution returned by $\pea$ is empty and feasible. 
Suppose that the vertex cover maintained by $\pea$ is valid, we show that after the arrival of a new vertex $v$, the new assignment by $\pea$ is also valid.

Let $\vm$ and $V$ denote the vertices saturated by the maximum matching maintained by $\pea$ and the set of vertices that have been released before the arrival of $v$. 
There are three classes of the new edges $(u,v)$ incident to the new vertex $v$: 1) $u \in \vm$ but not in $\me1$ or $\me2$, 2) $u \in \me1$ or in $\me2$, or 3) $u \in V\setminus \vm$.

By the definition of the $\pea$ algorithm, all class-$1$ edges $(u,v)$ will be covered by $u \in \vm$. By the definition of the \texttt{HalveBoth} procedure in the $\PEA$ algorithm, all new class-$2$ edges $(u, v)$ will be covered by $v$ (if there is a new matched edge introduced by the arrival of $v$) or by $u$. Class-$3$ edges only occur when there is a new matched edge $(p,v)$, where $p$ may or may not be $u$. In this case, the \texttt{HalveBoth} procedure makes sure that the edge is covered by~$v$ (or by~$u$ if the edge $(u,v)$ is added into the matching). Therefore, the $\pea$ algorithm returns a valid vertex cover.

Whenever a new vertex $v$ arrives, the $\pea$ algorithm checks all its adjacent vertices $u$, accepts them if $(u,v)$ is in class $1$, and then runs procedure \texttt{HalveBoth}. The \texttt{HalveBoth} procedure checks the coverage of edges where at least one of their endpoints is in $\me1$ or $\me2$ for a constant number of possible configurations of $\me1$ and $\me2$ being full or half. 
Hence, the total time complexity incurred by the arrival of a single vertex is $O(|E|)$ where $|E|$ is the number of edges in the final graph. Therefore, the total time complexity is $O(n^3)$.
\end{proof}

\lmVCsubset*
\begin{proof}
Assume there exists at least one vertex in $\opt$ but not in $\alg$. Since $\opt$ must select one vertex from each matched edges in $M$, $\opt \geq |M|+1$.

If there is a matched edge whose endpoints are both selected by $\opt$, since $\opt$ must select one vertex from each of the other matched edges, $\opt \geq 2 + (|M|-1) = |M| + 1$.

By Observation~\ref{obs:subset}, $\frac{\PEA}{\opt} \leq 2-\frac{2}{\opt}$ in both of the cases.
\end{proof}

\lmVCfull*
\begin{proof}
Assume that $\me1 = (p, v)$ (where $v$ was revealed later) is full and consider the ``blocking set'' of each of its endpoints, i.e. the set of rejected vertices in its neighborhood.
We partition the possible blocking set vertices into three classes: 1) vertices in $\vm$ but not in $\me2$, 2) vertices in $\me2$, and 3) vertices in $V\setminus\vm$. We prove the lemma by showing the following claim.

\runtitle{Claim:} If $\me1$ is full, then both the blocking set of $p$ and of $v$ contain at least one vertex in $V\setminus \vm$.

Let $x$ and $y$ be a class-$3$ vertex in the blocking set of $p$ and of $v$, respectively (it is possible that $x=y$).
If the claim is true, then the optimal solution has to choose at least two vertices in $\{p,v,x,y\}$, since the three edges $(p,v)$, $(p,x)$, and $(y,v)$ must be covered. This proves the lemma.

Now, we prove the correctness of the claim by showing that the blocking sets of $p$ and of $v$ cannot contain class-$1$ vertices, and cannot contain only a class-$2$ vertex.
Note that a blocking set cannot contain more than one class-$2$ vertex, as that would imply that both endpoints of $\me2$ are rejected.,

First, consider any class-$1$ blocking set vertex $x$, which is in $\vm$ but not in $\me2$.
Immediately prior to $(p, v)$ being added to the matching, all edges $(p,x)$ are covered by $x$ since $x$ must be accepted.
When $v$ is revealed, all of the considered vertices $x$ are (late-)accepted, and all edges $(v,x)$ are covered by $x$.
Furthermore, none of the considered vertices will be late-rejected after $(p, v)$ is added to the matching (by definition of $\PEA$).
Thus, all edges between $\{p,v\}$ and the considered vertices are covered. Hence, none of the considered vertices can be in the blocking set of either $p$ or $v$.

Next, assume that either blocking set contains only a vertex in $\me2$.
By the definition of the \texttt{HalveBoth} function in the $\PEA$ algorithm, if a valid configuration exists such that $\me1$ is half, then it would have been chosen by $\PEA$.
However, if either $p$ or $v$ is blocked only by a vertex in $\me2$, $\me1$ is halvable by making $\me2$ full.
This is a contradiction, so each blocking set must contain at least one vertex outside of $\vm$, and the claim is proven.
\end{proof}

\hide{
\lmVCshift*
\begin{proof}
First, note that when a shift occurs, it is because a new edge $\me1$ is added to the matching, and this edge is incident to the newly-revealed vertex $v$.

Let $C$ denote the vertices accepted by $\pea$ prior to the arrival of $v$.
The $\pea$ algorithm maintains a valid vertex cover throughout the process (by Lemma~\ref{lm:vc-time}), so by Observation~\ref{obs:shift-me1}, $C\cup \{v\}$ is a valid vertex cover.
Therefore, the $\pea$ algorithm accepts $v$ and does not change the assignment of $\me2$ if $\me2$ was half (by the procedure \texttt{HalveBoth}).
That is, the accept/reject status of endpoints of $\me2$ is not swapped, and $\me2$ cannot go from half to full.

Therefore, when an edge shifts from $\me1$ to $\me2$, it either goes from full to half or remains unchanged.
\end{proof}
}

\hide{
\VCcr*
\begin{proof}
We prove the theorem by considering three possible states for $\PEA$'s solution: 1) both $\me1$ and $\me2$ are half, 2) $\me1$ is half and $\me2$ is full, and 3) $\me1$ is full.
In each case, we show that either $\PEA \le 2|M| - 2 \le 2\opt - 2$ or $\opt \ge |M| + 1$, both of which imply that $\frac{\PEA}{\opt} \le 2 - \frac{2}{\opt}$.
In state $1$, there are at least two half edges in $M$, so $\PEA \le 2|M| - 2$ and thus $\frac{\PEA}{\opt} \le 2 - \frac{2}{\opt}$.
In state $3$, by Corollary~\ref{col:full-me1}, $\frac{\PEA}{\opt} \le 2 - \frac{2}{\opt}$.
Therefore, we consider state $2$ for the remainder of the proof.

Then, assume that $\me2$ is full and consider the ``blocking set'' of each of its endpoints, i.e. the set of rejected vertices in its neighborhood. 
We partition the possible blocking set vertices into three classes: 1) vertices in $\vm$ but not in $\me1$, 2) vertices in $\me1$, and 3) vertices in $V\setminus\vm$.
We use this partition to show that, based on the composition of each blocking set, either there are two half edges and $\PEA \le 2|M| - 2 \le 2\opt - 2$, or $\opt \ge |M| + 1$ (which implies that $\frac{\PEA}{\opt} \le 2 - \frac{2}{\opt}$ by Observation~\ref{obs:subset}).
This proves the theorem.

First, consider the vertices in $\vm$ that are not in $\me1$.
If any of these vertices is in the blocking set of either endpoint of $\me2$, then there must be at least one half edge in $M$ other than $\me1$, which implies that $\pea \leq 2|M|-2 \leq 2\opt-2$. 

Then, assume that either blocking set contains only a vertex in $\me1$.
By the definition of the \texttt{HalveBoth} function in the $\PEA$ algorithm, if a valid configuration exists such that both $\me1$ and $\me2$ are half (with all other edges in $M$ full as per the previous paragraph), then it would have been chosen by $\PEA$.
Therefore, any vertex cover must contain either both vertices in $\me1$ or both vertices in $\me2$.
Thus, $\opt \ge |M| + 1$, which implies $\frac{\PEA}{\opt} \le 2 - \frac{2}{\opt}$ by Observation~\ref{obs:subset}.

Finally, if the blocking set for each vertex in $\me2$ contains a vertex outside of $\vm$, $\opt$ must contain either a vertex outside of $\vm$ or both vertices in $\me2$.
This implies that $\opt \ge |M| + 1$ and thus $\frac{\PEA}{\opt} \le 2 - \frac{2}{\opt}$ by Observation~\ref{obs:subset}.

Therefore there is no instance such that $\frac{\PEA}{\opt} > 2-\frac{2}{\opt}$, and thus the $\PEA$ algorithm is $(2-\frac{2}{\opt})$-competitive.
\end{proof}
}

\hide{
\VCar*
\begin{proof}
We prove the theorem by using a potential function. 
For each vertex $v_i$, let $\lo_i$ denote the number of incurred late operations when it arrives. Assume there exists a potential function $\Phi$ such that the potential after the arrival of $v_i$ and all the incurred late operations is $\Phi_i \geq \Phi_0$, and $\lo_i + \Phi_i-\Phi_{i-1} \leq c$. Then, the total number of late operations is upper-bounded by $c\cdot n + \Phi_0 - \Phi_n \leq c\cdot n$, where $n$ is the number of vertices. Thus, the amortized recourse is bounded above by $c$.

At any given point in the input sequence where the matching constructed by $\PEA$ contains at least $2$ edges, the status of $\me1 = (p, v)$ and $\me2 = (u, w)$ is characterized by one of the following $6$ states, where $v$ is the vertex of $\me1$ revealed last:
\begin{enumerate}
    \item $u$, $w$, and $p$ are accepted
    \item $u$, $w$, and $v$ are accepted
    \item $u$, $w$, $p$, and $v$ are accepted
    \item $p$ and one of $u$ and $w$ are accepted
    \item $v$ and one of $u$ and $w$ are accepted
    \item $p$, $v$, and one of $u$ and $w$ are accepted
\end{enumerate}

Furthermore, define a half edge $(u, v)$ as being \emph{free} if it is halvable both by accepting $u$ and by accepting $v$.
Also, define a half edge as being \emph{expired} if it is neither $\me1$ nor $\me2$.
Finally, define $A$ as the set of vertices accepted by $\PEA$.
Then, define the potential function $\Phi$ as
\begin{equation*}
\Phi := |\{(u, v) | (u, v) \mbox{ expired half}\}| + \frac{1}{3}|A \cap (\me1 \cup \me2)| + \frac{2}{3} \cdot \mathds{1}[\mbox{free half } \me2]
\end{equation*}

Now, we show that, for any possible state transition triggered by a newly-revealed vertex, the number of incurred late operations $\lo$ added to the change in potential $\dphi$ is bounded above by $\frac{10}{3}$.
Note that, for any newly-revealed vertex $v$, $v$ may be adjacent to $k \ge 0$ rejected vertices that are matched by some expired edge.
This incurs $k$ late operations, but also decreases $\Phi$ by $k$, so this may be ignored when computing $\lo + \dphi$.

\begin{itemize}
    \item $1 \to 1$, without shift: $\lo + \dphi \le 0 + (0 + 0 + 0) = 0$
    \item $1 \to 1$, with shift: not possible, because an edge shifting from $\me1$ to $\me2$ won't go from half to full (Lemma~\ref{lm:shift-me2}) 
    \item $1 \to 2$, without shift: $\lo + \dphi \le 2 + (0 + 0 + 0) = 2$
    \item $1 \to 2$, with shift: not possible, because an edge shifting from $\me1$ to $\me2$ won't go from half to full (Lemma~\ref{lm:shift-me2}) 
    \item $1 \to 3$, without shift: $\lo + \dphi \le 1 + (0 + \frac{1}{3} + 0) = \frac{4}{3}$
    \item $1 \to 3$, with shift: not possible, because $\me1$ is always halvable after a shift (Observation~\ref{obs:shift-me1}) 
    \item $1 \to 4$, without shift: $\lo + \dphi \le 1 + (0 - \frac{1}{3} + \frac{2}{3}) = \frac{4}{3}$
    \item $1 \to 4$, with shift: $\lo + \dphi \le 1 + (0 - \frac{1}{3} + \frac{2}{3}) = \frac{4}{3}$
    \item $1 \to 5$, without shift: $\lo + \dphi \le 3 + (0 - \frac{1}{3} + \frac{2}{3}) = \frac{10}{3}$
    \item $1 \to 5$, with shift: $\lo + \dphi \le 0 + (0 - \frac{1}{3} + \frac{2}{3}) = \frac{1}{3}$
    \item $1 \to 6$, without shift: not possible, because $\PEA$ won't halve $\me2$ if $\me1$ is full 
    \item $1 \to 6$, with shift: not possible, because $\PEA$ prioritizes halving $\me1$ and this is always possible after a shift (Observation~\ref{obs:shift-me1}) 
    
    \item $2 \to 1$, without shift: $\lo + \dphi \le 2 + (0 + 0 + 0) = 2$
    \item $2 \to 1$, with shift: not possible, because an edge shifting from $\me1$ to $\me2$ won't go from half to full (Lemma~\ref{lm:shift-me2}) 
    \item $2 \to 2$, without shift: $\lo + \dphi \le 0 + (0 + 0 + 0) = 0$
    \item $2 \to 2$, with shift: not possible, because an edge shifting from $\me1$ to $\me2$ won't go from half to full (Lemma~\ref{lm:shift-me2}) 
    \item $2 \to 3$, without shift: $\lo + \dphi \le 1 + (0 + \frac{1}{3} + 0) = \frac{4}{3}$
    \item $2 \to 3$, with shift: not possible, because $\me1$ is always halvable after a shift (Observation~\ref{obs:shift-me1}) 
    \item $2 \to 4$, without shift: $\lo + \dphi \le 3 + (0 - \frac{1}{3} + \frac{2}{3}) = \frac{10}{3}$
    \item $2 \to 4$, with shift: $\lo + \dphi \le 1 + (0 - \frac{1}{3} + \frac{2}{3}) = \frac{4}{3}$
    \item $2 \to 5$, without shift: $\lo + \dphi \le 1 + (0 - \frac{1}{3} + \frac{2}{3}) = \frac{4}{3}$
    \item $2 \to 5$, with shift: $\lo + \dphi \le 0 + (0 - \frac{1}{3} + \frac{2}{3}) = \frac{1}{3}$
    \item $2 \to 6$, without shift: not possible, because $\PEA$ won't halve $\me2$ if $\me1$ is full 
    \item $2 \to 6$, with shift: not possible, because $\PEA$ prioritizes halving $\me1$ and this is always possible after a shift (Observation~\ref{obs:shift-me1}) 
    
    \item $3 \to 1$, without shift: $\lo + \dphi \le 1 + (0 - \frac{1}{3} + 0) = \frac{2}{3}$
    \item $3 \to 1$, with shift: $\lo + \dphi \le 1 + (0 - \frac{1}{3} + 0) = \frac{2}{3}$
    \item $3 \to 2$, without shift: $\lo + \dphi \le 1 + (0 - \frac{1}{3} + 0) = \frac{2}{3}$
    \item $3 \to 2$, with shift: $\lo + \dphi \le 0 + (0 - \frac{1}{3} + 0) = -\frac{1}{3}$
    \item $3 \to 3$, without shift: $\lo + \dphi \le 0 + (0 + 0 + 0) = 0$
    \item $3 \to 3$, with shift: not possible, because $\me1$ is always halvable after a shift (Observation~\ref{obs:shift-me1}) 
    \item $3 \to 4$, without shift: $\lo + \dphi \le 2 + (0 - \frac{2}{3} + \frac{2}{3}) = 2$
    \item $3 \to 4$, with shift: $\lo + \dphi \le 2 + (0 - \frac{2}{3} + \frac{2}{3}) = 2$
    \item $3 \to 5$, without shift: $\lo + \dphi \le 2 + (0 - \frac{2}{3} + \frac{2}{3}) = 2$
    \item $3 \to 5$, with shift: $\lo + \dphi \le 1 + (0 - \frac{2}{3} + \frac{2}{3}) = 1$
    \item $3 \to 6$, without shift: not possible, because $\PEA$ won't halve $\me2$ if $\me1$ is full 
    \item $3 \to 6$, with shift: not possible, because $\PEA$ prioritizes halving $\me1$ and this is always possible after a shift (Observation~\ref{obs:shift-me1}) 
    
    \item $4 \to 1$, without shift: $\lo + \dphi \le 1 + (0 + \frac{1}{3} + 0) = \frac{4}{3}$
    \item $4 \to 1$, with shift: not possible, because an edge shifting from $\me1$ to $\me2$ won't go from half to full (Lemma~\ref{lm:shift-me2}) 
    \item $4 \to 2$, without shift: $\lo + \dphi \le 3 + (0 + \frac{1}{3} + 0) = \frac{10}{3}$
    \item $4 \to 2$, with shift: not possible, because an edge shifting from $\me1$ to $\me2$ won't go from half to full (Lemma~\ref{lm:shift-me2}) 
    \item $4 \to 3$, without shift: $\lo + \dphi \le 2 + (0 + \frac{2}{3} + 0) = \frac{8}{3}$
    \item $4 \to 3$, with shift: not possible, because $\me1$ is always halvable after a shift (Observation~\ref{obs:shift-me1}) 
    \item $4 \to 4$, without shift: $\lo + \dphi \le 2 + (0 + 0 - \frac{2}{3}) = \frac{4}{3}$ 
    \item $4 \to 4$, with shift: $\lo + \dphi \le 1 + (1 + 0 + \frac{2}{3}) = \frac{8}{3}$
    \item $4 \to 5$, without shift: $\lo + \dphi \le 4 + (0 + 0 - \frac{2}{3}) = \frac{10}{3}$ 
    \item $4 \to 5$, with shift: $\lo + \dphi \le 0 + (1 + 0 + \frac{2}{3}) = \frac{5}{3}$
    \item $4 \to 6$, without shift: $\lo + \dphi \le 3 + (0 + \frac{1}{3} - \frac{2}{3}) = \frac{8}{3}$ 
    \item $4 \to 6$, with shift: not possible, because $\PEA$ prioritizes halving $\me1$ and this is always possible after a shift (Observation~\ref{obs:shift-me1}) 
    
    \item $5 \to 1$, without shift: $\lo + \dphi \le 3 + (0 + \frac{1}{3} + 0) = \frac{10}{3}$
    \item $5 \to 1$, with shift: not possible, because an edge shifting from $\me1$ to $\me2$ won't go from half to full (Lemma~\ref{lm:shift-me2}) 
    \item $5 \to 2$, without shift: $\lo + \dphi \le 1 + (0 + \frac{1}{3} + 0) = \frac{4}{3}$
    \item $5 \to 2$, with shift: not possible, because an edge shifting from $\me1$ to $\me2$ won't go from half to full (Lemma~\ref{lm:shift-me2}) 
    \item $5 \to 3$, without shift: $\lo + \dphi \le 2 + (0 + \frac{2}{3} + 0) = \frac{8}{3}$
    \item $5 \to 3$, with shift: not possible, because $\me1$ is always halvable after a shift (Observation~\ref{obs:shift-me1}) 
    \item $5 \to 4$, without shift: $\lo + \dphi \le 4 + (0 + 0 - \frac{2}{3}) = \frac{10}{3}$ 
    \item $5 \to 4$, with shift: $\lo + \dphi \le 1 + (1 + 0 + \frac{2}{3}) = \frac{8}{3}$
    \item $5 \to 5$, without shift: $\lo + \dphi \le 2 + (0 + 0 - \frac{2}{3}) = \frac{4}{3}$ 
    \item $5 \to 5$, with shift: $\lo + \dphi \le 0 + (1 + 0 + \frac{2}{3}) = \frac{5}{3}$
    \item $5 \to 6$, without shift: $\lo + \dphi \le 3 + (0 + \frac{1}{3} - \frac{2}{3}) = \frac{8}{3}$ 
    \item $5 \to 6$, with shift: not possible, because $\PEA$ prioritizes halving $\me1$ and this is always possible after a shift (Observation~\ref{obs:shift-me1}) 
    
    \item $6 \to 1$, without shift: $\lo + \dphi \le 2 + (0 + 0 + 0) = 2$
    \item $6 \to 1$, with shift: $\lo + \dphi \le 1 + (1 + 0 + 0) = 2$
    \item $6 \to 2$, without shift: $\lo + \dphi \le 2 + (0 + 0 + 0) = 2$
    \item $6 \to 2$, with shift: $\lo + \dphi \le 0 + (1 + 0 + 0) = 1$
    \item $6 \to 3$, without shift: $\lo + \dphi \le 1 + (0 + \frac{1}{3} + 0) = \frac{4}{3}$
    \item $6 \to 3$, with shift: not possible, because $\me1$ is always halvable after a shift (Observation~\ref{obs:shift-me1}) 
    \item $6 \to 4$, without shift: $\lo + \dphi \le 3 + (0 - \frac{1}{3} - \frac{2}{3}) = 2$ 
    \item $6 \to 4$, with shift: $\lo + \dphi \le 2 + (1 - \frac{1}{3} + \frac{2}{3}) = \frac{10}{3}$
    \item $6 \to 5$, without shift: $\lo + \dphi \le 3 + (0 - \frac{1}{3} - \frac{2}{3}) = 2$ 
    \item $6 \to 5$, with shift: $\lo + \dphi \le 1 + (1 - \frac{1}{3} + \frac{2}{3}) = \frac{7}{3}$
    \item $6 \to 6$, without shift: $\lo + \dphi \le 2 + (0 + 0 - \frac{2}{3}) = \frac{4}{3}$ 
    \item $6 \to 6$, with shift: not possible, because $\PEA$ prioritizes halving $\me1$ and this is always possible after a shift (Observation~\ref{obs:shift-me1}) 
\end{itemize}

Thus, for any possible state transition, $\lo + \dphi \le \frac{10}{3}$.
Furthermore, $\Phi_0 = 0$ and $\Phi_i \ge 0$.
Therefore, the amortized recourse incurred by $\PEA$ is bounded above by $\frac{10}{3}$.
\end{proof}
}
}
\end{document}